\newcommand{\FcOne}{\Fc([-1, 1], \R)}
\newcommand{\FcN}{\Fc(\N, \R)}
\DeclareMathOperator*{\epsMV}{\epsilon_{\textsf{MV}}}
\newcommand{\amv}{\mathrm{AMV}}
\newcommand{\bhat}{\hat{b}}
\newcommand{\Tbar}{\bar{T}}
\title{Sublinear Time Spectral Density Estimation}
\author{
	Vladimir Braverman \\ Johns Hopkins University\\ \texttt{vova@cs.jhu.edu}
		\and
	Aditya Krishnan \\ Johns Hopkins University \\ \texttt{akrish23@jhu.edu}
		\and 
	Christopher Musco\\ New York University\\ \texttt{cmusco@nyu.edu}
}
\date{\today}
\begin{document}
\pagenumbering{arabic}

\maketitle 

\begin{abstract}

	We present a new sublinear time algorithm for approximating the spectral density (eigenvalue distribution) of an $n\times n$ normalized graph adjacency or Laplacian matrix. The algorithm recovers the spectrum up to $\epsilon$ accuracy in the Wasserstein-1 distance in  $O(n\cdot \poly(1/\epsilon))$ time given sample access to the graph. This result compliments recent work, which obtains a solution with runtime independent of $n$, but exponential in $1/\epsilon$ \cite{Cohen-SteinerKongSohler:2018}. We conjecture that the trade-off between dimension dependence and accuracy is inherent.

	Our method is simple and works well experimentally. It is based on a Chebyshev polynomial moment matching method that employees randomized estimators for the matrix trace. We prove that, for any Hermitian $A$, this moment matching method returns an $\epsilon$ approximation to the spectral density using just $O({1}/{\epsilon})$ matrix-vector products with $A$. By leveraging stability properties of the Chebyshev polynomial three-term recurrence, we then prove that the method is amenable to the use of coarse approximate matrix-vector products. Our sublinear time algorithm follows from combining this result with a novel sampling algorithm for approximating matrix-vector products with a normalized graph adjacency matrix.

	Of independent interest, we show a similar result for the widely used \emph{kernel polynomial method} (KPM), proving that this practical algorithm nearly matches the theoretical guarantees of our moment matching method. Our analysis uses tools from Jackson's seminal work on approximation with positive polynomial kernels \cite{Jackson:1912}.

\end{abstract}

\section{Introduction}
A ubiquitous task in computational science, engineering, and data science is to extract information about the eigenvalue spectrum of a matrix $A\in \R^{n\times n}$. A full eigendecomposition takes at least $O(n^\omega)$ time\footnote{Here $\omega< 2.373$ is the fast matrix multiplication exponent.}, which is prohibitively expensive for large matrices \cite{Parlett:1998,BanksVargasKulkarni:2019}. So, we are typically interested in extracting \emph{partial information} about the spectrum. This can be done using iterative methods like the power or Lanczos methods, which access $A$ via a small number of matrix-vector multiplications. Each multiplication takes at most $O(n^2)$ time to compute, and can be accelerated when $A$ is sparse or structured, leading to fast algorithms. 

However, the partial spectral information computed by most iterative methods is limited. Algorithms typically only obtain accurate approximations to  the outlying, or \emph{largest magnitude} eigenvalues of $A$, missing information about the \emph{interior} of $A$'s spectrum that may be critical in applications.
For example, in network science, clusters of interior eigenvalues can indicate graph structures like repeated motifs \cite{DongBensonBindel:2019}. In deep learning, information on how the spectrum of a weight matrix differs from its random initialization can give hints about model convergence and generalization \cite{PenningtonSchoenholzGanguli:2018,MahoneyMartin:2019}, and Hessian eigenvalues are useful in optimization \cite{GhorbaniKrishnanXiao:2019}.
Coarse information about interior eigenvalues is also used to initialize parallel GPU based methods for full eigendecomposition \cite{AurentzKalantzisSaad:2017,LiXiErlandson:2019}. 

To address these needs and many other applications, there has been substantial interest in methods for estimating the full \emph{spectral density} of a matrix $A$ \cite{weisse2006kernel}. Concretely, assume that $A$ is Hermitian with real eigenvalues $\lambda_1, \ldots, \lambda_n$. We view its spectrum as a probability density $s$:
\begin{align}
	\label{eq:spect_density}
	&\text{Spectral density:} & s(x) &= \frac{1}{n}\sum_{i=1}^n \delta(x - \lambda_i).
\end{align}
Here $\delta$ is the Dirac delta function. The goal is to find a probability density $q$ that approximates $s$ in some natural metric, like the Wasserstein distance. The density $q$ can either be continuous (represented in some closed form) or discrete (represented as a list of approximate eigenvalues $\tilde{\lambda}_1,\ldots, \tilde{\lambda}_n)$. See Figure \ref{fig:sde_illustration} for an illustration. Both sorts of approximation are useful in applications.

\begin{figure}[h]
	\centering
	\includegraphics[width=.32\linewidth]{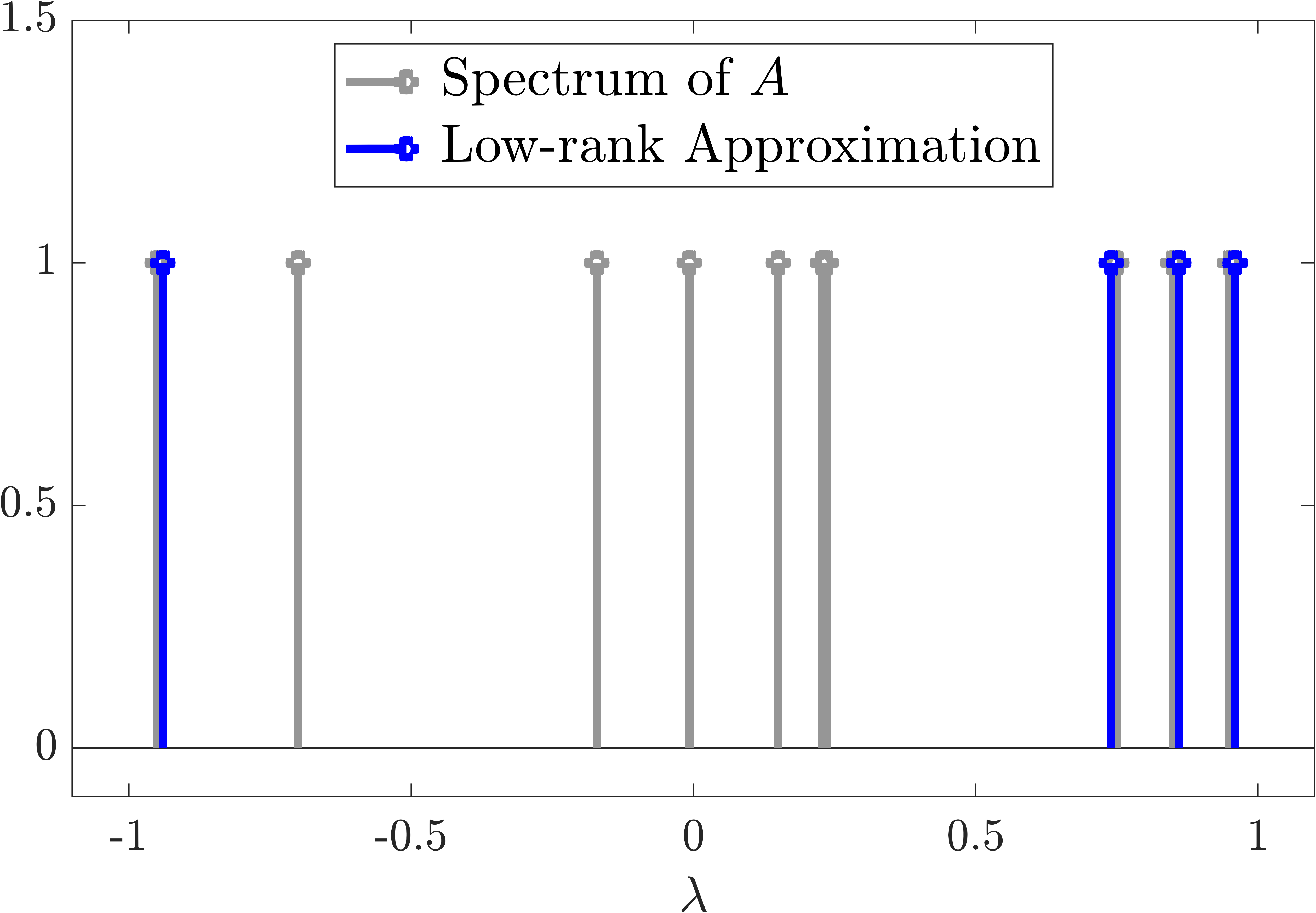}  \hfill
	\includegraphics[width=.32\linewidth]{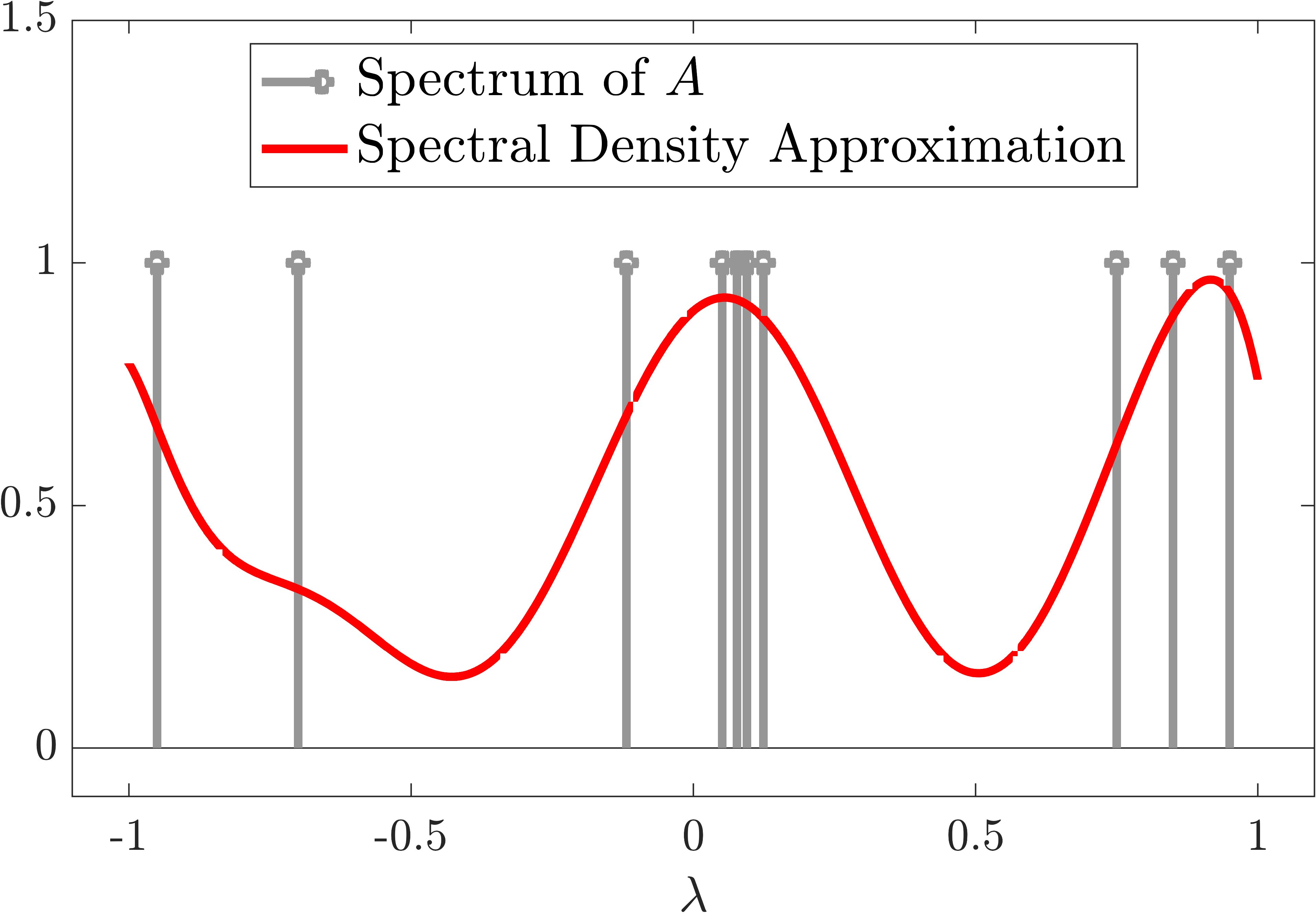} \hfill \includegraphics[width=.32\linewidth]{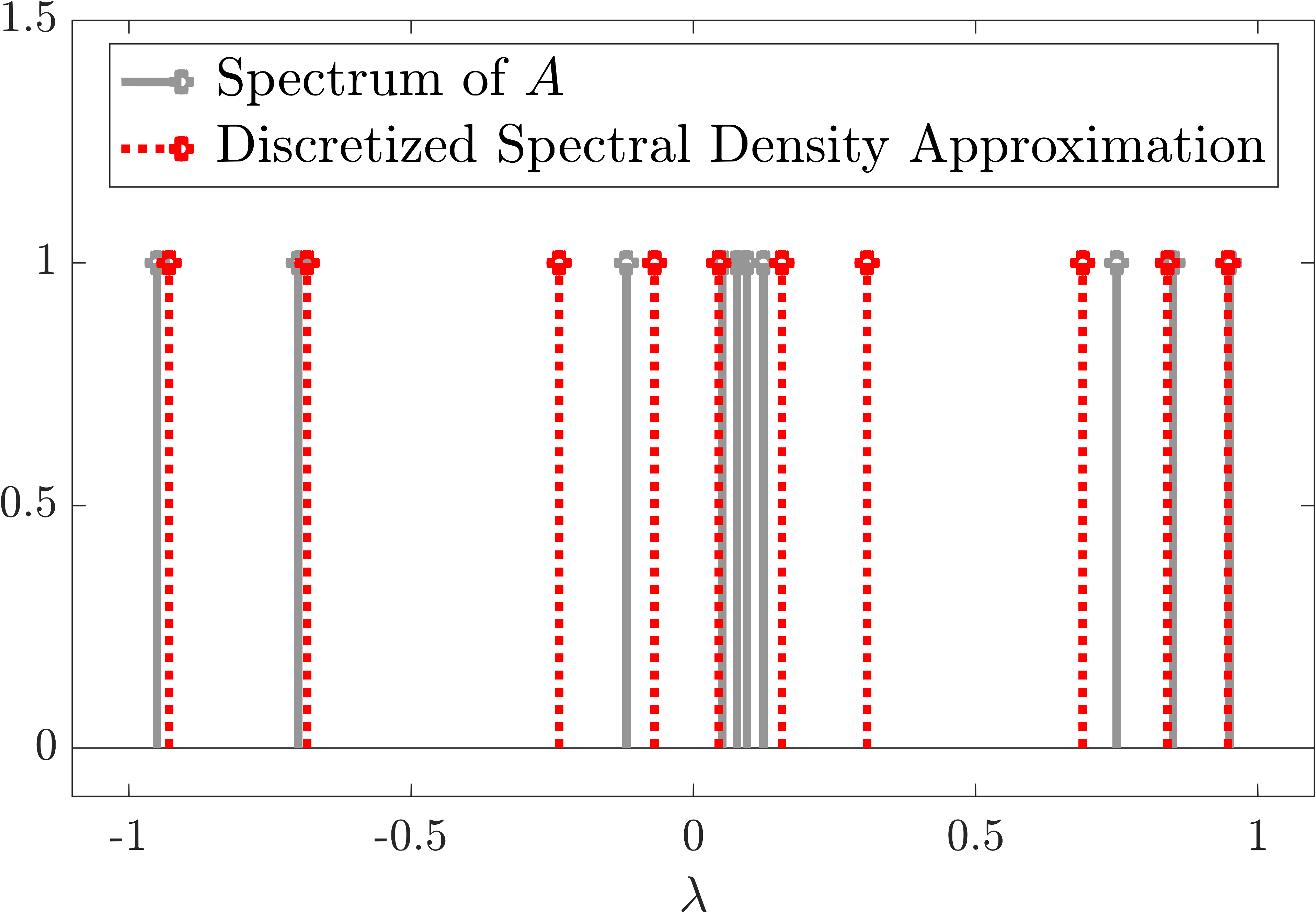} 
	\vspace{-.5em}
	\caption{Different approximations for the spectrum of a matrix $A$ with eigenvalues in $[-1,1]$. A typical approximation computed using an iterative eigenvalue algorithm mostly preserves information about the largest magnitude eigenvalues. In contrast, the spectral density estimates in the two right figures coarsely approximate the entire distribution of $A$'s eigenvalues, the first with a low-degree polynomial, and the second with a discrete distribution.
	}
	\label{fig:sde_illustration}
\end{figure} 

Methods for spectral density estimation that run in $o(n^\omega)$ time were first introduced for applications in condensed matter physics and quantum chemistry \cite{Skilling:1989,SilverRoder:1994,Wang:1994}. Many are based on the combination of two important tools: 1) moment matching, and 2) stochastic trace estimation. Specifically, if we had access to moments of the distribution $s$, i.e. $\frac{1}{n}\sum_{i=1}^n \lambda_i$, $\frac{1}{n}\sum_{i=1}^n \lambda_i^2$, $\frac{1}{n}\sum_{i=1}^n \lambda_i^3$, etc., then we could find a good approximation $q$ by finding a distribution that agrees with $s$ on these moments. Moreover, these \emph{spectral moments} can be computed via the matrix trace: note that $\tr(A) = \sum_{i=1}^n\lambda_i$, $\tr(A^2) = \sum_{i=1}^n\lambda_i^2$, $\tr(A^3) = \sum_{i=1}^n\lambda_i^3$, etc. While we cannot hope to compute $\tr(A^k)$ exactly in $o(n^\omega)$ time, thanks to {stochastic trace estimators} like Hutchinson's method, this trace can be approximated much more quickly \cite{Hutchinson:1990,AvronToledo:2011}. Such estimators are based on the observation that, for any matrix $B\in \R^{n\times n}$,  $\tr(B)$ can be well approximated by $\tr(G^TBG)$ where $G\in \R^{n\times m}$ contains random sub-Gaussian entries and $m \ll n$. 
For any $k$ degree polynomial $g$, $G^T g(A)G$ can be computed with just $O(km)$ matrix-vector multiplications, so we can quickly approximate any low-degree moment of $A$'s spectral density.

While this high-level approach and related techniques have been applied successfully to estimating the spectra of a wide variety of matrices \cite{weisse2006kernel,LinSaadYang:2016}, theoretical guarantees have only appeared relatively recently. Perhaps surprisingly, it can be shown that many common methods provably run in \emph{linear time} for any Hermitian matrix $A$. For instance, in work concurrent to ours, Chen, Trogdan, and Ubaru \cite{ChenTrogdonUbaru:2021} show that for any $n\times n$ Hermitian matrix $A$ with spectral density $s$, the popular Stochastic Lanczos Quadrature (SLQ) method provably computes an approximate spectral density $q$ satisfying: 
\begin{align}
	\label{eq:gaurantee}
	W_1(s, q) \leq \epsilon
\end{align}
using just $\poly(1/\epsilon)$ matrix-vector multiplications with $A$. Above $W_1$ denotes the Wasserstein-1 distance, aka the ``earth-movers distance''.\footnote{We assume $\|A\|_2 \leq 1$ for simplicity of stating errror guarantees, noting that Wasserstein distance is not scale invariant. This assumption is without loss of generality since $\|A\|_2$ can always be scaled after computing the top eigenvector up to constant fact accuracy, which takes just $O(\log n)$ matrix-vector multiplications \cite{MuscoMusco:2015}.} 
We defer a formal definition of $W_1$ to Section \ref{sec:Preliminaries}. The measure is convenient because, unlike many other measures of statistical distance, it allows a discrete distribution like the spectral density to be meaningfully compared to a possibly continuous approximation.  For discrete approximations, the Wasserstein distance is related to a simple $\ell_1$ metric. If we let $\Lambda = [\lambda_1, \ldots, \lambda_n]$ be a vector of $A$'s eigenvalues and $\tilde{\Lambda} = [\tilde{\lambda}_1, \ldots, \tilde{\lambda}_n]$ be a vector of approximate eigenvalues, then $\|\Lambda - \tilde{\Lambda}\|_1 \leq n\epsilon$ if and only if $W_1(s, q) \leq \epsilon$ for the discrete spectral density $q$ with eigenvalues in $\tilde{\Lambda}$. 

As a step towards our main sublinear time result, in this work we show that similar bounds to \cite{ChenTrogdonUbaru:2021} can also be proven for the popular kernel polynomial method (KPM) \cite{weisse2006kernel} and for a natural moment matching algorithm based on Chebyshev polynomials.

\subsection{Our contributions}
\label{sec:contrib}
With linear time spectral density estimation algorithms in hand for all Hermitian matrices, a natural question is if we can go faster for specific classes of matrices. In particular, there has been growing interest in SDE algorithms for graph structured matrices like adjacency matrices and Laplacians \cite{DongBensonBindel:2019}. A remarkable recent result by Cohen et al. \cite{Cohen-SteinerKongSohler:2018} shows that, for normalized graph adajeceny matrices, it is possible to achieve guarantee \eqref{eq:gaurantee} in $2^{O(1/\epsilon)}$ time, given appropriate query access to the target graph. Importantly, this runtime \emph{does not depend on $n$}. However, given the exponential dependence on $\epsilon$, the algorithm is impractical even for coarse spectral approximations.

Our main contribution is a method that obtains a \emph{polynomial} dependence on $\epsilon$, at the cost of a linear dependence on the matrix dimension $n$. Since $A$ can have $n^2$ non-zero entries, the runtime is still \emph{sublinear} in the problem size, but with a much more acceptable dependence on accuracy. 
\begin{restatable}[Sublinear time spectral density estimation for graphs.]{theorem}{graphSDE}\label{thm:graphSDE}
Let $G = (V, E)$ be an unweighted, undirected $n$-vertex graph and let $A \in \R^{n \times n}$ be the normalized adjacency of $G$ with spectral density $s$. Let $\epsilon, \delta \in (0, 1)$ be fixed values. Assume that we can 1) uniformly sample a random vertex in constant time, 2) uniformly sample a random neighbor of any vertex $i\in V$ in constant time, and 3) for a vertex $i$ with degree $d_i$, read off all neighbors in $O(d_i)$ time.\footnote{A standard adjacency list representation of the graph would support these operations. As discussed in Section \ref{sec:graphSDE}, assumption (3) can be eliminated at the cost of an extra $\log n$ in the runtime as long as we know vertex degrees.} Then there is a randomized algorithm with expected running time $O(n \poly({\log(1/\delta)}/{\epsilon}))$ which outputs a density function $q : [-1, 1] \rightarrow \R^{+}$ such that $W_1(q, s) \leq \epsilon$ with probability at least $1 - \delta$.
\end{restatable}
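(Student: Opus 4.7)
The plan is to implement an $\epsMV$-approximate matrix-vector multiplication oracle for the normalized adjacency matrix $A = D^{-1/2}A_G D^{-1/2}$ from the graph sampling primitives, then invoke Theorem~\ref{thm:approxMatMultSDE} with $\epsMV = \Theta(\epsilon^{4})$. Observe that $(Ay)_i = \frac{1}{\sqrt{d_i}}\sum_{j\sim i} y_j/\sqrt{d_j}$, which is a scaled average over the neighbors of $i$. For each $i\in [n]$, I would draw $k$ independent uniform neighbors $j_1,\ldots,j_k$ of $i$ using assumption 2 and output $\hat z_i = \frac{\sqrt{d_i}}{k}\sum_{\ell=1}^k y_{j_\ell}/\sqrt{d_{j_\ell}}$; by the footnote's degree-access provision, each $\sqrt{d_{j_\ell}}$ is obtainable in $O(1)$ or $O(\log n)$ time. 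A short calculation shows that $\hat z_i$ is unbiased for $(Ay)_i$.

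A direct second-moment bound gives $\mathrm{Var}[\hat z_i] \le \frac{1}{k}\sum_{j\sim i} y_j^2/d_j$, and summing over $i$ the double sum collapses (each $y_j^2/d_j$ appears in exactly $d_j$ summands, one per edge incident to $j$), yielding
\[
    \mathbb{E}\|\hat z - Ay\|_2^2 \;\le\; \frac{1}{k}\sum_{i=1}^n \sum_{j\sim i} \frac{y_j^2}{d_j} \;=\; \frac{\|y\|_2^2}{k}.
\]
Since $\|A\|_2 = 1$ on any graph with at least one edge (isolated vertices contribute zero rows and are handled trivially), choosing $k = \Theta(1/\epsMV^2)$ yields an estimator that is $\epsMV$-accurate with constant probability by Markov's inequality.

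To satisfy Definition~\ref{def:epsAMV} with failure probability $\delta/(2T)$, where $T = C\ell/\epsilon$ is the number of matvec calls made by Algorithm~\ref{alg:SDEpolynomial_full}, I would amplify by running $r = O(\log(T/\delta))$ independent copies of the basic estimator and taking the geometric median of the $r$ output vectors; a standard robust-aggregation argument (e.g.\ Minsker-type) then drives the per-call failure probability below $\delta/(2T)$ at multiplicative cost $r$. This costs $O(nr/\epsMV^2)$ time per matvec. A union bound over the $T$ calls lets me condition on all of them succeeding; on that event, Theorem~\ref{thm:approxMatMultSDE} returns $q$ with $W_1(q,s)\leq \epsilon$ with probability $1-\delta/2$, giving total success probability at least $1-\delta$.

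Finally, with $\epsMV = \Theta(\epsilon^4)$, $T = O(\ell/\epsilon)$, and $\ell = \max(1,\, O(\epsilon^{-4}\log^2(1/(\epsilon\delta))/n))$, the total matvec cost is $T\cdot n\log(T/\delta)/\epsMV^2 = O(n\cdot \poly(\log(1/\delta)/\epsilon))$, and the $O(n\ell/\epsilon)$ post-processing cost of Theorem~\ref{thm:approxMatMultSDE} fits within the same bound in both the $\ell = 1$ and $\ell > 1$ regimes. The main obstacle is the tail-concentration step for the vector-valued estimator $\hat z$: a naive application of Markov's inequality to $\mathbb{E}\|\hat z - Ay\|_2^2$ only yields $1/\delta$ dependence in the runtime, so the geometric-median (or median-of-means) boost is essential to achieve the polylogarithmic dependence on $1/\delta$ claimed by the theorem. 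Secondary care is also needed in matching the $\epsMV\|A\|_2\|y\|_2$ form of Definition~\ref{def:epsAMV} (which is immediate once $\|A\|_2 = 1$ is verified) and in accessing vertex degrees within the stated sampling model.
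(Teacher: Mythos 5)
Your proposal is correct, and its high-level plan (implement an $\epsMV$-approximate matrix-vector oracle from the sampling primitives with $\epsMV = \Theta(\epsilon^4)$, boost its per-call success probability, union bound over the $T$ calls, and invoke Theorem \ref{thm:approxMatMultSDE}) is exactly the paper's. The substance differs in how the oracle is built. The paper (Algorithm \ref{alg:amvGraphs}, Proposition \ref{prop:laplacianAMM}) uses a column-sampling estimator in the spirit of randomized matrix multiplication: sample a uniform vertex, then a uniform neighbor $i$, accept with probability $1/d_i$, and on acceptance add the rescaled column $\frac{1}{p_i}y_i\bar{A}^i$ with $p_i = \frac{1}{nd_i}\sum_{j\in\Nc(i)}1/d_j$; this needs assumption 3 to write out accepted columns, and its runtime is only bounded in expectation via the identity $\Exp{\nnz(w)} = 1$. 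Its variance bound, $\Exp{\|\bar{A}y-z\|_2^2} = \frac{n}{t}\|y\|_2^2 - \frac{1}{t}\|\bar{A}y\|_2^2$, matches yours once you identify $t$ with your total sample count $nk$. Your per-coordinate estimator $\hat z_i = \frac{\sqrt{d_i}}{k}\sum_{\ell} y_{j_\ell}/\sqrt{d_{j_\ell}}$ is arguably cleaner: the unbiasedness and the collapse $\sum_i\sum_{j\sim i} y_j^2/d_j = \|y\|_2^2$ are immediate, the cost $O(nk)$ per call is deterministic rather than expected, and assumption 3 is not needed at all — you only need $O(1)$ degree queries for sampled neighbors, which is no stronger than what the paper itself assumes (Algorithm \ref{alg:amvGraphs} takes the degree vector as input, and computing $p_i$ already requires neighbors' degrees). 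The probability amplification also differs mildly: the paper runs $r = O(\log(1/\delta))$ copies and returns any output within $\frac{\epsMV}{2}\|y\|_2$ of a majority of the others, while you use a Minsker-style geometric median; both are standard, the paper's selection rule being slightly more elementary to implement, yours requiring a constant-factor slack between the base accuracy and the target $\epsMV$ (which you should budget for when setting $k$, but this is a constants issue only). Your explicit $\delta/(2T)$ per-call allocation and the conditioning step are details the paper leaves implicit, and your runtime accounting matches the paper's $O(n\,\poly(\log(1/\delta)/\epsilon))$ claim.
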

 Note that the normalized graph Laplacian $L = I - A$ has the same eigenvalues as $A$ up to a shift and reflection, so Theorem \ref{thm:graphSDE} also yields a sublinear time result for normalized Laplacians, whose spectral densities are of interest in network science \cite{DongBensonBindel:2019}.


\subsubsection{Robust spectral density estimation}
Theorem \ref{thm:graphSDE} is proven in Section \ref{sec:graphSDE}. 
A key component of the result is a sublinear time routine for computing coarse approximate matrix-vector products with any normalized graph adjacency matrix. To make use of such a routine, we need to develop an SDE algorithm that is \emph{robust} to the use of an approximate matrix-vector oracle. This is one of the main contributions of our work, as previous methods assume exact matrix-vector products. Formally, we assume access to the oracle:

\begin{definition}\label{def:epsAMV}
	An \emph{$\epsMV$-approximate matrix-vector multiplication oracle} for $A \in \R^{n \times n}$ and error parameter $\epsMV \in (0, 1)$ is an algorithm that, given any vector $y \in \R^n$, outputs a vector $z$ such that $\|z - Ay\|_2 \leq \epsMV \|A\|_2\|y\|_2$. We will denote a call to such an oracle for by $\amv(A,y,\epsMV)$.
\end{definition}

In Section \ref{sec:approxMatMultSDE} we prove the following for any Hermitian matrix $A$ (e.g., real symmetric) under the assumption that $\|A\|_2 \leq 1$, i.e., that $A$'s eigenvalues lie in $[-1,1]$:
\begin{restatable}[Robust spectral density estimation]{theorem}{approxMatMultSDE}\label{thm:approxMatMultSDE}
	Let $A \in \R^{n \times n}$ be a Hermitian matrix with spectral density $s$ and $\|A\|_2 \leq 1$. Let $C,C',C''$ be fixed positive constants. For any $\epsilon, \delta \in (0,1)$ and $\epsMV = C''\epsilon^{-3}\ln(1/\epsilon)$, there is an algorithm (Algorithm \ref{alg:MomentMatching_full}, with Algorithm \ref{alg:hutch_moments_approx} used as a subroutine to approximate moments) which makes
	 $T = C\ell/\epsilon$ calls to an $\epsMV$-approximate matrix-vector oracle for $A$, where $\ell = \max\left(1, \ \frac{C'}{n}\epsilon^{-2}\log^2(\frac{1}{\epsilon\delta})\log^2(\frac{1}{\epsilon})\right)$, and in $\poly(1/\epsilon)$ additional runtime, outputs a probability density function $q : [-1, 1] \rightarrow \R^{\geq 0}$ such that $W_1(s, q) \leq \epsilon$ with probability $1 - \delta$.
\end{restatable}

The requirement for the approximate matrix-vector oracle in Theorem \ref{thm:approxMatMultSDE} is relatively weak: we only need accuracy $\epsilon_{\textsf{MV}}$ that is polynomial in the final accuracy $\epsilon$. Importantly, there is no dependence on $1/n$, which allows for the theorem to be combined with coarse $\amv$ methods, including the one developed in Section \ref{sec:graphSDE} for normalized adjacency matrices. Based on random sampling, that method returns an $\epsilon$-approximate matrix-vector multiply in $O(n/\epsilon^2)$ time. This immediately yields our result for graphs given by Theorem \ref{thm:graphSDE}. We hope that Theorem \ref{thm:approxMatMultSDE} will find broader applications, since spectral density estimation is often applied to matrices where we only have inexact access to $A$. For example, $A$ might be a Hessian matrix that we can multiply by approximately using stochastic approximation \cite{Pearlmutter:1994,YaoGholamiKeutzer:2020}, or the inverse of some other matrix, which we can multiply by approximately using an iterative solver. 

We note that the result in Theorem \ref{thm:approxMatMultSDE} actually \emph{improves} as $n$ increases. Intuitively, when $A$ is larger, each matrix-vector product returns more information about the spectral density $s$, so we can estimate it more easily. We also remark that the density function $q$ returned by Algorithm \ref{alg:MomentMatching_full} is in the form of an $O(1/\epsilon^3)$ dimensional vector, with the $i$-th entry corresponding to probability mass placed on the $i$-th point of an evenly spaced grid on $[-1, 1]$. Alternatively, a simple rounding scheme that runs in $O(n + \poly({1}/{\epsilon}))$ time can extract from $q$ a vector of approximate eigenvalues $\tilde{\Lambda} = [\tilde{\lambda}_1, \ldots, \tilde{\lambda}_n]$ satisfying $\|\Lambda - \tilde{\Lambda}\|_1 \leq n\epsilon$, which, as discussed, is $\epsilon$ close to the spectral density $s$ in Wasserstein distance (see Theorem \ref{thm:approxEigs}).

Our approach for density estimation is based on a moment matching method that approximates \emph{Chebyshev polynomial} moments instead of the standard moments. I.e. we approximate $\tr(T_0(A))$, $\ldots$,  $\tr(T_N(A))$ where $T_0, \ldots, T_N$ are the Chebyshev polynomials of the first kind and then return a distribution whose Chebyshev moments closely match our approximations. By leveraging Jackson's theorem on polynomial approximation of Lipschitz functions  \cite{Jackson:1930}, we show how to bound the Wasserstein distance between two distributions in terms of the magnitude of the differences between their first $N = O(1/\epsilon)$ Chebyshev moments (see Lemma \ref{lemma:wass_err_moments}). Unlike results for standard moments \cite{KongValiant:2017}, the bound shows a near-linear relationship between Wasserstein distance and difference in the Chebyshev moments. Ultimately this allows us to obtain a polynomial dependence on $\epsilon$ in the number of approximate matrix-vector multiplications needed in Theorem \ref{thm:approxMatMultSDE}.

Along the way to proving that theorem, in Section \ref{sec:exactMatMultSDE} we first establish the follow result that is compatible with exact matrix-vector multiplications:

\begin{restatable}[Linear time spectral density estimation]{theorem}{exactMatMultSDE}\label{thm:exactMatMultSDE}
Let $A \in \R^{n \times n}$ be a Hermitian matrix with spectral density $s$ and $\|A\|_2 \leq 1$. Let $C,C'$ be fixed positive constants. For any $\epsilon, \delta \in (0,1)$, there is an algorithm (Algorithm \ref{alg:MomentMatching_full}, with Algorithm \ref{alg:hutch_moments} used as a subroutine to approximate moments) which computes $T = C\ell/\epsilon$ matrix-vector multiplications with $A$ where $\ell = \max\left(1, \ \frac{C'}{n}\epsilon^{-2}\log^2(\frac{1}{\epsilon\delta})\log^2(\frac{1}{\epsilon})\right)$, and in $\poly(1/\epsilon)$ additional runtime, outputs a probability density function $q : [-1, 1] \rightarrow \R^{\geq 0}$ such that $W_1(s, q) \leq \epsilon$ with probability $1 - \delta$.
\end{restatable}
As in Theorem \ref{thm:approxMatMultSDE}, the theorem improves as $n$ increases, requiring just $T = O(1/\epsilon)$ matrix vector multiplies when $n = \Omega(1/\epsilon^2)$. The runtime of Theorem \ref{thm:exactMatMultSDE} is dominated by the cost of the matrix-vector multiplications, which take $O(T\cdot n^2)$ time to compute for a dense matrix, and $O(T\cdot \nnz(A))$ time for a sparse matrix with $\nnz(A)$ non-zero entries, so the algorithm runs in linear time when $\epsilon,\delta$ are considered constant. 

Given Theorem \ref{thm:exactMatMultSDE}, we prove Theorem \ref{thm:approxMatMultSDE} by showing that the error introduced by approximate matrix-vector multiplications does not hinder our ability to estimate the Chebyshev polynomial moments. We do so by drawing on stability results for the three-term recurrence relation defining these polynomials \cite{Clenshaw:1955,MuscoMuscoSidford:2018}.

\begin{remark} The number of matrix-vector multiplies $N\ell = N \cdot \max(1, \frac{C'}{n}\epsilon^{-2}\log^2(\frac{1}{\epsilon \delta})\log^2(\frac{1}{\epsilon}))$ in Theorems \ref{thm:approxMatMultSDE} and  \ref{thm:exactMatMultSDE} can be improved by up to a $\log^2(1/\epsilon)$ factor in the regime when $n$ is small, specifically $n \leq C'\epsilon^{-2}\log^2(1/(\epsilon \delta))$. This is discussed further in Section \ref{sec:SDE_approx}.
\end{remark}

\subsubsection{Spectral density estimation via the kernel polynomial method}
In addition to the Chebyshev moment matching method used to give Theorem \ref{thm:exactMatMultSDE} and Theorem \ref{thm:approxMatMultSDE}, we prove that a version of the popular kernel polynomial method (KPM) can be used to obtain a spectral density estimate with similar running times, albeit with slightly worse dependence on the accuracy parameter $\epsilon$.\footnote{We believe that the extra $O(\epsilon^{-2})$ factor in the number of matrix-vector multiplications (or calls to an approximate matrix-vector oracle in the robust setting) may be an artifact of our analysis and can be further improved to match the approximate Chebyshev moment matching bounds.}  Along with the Stochastic Lanczos Quadrature method, the kernel polynomial method is one of two dominant spectrum estimation algorithms used in practice.

Given sufficiently accurate approximations to the Chebyshev polynomial moments, the KPM method outputs a density function $q$ in the form of a $O(1/\epsilon)$ degree polynomial multiplied by a simple closed form function. This is described in Algorithm \ref{alg:SDEpolynomial_full} in Section \ref{sec:SDE} and should be thought of as analagous to Algorithm \ref{alg:MomentMatching_full}. Specifically, we can obtain Theorem \ref{thm:exactMatMultSDE} and Theorem \ref{thm:approxMatMultSDE} with $\ell = \max(1, \ \frac{C'}{n}\epsilon^{-4}\log^2(\frac{1}{\epsilon\delta}))$ and $\epsMV = C''\epsilon^{-4}$ (in the robust setting), by using Algorithm \ref{alg:SDEpolynomial_full} instead of Algorithm \ref{alg:MomentMatching_full}.
Our proof in the KPM case is again based on Jackson's work on polynomial approximations for Lipschitz functions: we take advantage of the fact that Jackson constructs approximations that are both \emph{linear} and \emph{preserve positivity}  \cite{Jackson:1912}.

\subsection{Related work} 
As mentioned, most closely related to our sublinear time result on graphs is the result of Cohen et al. \cite{Cohen-SteinerKongSohler:2018}. They prove a result which matches the guarantee of Theorem \ref{thm:graphSDE}, but with runtime of $2^{O(1/\epsilon)}$ -- i.e., with \emph{no dependence} on $n$. In comparison, our result depends linearly on $n$, but only polynomially on $1/\epsilon$. An interesting open question is if a $\poly(1/\epsilon)$ time algorithm is possible but we conjecture that the trade-off between the dependence on $n$ and the accuracy $\epsilon$ is inherent. Our bound in Lemma \ref{lemma:wass_err_moments} on the Wasserstein-1 distance between two distributions can be seen as analagous to Proposition 1 from \cite{KongValiant:2017}, which is the basis of the result in \cite{Cohen-SteinerKongSohler:2018}. They bound the Wasserstein-1 distance between two distributions in terms of the differences in the standard moments of the distributions. The bound requires an exponentially small dependence on $1/\epsilon$, i.e. $2^{-O(1/\epsilon)}$, in the difference between the standard moments while the bound from Lemma \ref{lemma:wass_err_moments} only requires an $O(\epsilon/\ln(1/\epsilon))$ difference in the Chebyshev moments.

As discussed, algorithms for spectral density estimation have been studied since the early 90s \cite{Skilling:1989,SilverRoder:1994,Wang:1994} but only analyzed recently. In addition to the work of Chen, Trogdon, and Ubaru that was discussed \cite{ChenTrogdonUbaru:2021}, \cite{MuscoNetrapalliSidford:2018} provides an algorithm for computing an approximate histogram for the spectrum of matrix. That result can be shown to yield an $\epsilon$ error approximation to the spectral density in the Wasserstein-1 distance with roughly $O(1/\epsilon^5)$ matrix-vector multiplications. This compares to the improved $O(1/\epsilon)$ matrix-vector multiplications required by our  Theorem \ref{thm:exactMatMultSDE}.

\textbf{Matrix-vector query algorithms.} Our work fits into a broader line of work on proving upper and lower bounds on the \emph{matrix-vector query complexity} of linear algebraic problems, from top eigenvector, to matrix inversion, to rank estimation \cite{SunWoodruffYang:2019,SimchowitzEl-AlaouiRecht:2018,BravermanHazanSimchowitz:2020,MeyerMuscoMusco:2020,DharangutteMusco:2021}. The goal in this model is to minimize the total number of matrix-vector multiplications with $A$, recognizing that such multiplications either 1) dominate runtime cost or 2) are the only way to access $A$ when it is an implicit matrix. The matrix-vector query model generalizes both classical Krylov subspace methods, as well as randomized sketching methods \cite{Woodruff:2014}. Studying other basic linear algebra problem when matrix-vector multiplication queries are only assumed to be approximate (as in Definition \ref{def:epsAMV}) is an interesting future direction.

\subsection{Paper Roadmap} 
We describe notation and preliminaries on polynomial approximation in Section \ref{sec:Preliminaries}. We use these tools in Section \ref{sec:approxCheby_mm} to prove that a good approximation to the first $O(1/\epsilon)$ Chebyshev polynomial moments of the spectral density can be used to extract a good approximation in Wasserstein-1 distance. This result is the basis for our result on robust spectral density estimation stated in Theorem \ref{thm:exactMatMultSDE} and linear time spectral density estimation stated in Theorem \ref{thm:approxMatMultSDE}, which are proven in Section \ref{sec:SDE_approx}. Finally, we give a randomized algorithm to implement an approximate matrix-vector multplication oracle for adjacency matrices in Section \ref{sec:graphSDE} and prove our main result, Theorem \ref{thm:graphSDE}. In Section \ref{sec:KPM} we describe and analyze the kernel polynomial method, showing that it too can be used to obtain a spectral density estimate given approximations to the first $O(1/\epsilon)$ Chebyshev polynomial moments. In Section \ref{sec:experiments}, we empirically investigate the potential of combining approximate matrix-vector multiplications with our moment matching method, the kernel polynomial method, and the stochastic Lanczos quadrature method studied in \cite{ChenTrogdonUbaru:2021}. We show that all three can achieve accurate SDE estimates in sublinear time for a variety of graph Laplacians.

\section{Preliminaries}\label{sec:Preliminaries}

Throughout we assume that $A \in \R^{n \times n}$ is Hermitian with eigendecomposition $A = U \Lambda U^{*}$, where $UU^* = U^*U = I_{n\times n}$. We assume that $A$'s eigenvalues satisfy $-1 \leq \lambda_n \leq \cdots \leq \lambda_1 \leq 1$. In many applications $A$ is real symmetric.
We denote $A$'s spectral density by $s$, which is defined in \eqref{eq:spect_density}. Our goal is to approximate $s$ in the  Wasserstein-1 metric with another distribution $q$ supported on $[-1,1]$. Specifically, 
as per the dual formulation given by the Kantorovich-Rubinstein theorem \cite{kantorovich1957functional}, for $s,q$ supported on $[-1,1]$ the metric is equal to:
\begin{align}
	\label{eq:wass_def}
W_1(s, q) = \sup_{\substack{f : \R \rightarrow \R\\ |f(x) - f(y)| \leq |x-y| ~\forall x,y} }\left\{\int_{-1}^1f(x)\left(s(x) - q(x)\right)dx\right\}.
\end{align}
In words, $s$ and $q$ are close in Wasserstein-1 distance if their difference has small inner product with all 1-Lipschitz functions $f$. Alternatively, $W_1(s, q)$ is equal to the cost of ``changing'' one distribution to another, where the cost of moving one unit of mass from $x$ to $y$ is $|x - y|$: this is the ``earthmover's'' formulation common in computer science. Note that \eqref{eq:wass_def} can be applied to arbitrary functions $s,q$, even if they are \emph{not distributions}, and we will occasionally do so.

\paragraph{Functions and inner products.} 
We introduce notation for functions used throughout the paper. 
Let $\FcOne$ denote the space of real-valued functions on $[-1,1]$. For  $g,h\in \FcOne$,  let $\iprod{g}{h}$ denote $\iprod{g}{h} \eqdef \int_{-1}^1 g(x)h(x)dx$. For $f\in \FcOne$, we define $\|f\|_2 \eqdef \sqrt{\iprod{f}{f}}$ and let $\|f\|_{\infty}$ denote the max-norm  $\|f\|_{\infty} = \max_{x\in [-1,1]} |f(x)|$. We let $\|f\|_{1}$ denote  $\|f\|_{1} = \int_{-1}^1|f(x)|dx$.

Let $\Fc(\Z, \R)$ be the space of real-valued functions on the integers, $\Z$. For $f,g\in \Fc(\Z, \R)$ let $(f*g)$ denote the discrete convolution: $(f*g)[n] = \sum_{m=-\infty}^\infty f[m]g[n-m]$.
Let $\Fc(\N, \R)$ be the space of real-valued functions on the natural numbers, $\N$. For functions in $\Fc(\Z, \R)$ or $\Fc(\N, \R)$ we typically used square brackets instead of parentheses.

For two functions $f,g$ let $h = fg$ (or $h=f\cdot g$) and $j = f/ g$ denote the pointwise product and quotient respectively. I.e. $h(x) = f(x)g(x)$ and $j(x) = f(x)/g(x)$ for all $x$.

\paragraph{Chebyshev polynomials.} Our approach is based on approximating Chebyshev polynomial moments of $A$'s spectral density, and we will use basic properties of these polynomials, the $k^\text{th}$ of which we denote $T_k$. The Chebyshev polynomial of the first kind can be defined via the recurrence:
\begin{align*}
	T_0(x) &= 1  \hspace{2em} T_1(x) = x \\ 
	T_k(x) &= 2x \cdot T_{k-1}(x) -  T_{k-2}(x) \hspace{2em}\text{for $k \geq 2$}.
\end{align*}
We will use the well known fact that the Chebyshev polynomials of the first kind are bounded between $[-1,1]$, i.e. $\max_{x\in [-1, 1]}|T_k(x)|\leq 1$. 

Let $w(x) \eqdef \frac{1}{\sqrt{1 - x^2}}$. It is well known that $\iprod{T_0}{w\cdot T_0} = \pi$, $\iprod{T_k}{w\cdot T_k} = \pi/2$ for $k > 0$, and 
\begin{align*}
	\iprod{T_i}{w\cdot T_j} &= 0  &&\text{for $i \neq j$}.
\end{align*} 
In other words, the Chebyshev polynomials are orthogonal on $[-1,1]$ under the weight function $w$. The first $k$ Chebyshev polynomials form an orthogonal basis for the degree $k$ polynomials under this weight function. We let $\bar{T}_k$ denote the \emph{normalized} Chebyshev polynomial $\bar{T}_k \eqdef T_{k}/\sqrt{\iprod{T_k}{w\cdot T_k}}$. 
\begin{definition}[Chebyshev Series]\label{def:cheby_series}
	The \emph{Chebyshev expansion or series} for a function $f \in \FcOne$ is given by \vspace{-1em}
	\begin{align*}
		\sum_{k = 0}^\infty \iprod{f}{w\cdot \bar{T}_k} \cdot \bar{T}_k.
	\end{align*}
	We call $\sum_{k = 0}^N \iprod{f}{w\cdot \bar{T}_k} \cdot \bar{T}_k$ the truncated Chebyshev expansion or series of degree $N$. 
\end{definition}

\paragraph{Other notation.} 
 Let $[n]$ denote $1,\ldots, n$. For a scalar function $f:\R\rightarrow \R$ and $n\times n$ matrix $A$ with eigendecomposition $U\Lambda U^{*}$ , we let $f(A)$ denote the matrix function $Uf(\Lambda)U^{*}$. Here $f(\Lambda)$ is understood to mean $f$  applied entrywise to the diagonal matrix $\Lambda$ containing $A$'s eigenvalues. Note that $\tr(f(A)) = \sum_{i=1}^n f(\lambda_i)$. When $f(x)$ is a degree $q$ polynomial, $c_0 + c_1x + \ldots, c_q x^q$, then we can check that $f(A)$ exactly equals $c_0I + c_1A + \ldots, c_q A^q$, where $I$ is then $n\times n$ identity matrix. So $f(A)y$ can be computed for any vector $y$ using $q$ matrix-vector multiplications with $A$. 


\section{Approximate Chebyshev Moment Matching}\label{sec:approxCheby_mm}
In this section we show that the spectral density $s$ of a Hermitian matrix $A$ with eigenvalues in $[-1, 1]$ can be well approximated given access to \emph{approximations} of the first $N = O(1/\epsilon)$ normalized Chebyshev polynomial moments of $s$, i.e., to approximations of $\tr(\Tbar_1(A)), \dots, \tr(\Tbar_N(A))$. We state our result in Algorithm \ref{alg:MomentMatching_full} and analyze it in Section \ref{sec:approx_mm}. We show later, in Section \ref{sec:SDE_approx}, a method to approximate these moments using a stochastic trace estimator, implemented with either exact or approximate matrix vector multiplications with $A$.

Given approximations $\tilde{\tau}_1, \dots, \tilde{\tau}_N$ to the first $N$ normalized Chebyshev moments of $A$, a natural approach is to find a probability density $q : [-1, 1] \to \R^+$ such that the first $N$ normalized Chebyshev moments of $q$, i.e., $\iprod{\Tbar_1}{q}, \dots, \iprod{\Tbar_N}{q}$, closely approximate $\tilde{\tau}_1, \dots, \tilde{\tau}_N$. In order for this approximate moment matching approach to return a good spectral density estimate, it requires that: \emph{for any density function $q$, if the first $N$ Chebyshev moments of $q$ closely approximate those of $s$, then $q$ must be close to $s$ in Wasserstein distance.}
To that end, we prove the following lemma:
\begin{lemma}\label{lemma:wass_err_moments}
Let $N \in 4\N^+$ be a degree parameter and $p, q$ be distributions on $[-1, 1]$. $$W_1(p, q) \leq \frac{36}{N} + 2\sum_{k = 1}^N \frac{|\iprod{\Tbar_k}{p} - \iprod{\Tbar_k}{q}|}{k}.$$
\end{lemma}
Lemma \ref{lemma:wass_err_moments} shows that if the first $N$ normalized Chebyshev moments of two distributions are identical, then the Wasserstein distance between the distributions is at most $ O(1/N)$. When the moments between the distributions differ, the contribution of the difference between the $k$-th moments to the Wasserstein distance is scaled by $O(1/k)$. In particular, the lemma shows that deviation in the lower moments between distributions contributes more to the Wasserstein distance.

To prove Lemma \ref{lemma:wass_err_moments}, we will use two well-known results on approximating Lipschitz functions by polynomials. The first is proven in \cite{Jackson:1930}. and concerns uniform approximation of Lipschitz continuous functions by a Chebyshev series:
\begin{fact}\label{fact:Jacksons}
Let $f \in \Fc([-1, 1], \R)$ be a Lipschitz continuous function with Lipschitz constant $\lambda > 0$. Then, for every $N \in 4\N^+$, there exists $N+1$ constants $\bhat_N[0] >  \dots > \bhat_N[N] \geq 0$ such that the polynomial $\bar{f}_N = \sum_{k =0}^N \frac{\bhat_{N}[k]}{\bhat_N[0]}\iprod{f}{w \cdot \Tbar_k} \Tbar_k$ has the property that $\max_{x \in [-1, 1]}|f(x) - \bar{f}_N(x)| \leq 18\lambda/N.$
\end{fact}
The coefficients of the polynomial in Fact \ref{fact:Jacksons} are not explicitly stated since we only require the existence of such a polynomial in order to prove Lemma \ref{lemma:wass_err_moments}. We defer the reader to Appendix \ref{sec:JacksonKernelWasserstein} for an explicit construction of the polynomial\footnote{The construction of the polynomial $\bar{f}_N$ in Fact \ref{fact:Jacksons} and its uniform approximation to $f$ forms the basis of our alternate approach, the Kernel Polynomial Method, which is discussed in-depth in Appendix \ref{sec:JacksonKernelWasserstein}.} and Appendix \ref{thm:jacksons_theorem_algebraic} for a proof of Fact \ref{fact:Jacksons}.  

Next, we state a well-known fact that the magnitude of the inner-product of a Lipschitz function $f$ with the $k$-th Chebyshev polynomial (for $k \geq 1$) under the Chebyshev weight function $w = 1/\sqrt{1 - x^2}$ is bounded by $O(1/k)$, i.e., $|\iprod{f}{w \cdot \Tbar_k}| \leq O(1/k)$. Our proof is given in Appendix \ref{appx:moment_magnitude_proof} and is a simple adaptation of the proof of Theorem 4.2 in \cite{trefethen2008gauss}. 
\begin{fact}\label{fact:moment_magnitude_bound}
Let $f \in \Fc([-1, 1], \R)$ be a Lipschitz continuous function with Lipschitz constant $\lambda > 0$. Then, for any $k \geq 1$, we have that $|\langle f, w  \cdot \Tbar_k \rangle| = |\int_ {-1}^1 f(x)\Tbar_k(x)w(x)dx | \leq {2\lambda}/{k}$.
\end{fact}
With Fact \ref{fact:Jacksons} and \ref{fact:moment_magnitude_bound} in place, we are now ready to prove Lemma \ref{lemma:wass_err_moments}

\begin{proof}[Proof of Lemma \ref{lemma:wass_err_moments}]
Recall that the dual formulation of the Wasserstein-1 distance due to Kantorovich-Rubinstein gives us that $W_1(p, q) = \sup_{\substack{f \in \text{lip}_1}}\int_{-1}^1 f(x)(p(x) - q(x))dx$ where $\text{lip}_1$ denotes the set of $1$-Lipschitz functions on $[-1, 1]$. Let $f \in \text{lip}_1$ be an arbitrary $1$-Lipschitz function and let $\{\bhat_N[k]\}_{k = 0}^N$ and $\bar{f}_N$ be the coefficients and polynomial respectively from Fact \ref{fact:Jacksons} for function $f$. We can then bound $W_1(p, q)$ using the triangle inquality as 
\begin{align*}
    W_1(p, q) \leq \underbrace{\int_{-1}^1  |f(x) - \bar{f}_N(x)|(p(x) - q(x)) dx}_{t_1} +  \underbrace{\int_{-1}^1 \bar{f}_N(p(x) - q(x))dx}_{t_2}. 
\end{align*}
Using the fact that $f$ is Lipschitz and the bound from Fact \ref{fact:Jacksons}, along with the fact that $p$ and $q$ are distributions, we have that $t_1 \leq 36/N$. 

It is left to bound $t_2$. We expand $t_2$ using the Chebyshev series expansion of $\bar{f}_N$ and note that $\iprod{g/w}{w \cdot \Tbar_k} = \iprod{g}{\Tbar_k}$ for any function $g \in \Fc([-1, 1], \R)$, giving us 
\begin{align*}
    t_2 &= \int_{-1}^1 \bar{f}_N(x) w(x) \cdot \frac{p(x) - q(x)}{w(x)} dx
    = \int_{-1}^1 \bar{f}_N(x) w(x) \cdot \sum_{k=0}^\infty \iprod{p-q}{\Tbar_k}\Tbar_k(x)dx\\
    &= \int_{-1}^1 \left( w(x)\sum_{k=0}^N \frac{\bhat_N[k]}{\bhat_N[0]}\iprod{f}{w \cdot \Tbar_k}\Tbar_k(x) \right)  \left(\sum_{k=0}^\infty \iprod{p-q}{\Tbar_k}\Tbar_k(x) \right) dx.
\end{align*}
By the orthogonality of the Chebyshev polynomials under the weight function $w$ and the fact that $\iprod{\Tbar_k}{\Tbar_k} = 1$ for all $k \in [N]$, we can bound the magnitude of $t_2$ as  
\begin{align*} 
|t_2| \leq \sum_{k=1}^N |\langle f, w \cdot \Tbar_{k}\rangle| \cdot |\iprod{\Tbar_k}{p} - \iprod{\Tbar_k}{q}|
\end{align*}
since we have that $0 \leq \bhat_N[k]/\bhat_N[0] \leq 1$ and $|\int_{-1}^1 \Tbar_k(p(x) - q(x))dx| = |\langle \Tbar_k, p \rangle - \langle\Tbar_k, q\rangle|$ for each $k \in [N]$. Additionally, since $p$ and $q$ are distributions we have that $\iprod{\Tbar_0}{s} = \iprod{\Tbar_0}{z} = 1/\sqrt{\pi}$. We then use the bound from Fact \ref{fact:moment_magnitude_bound} on $|\langle f, w \cdot \Tbar_{k}\rangle|$ for each $k \in [N]$. Putting this together gives us that $|t_2| \leq \sum_{k =1}^N 2|\langle \Tbar_k, p \rangle - \langle\Tbar_k, q\rangle|/k$.

Putting together the bound on $t_1$ and $t_2$ gives us the bound on $W_1(p, q)$.
\end{proof}

\subsection{Moment Matching Algorithm}\label{sec:approx_mm}
With Lemma \ref{lemma:wass_err_moments} in place, our next step is develop a method to find a distribution $q$ with Chebyshev moments closely matching a given set of target moments.
In order to search for a distribution, we consider an evenly-spaced grid of the interval $[-1, 1]$. Specifically, let $d \in \N^+$ be a discretization parameter and let $X_d = [-1, -1 + \frac{2}{d}, \dots, 1-\frac{2}{d}, 1]$ be a $(d+1)$-length evenly-spaced grid of the interval $[-1, 1]$. Our goal is to output a distribution supported on $X_d$ for an appropriately chosen value of $d$. Any such distribution can be described by a vector in $\R_{\geq 0}^d$ such that the $i$-th entry corresponds to the probability mass placed at point $-1 + 2i/d$ on the grid. Where it is clear from the context, we will denote the distribution and its probability mass vector interchangeably.  


In order to compute the first $N$ normalized Chebyshev moments of functions on the grid $X_d$, we define two matrices $\Tc_N^d, \widehat{\Tc}_N^d \in \R^{N \times d}$ such that for $k \in [N]$ and $i \in [d]$,
\begin{align*} 
(\Tc_N^d)_{k,i} = \Tbar_k(-1 +{2i}/{d}) &&\text{and} &&&(\widehat{\Tc}^d_N)_{k,i} = \frac{\Tbar_k(-1 + {2i}{d})}{k}.
\end{align*} 
The matrix $\Tc^d_N$ corresponds to a ``discretization'' of the continuous operator that computes the first $N$ normalized Chebyshev moments of a continuous function on $[-1, 1]$. In particular, for a distribution $q$ supported on $X_d$, we have that $\iprod{q}{\Tbar_k} = \sum_{i = 0}^d q_i\Tbar_k(-1 + 2i/d) = (\Tc^d_N q)_k$. Notice that the matrix $\Tc_N^d$ does not contain the row for $\Tbar_0$; since we are working with distributions we know that $\Tbar_0(q) = 1/\sqrt{\pi} \cdot \int_{-1}^1 qdx = 1/\sqrt{\pi}$ for any distribution $q$ on $[-1, 1]$. The matrix $\widehat{\Tc}^d_N$ is the matrix $\Tc^d_N$ with the $k$-th row scaled by $1/k$. 
With this notation in place, we state the approximate moment matching algorithm in full in Algorithm \ref{alg:MomentMatching_full}.
\begin{algorithm}[!h]\caption{Approximate Chebyshev Moment Matching}\label{alg:MomentMatching_full}
	\begin{algorithmic}[1]
		\Require Symmetric $A\in \R^{n\times n}$, degree parameter $N \in 4\N^+$, algorithm $\mathcal{M}(A)$ that computes moment approximations $\tilde{\tau}_1, \ldots, \tilde{\tau}_N$ with the guarantee that $|\tilde{\tau}_k - \frac{1}{n}\tr(\bar{T}_k(A))| \leq (N\ln(eN))^{-1}$ for all $k$. 
		\Ensure A vector $q$ corresponding to a discrete density function on $[-1, 1]$.
		\State For $k=1, \ldots, N$ use $\mathcal{M}$ to compute $\tilde{\tau}_1, \ldots, \tilde{\tau}_N$ and set $z = [{\tilde{\tau}_1}/{1}, {\tilde{\tau}_2}/{2}, \ldots, {\tilde{\tau}_N}/{N}]$.
		\State Set $d = \lceil N^3/2 \rceil$ and compute matrix $\widehat{\Tc}^d_N \in \R^{N \times d}$. \Comment{$(\widehat{\Tc}^d_N)_{k,i} = {\Tbar_k(-1 + \frac{2i}{d})}/{k}$.}
		\State Minimize $\|\widehat{\Tc}^d_N q - z\|_1$ subject to $q^\top \vec{1} = 1$ and $q \geq 0$. \label{line:regression_mm}
		\State {Return} $q$. 
	\end{algorithmic}
\end{algorithm}

Note that the optimization problem in Line \ref{line:regression_mm} of Algorithm \ref{alg:MomentMatching_full} can easily be written as a linear program in $O(d + N)$ variables and constraints and hence can be solved efficiently in $\poly(N, d) = \poly(1/\epsilon)$ time\footnote{Additionally, note that the optimization problem has a convex objective and constraints -- in particular, the set of distributions supported on $X_d$ is a convex set. The objective function $\|\widehat{\Tc}_N^dq - z\|_1$ is not differentiable, but has subgradients. Hence, this program can be solved efficiently in $\poly(1/\epsilon)$ time using a projected subgradient method. This requires an oracle that projects onto the the probability simplex supported on the grid $X_d$ -- an algorithm that runs in  $O(d\log d)$ time has been given in multiple papers, see \cite{wang2013projection} for more details.}. Since this method is independent of the matrix dimension $n$, it is a lower order term in the running time stated in Theorems \ref{thm:exactMatMultSDE} and \ref{thm:approxMatMultSDE}, as we will discuss in Section \ref{sec:SDE_approx}.

We show that when $N = O(1/\epsilon)$, Algorithm \ref{alg:MomentMatching_full} returns a distribution satisfying $W(s, q) \leq \epsilon$. 
\begin{lemma}\label{lemma:mm_guarantee}
Let $\epsilon \in [0, 1]$ and let $N \geq {18}/{\epsilon}$. Then the distribution $q : [-1, 1] \rightarrow \R^+$ returned by Algorithm \ref{alg:MomentMatching_full} satisfies $W_1(q, s) \leq 3\epsilon.$
\end{lemma}
\begin{proof}
We start by giving some notation -- for a distribution $y : [-1, 1] \to \R^+$, we denote $\vec{\tau}_y \coloneqq [\iprod{\Tbar_1}{y}, \dots, \iprod{\Tbar_N}{y}]$ to be the vector of the first $N$ normalized Chebyshev moments of $y$. For an integer $k \in \N^+$, we denote $\vec{k}$ to be the vector in $\R^{k}$ given by $\vec{k} \eqdef [1, \dots, k]$ and for a vector $y \in \R^k$ write $y/\vec{k}$ to denote the vector $y/\vec{k} \eqdef [y_1/1, \dots, y_k/k]$. Notice then that we have $\vec{\tau}_q = \Tc^d_N q$ and $\vec{\tau}_q/\vec{N} = \widehat{\Tc}^d_N q$.

We start by bounding the scaled differences in the first $N$ normalized Chebyshev moments of $q$ and $s$ in order to use Lemma \ref{lemma:wass_err_moments} on $q$ and $s$. 
\begin{align}
    \|\vec{\tau}_q/\vec{N} - \vec{\tau}_s/\vec{N}\|_1 \leq \|\vec{\tau}_q/\vec{N} - z\|_1 + \|z - \vec{\tau}_s/\vec{N}\|_1 \leq \|\vec{\tau}_q/\vec{N} - z\|_1 + \frac{1}{N}. \label{eqn:regression_bounds_eq1}
\end{align}
The first inequality follows by applying the triangle inequality and in the second inequality we used the fact that $\|z - {\vec{\tau}_s}/{\vec{N}}\|_1 = \sum_{k=1}^N |\tilde{\tau}_k - (\vec{\tau}_s)_k|/k \leq H_n \cdot (N\ln(eN))^{-1} \leq 1/N$.

Next we show that there exists a distribution $q'$ supported on $X_d$ such that $\|\vec{\tau}_{q'}/\vec{N} - z\| \leq {1}/{N}$. To this end, consider the following distribution $q^*$ on $X_d$: 
$$q^*(x) = \frac{1}{n} \sum_{i = 1}^n \delta(x - \argmin_{p \in X_d} |p - \lambda_i|).$$
In words, $q^*$ is the distribution corresponding to moving the mass from each $\lambda_i$ to its nearest point on the grid $X_d$. Notice that we have $W_1(s, q^*) \leq 1/d$ due to the earthmover distance interpretation of the Wasserstein-1 distance. 

Applying the triangle inequality and the guarantee from the moment approximations, we get that 
$\|\vec{\tau}_{q^*}/\vec{N} - z\|_1 \leq 1/N + \|\vec{\tau}_{q^*}/\vec{N} - \vec{\tau}_s/\vec{N}\|_1$. It is left then to bound $\|\vec{\tau}_{q^*}/\vec{N} - \vec{\tau}_s/\vec{N}\|_1$. To this end, we state the following well-known fact about the derivatives of Chebyshev polynomials. 
\begin{fact}\label{fact:Cheby_Lipschitz}
For $k \geq 1$, $\frac{d T_k(x)}{dx} = k U_{k-1}(x)$. 
\end{fact}
We then have using the definition of $q^*$ that, for any $1 \leq k \leq N$,
\begin{align*}
    \vert\langle \Tbar_k, s\rangle - \langle \Tbar_k, q^* \rangle \vert &= \left \vert \frac{1}{n}\sum_{i = 1}^n \Tbar_k(\lambda_i) - \Tbar_k(\argmin_{p \in X_d} |p - \lambda_i| ) \right \vert
    \leq \frac{1}{n}\sum_{i = 1}^n\left \vert \Tbar_k(\lambda_i) - \Tbar_k(\argmin_{p \in X_d} |p - \lambda_i| ) \right \vert \\ 
    &\leq \frac{\sqrt{2}}{n\sqrt{\pi}}\sum_{i = 1}^n \max_{x \in [-1, 1]} \left \vert \frac{d T_k(x)}{dx} \right \vert \cdot |\lambda_i - \argmin_{p \in X_d} |p - \lambda_i| |  
    \leq \frac{\sqrt{2}k^2}{d\sqrt{\pi}}
\end{align*}
where in the last inequality we used the fact that $\max_{x \in [-1, 1]}|U_{k-1}(x)| \leq k$. It follows then that 
\begin{align*}
\|\vec{\tau}_{q^*}/\vec{N} - \vec{\tau}_s/\vec{N}\|_1 = \sum_{k=1}^N \frac{|(\vec{\tau}_{q^*})_k - (\vec{\tau}_s)_k|}{k} \leq \frac{N(N+1)}{d\sqrt{2\pi}} \leq  \frac{1}{N}
\end{align*} 
by taking the sum over all $k$ and noting that $d \geq N^3/2$. Putting these bounds together gives us that $\|\vec{\tau}_{q^*}/\vec{N} - z\|_1 \leq 2/N$.


Since $\|\vec{\tau}_{q}/\vec{N} - z\|_1 \leq \|\vec{\tau}_{q^*}/\vec{N} - z\|_1$ from Line \ref{line:regression_mm} of Algorithm \ref{alg:MomentMatching_full}, we plug this into \eqref{eqn:regression_bounds_eq1} to get that $\|\vec{\tau}_{q}/\vec{N} - \vec{\tau}_s/\vec{N}\|_1 \leq 3/N$. We can then use Lemma \ref{lemma:wass_err_moments} with distributions $q$ and $s$ along with the fact that $\|\vec{\tau}_{q}/\vec{N} - \vec{\tau}_s/\vec{N}\|_1 = \sum_{k=1}^N |(\vec{\tau}_s)_k - (\vec{\tau}_q)_k|/k \leq 3/N$ to give us the result since $N > 18/\epsilon$.
\end{proof}

\begin{remark}
Note that Algorithm \ref{alg:MomentMatching_full} can easily be adapted when the minimization problem in Line \ref{line:regression_mm} is solved approximately -- as is the case if projected subgradient descent methods are used. In particular, a constant factor approximation to the minimal loss increases the Wasserstein distance bound in Lemma \ref{lemma:mm_guarantee} by an $O(1)$ factor.
\end{remark}

\section{Efficient Chebyshev Moment Approximation}\label{sec:SDE_approx}
With Lemma \ref{lemma:mm_guarantee} in place, we are ready to prove our main results. To do so, we need to show how to efficiently approximate the first $N$ Chebyshev moments of a matrix $A$'s spectral density $s$, as required by Algorithm \ref{alg:MomentMatching_full}. Recall that the $k^\text{th}$ normalized Chebyshev moment of ${s}$ is equal to $\iprod{s}{\bar{T}_k} = \frac{1}{n}\tr(\bar{T}_k(A))$. We will prove that this trace can be approximated using Hutchinson's stochastic trace estimator, implemented with either exact or approximate matrix-vector multiplications with $A$.

This estimator requires repeatedly computing $\bar{T}_k(A)g$ for a random vector $g$, which is done using the standard three-term (forward) recurrence for the Chebyshev polynomials and requires a total of $k$ matrix-vector multiplications with $A$. We analyze the basic approach in Section \ref{sec:exactMatMultSDE}, which yields Theorem \ref{thm:exactMatMultSDE}.  Then in Section \ref{sec:approxMatMultSDE}, we argue that the approach is stable even when implemented with approximate matrix-vector multiplication, which yields Theorem \ref{thm:approxMatMultSDE}.

\subsection{Exact Matrix-Vector Multiplications}\label{sec:exactMatMultSDE}

Hutchinson's estimator is a widely used estimator to efficiently compute accurate estimates of $\tr(R)$ for any square matrix $R \in \R^{n \times n}$. Each instance of the estimator computes the quadratic form $g^\top Rg$ for a random vector $g \in \{-1, 1\}^n$ whose entries are Rademacher random variables. This an unbiased estimator for $\tr(R)$ with variance $\leq 2\|R\|_F^2$, and its error has been analyzed in several earlier results \cite{AvronToledo:2011,Roosta-KhorasaniAscher:2015}. We apply a standard high-probability bound from \cite{MeyerMuscoMusco:2020,RudelsonVershynin:2013}:

\begin{lemma}[Lemma 2, \cite{MeyerMuscoMusco:2020}]\label{lemma:hutchinsonsEst}\hspace{-.5em}\footnote{In \cite{MeyerMuscoMusco:2020} the lemma is stated with an assumption that $\ell > O(1/\delta)$. However, it is easy to see that the same claim holds without this assumption, albeit with a quadratically worse $\log(1/\delta)$ dependence. The proof follows from same application of the Hanson-Wright inequality used in that work.}
Let $R \in \R^{n \times n}$, $\delta \in (0, 1/2]$, $l \in \N$. Let $g^{(1)}, \dots, g^{(\ell)}\in \{-1, 1\}^{n \times n }$ be $\ell$ random vectors with i.i.d $\{-1,+1\}$ random entries. For a fixed constant $C$, with probability at least $1 - \delta$, 
\begin{align*}
\abs{\tr(R) - \frac{1}{\ell}\sum_{i = 1}^l (g^{(i)})^\top R g^{(i)}} \leq \frac{C\log(1/\delta)}{\sqrt{\ell}}\|R\|_F.
\end{align*}
\end{lemma}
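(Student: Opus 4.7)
The plan is to reduce the $\ell$-sample concentration problem to a single application of the Hanson--Wright inequality for quadratic forms in Rademacher variables, and then control both the sub-Gaussian and sub-exponential tail terms so as to obtain the stated $\log(1/\delta)/\sqrt{\ell}$ rate without any lower-bound assumption on $\ell$. The reason to package the sum as one quadratic form, rather than running a Bernstein-type argument on the individual terms $(g^{(i)})^\top R\, g^{(i)}$, is that the per-sample fluctuations depend on the off-diagonal structure of $R$ in a way that Hanson--Wright captures cleanly through the Frobenius and operator norms.

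Concretely, I would concatenate the Rademacher vectors into $g = (g^{(1)}, \ldots, g^{(\ell)}) \in \{-1,+1\}^{n\ell}$, form the block-diagonal matrix $\tilde R = \frac{1}{\ell}\,\mathrm{blkdiag}(R,\ldots,R) \in \R^{n\ell \times n\ell}$, and replace it by its symmetrization $M = (\tilde R + \tilde R^\top)/2$ (which leaves every quadratic form unchanged, since $g^\top \tilde R\, g = g^\top \tilde R^\top g$). Using $\mathbb{E}[g_i g_j] = \delta_{ij}$, one has $g^\top M\, g = \tfrac{1}{\ell}\sum_{i=1}^\ell (g^{(i)})^\top R\, g^{(i)}$ and $\mathbb{E}[g^\top M\, g] = \tr(M) = \tr(R)$. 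A direct computation then gives $\|M\|_F \leq \|R\|_F/\sqrt{\ell}$ and $\|M\|_2 \leq \|R\|_2/\ell$.

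Applying the standard Hanson--Wright tail bound \cite{RudelsonVershynin:2013} for Rademacher vectors yields, for all $t > 0$,
\begin{align*}
\Pr\!\left[\,\left|g^\top M\, g - \tr(R)\right| > t\,\right] \leq 2\exp\!\left(-c\, \min\!\left(\frac{t^2 \ell}{\|R\|_F^2},\ \frac{t\ell}{\|R\|_2}\right)\right).
\end{align*}
Setting $t = C\,\|R\|_F\, \log(1/\delta)/\sqrt{\ell}$ for a sufficiently large absolute constant $C$, the first argument of the minimum becomes $C^2 \log^2(1/\delta)$, and using $\|R\|_2 \leq \|R\|_F$ the second is at least $C\sqrt{\ell}\,\log(1/\delta)$, which for $\ell \geq 1$ is at least $C\,\log(1/\delta)$. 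Both quantities exceed $\log(2/\delta)/c$ for a large enough $C$, so with probability at least $1-\delta$ we obtain $|g^\top M\, g - \tr(R)| \leq C\,\|R\|_F\, \log(1/\delta)/\sqrt{\ell}$, which is exactly the claim.

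The only point of care --- and the reason for the $\log(1/\delta)$ rather than $\sqrt{\log(1/\delta)}$ factor noted in the footnote --- is the sub-exponential term: forcing $t\ell/\|R\|_2 \gtrsim \log(1/\delta)$ without assuming $\ell \gtrsim \log(1/\delta)$ requires $t$ itself to scale linearly with $\log(1/\delta)$, which then dominates the $\sqrt{\log(1/\delta)}$ rate one would obtain from the sub-Gaussian term alone. Aside from this mild loss, the proof is a direct instantiation of Hanson--Wright and presents no real obstacle.
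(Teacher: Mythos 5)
Your proof is correct and is essentially the argument the paper intends: the lemma is imported from \cite{MeyerMuscoMusco:2020}, and the footnote's claim that the same Hanson--Wright application works without the lower bound on $\ell$ (at the cost of $\log(1/\delta)$ in place of $\sqrt{\log(1/\delta)}$) is exactly what your block-diagonal reduction and choice of $t$ deliver. Your closing remark correctly identifies the sub-exponential branch as the source of that quadratically worse dependence, so nothing is missing.
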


For a polynomial $p \in \FcOne$ with degree $k$, applying Hutchinson's estimator to $R = p(A)$ requires computing $p(A)g$, which can always be done with $k$ matrix-vector multiplies with $A$. If $p(x)$ admits a recursive construction, like the Chebyshev polynomials, then this recurrence  can be used. Specifically, for the Chebyshev polynomials, we have:
\begin{align}
	\label{eq:mat_recur}
T_0(A)g &= g \hspace{2em} T_1(A)g = Ag\nonumber \\
T_k(A)g &= 2A \cdot T_{k-1}(A)g -  T_{k-2}(A)g \hspace{2em} \text{for $k \geq 2$}.
\end{align}

A moment estimation algorithm based on Hutchinson's estimator is stated as Algorithm \ref{alg:hutch_moments}.

\begin{algorithm}\caption{Hutchinson Moment Estimator}\label{alg:hutch_moments}
\begin{algorithmic}[1]
	\Require Symmetric $A\in \R^{n\times n}$ with $\|A\|_2 \leq 1$, degree $N\in 4\N^+$, number of repetitions $\ell \in \N^+$.
	
	\Ensure Approximation $\tilde{\tau}_k$ to moment $\frac{1}{n}\tr(\bar{T}_k(A))$ for all $k\in 1,\ldots, N$.	
			\State Draw $g^{(1)}, \dots, g^{(l)} \sim \text{Uniform}(\{-1, 1\}^n)$.
			\State For $k=1, \ldots, N$, $\tilde{\tau}_k \gets \frac{\sqrt{2/\pi}}{\ell n}\sum_{i = 1}^l (g^{(i)})^\top T_k(A)g^{(i)}$. \Comment{Computed using recurrence in \eqref{eq:mat_recur}}
	\State Return $\tilde{\tau}_1, \ldots, \tilde{\tau}_N$. 
\end{algorithmic}
\end{algorithm}
\textbf{Remark.} In total, Algorithm \ref{alg:hutch_moments} requires $N\cdot \ell$ matrix multiplications with $A$ since for each $i$ $T_1(A)g^{(i)}, \ldots, T_N(A)g^{(i)}$ can but computed using the same $N$ steps of the \eqref{eq:mat_recur} recurrence. It requires $O(n\ell N)$ additional runtime to compute and sum all inner products of the form $(g^{(i)})^TT_k(A)g^{(i)}$.

Our main bound on the accuracy of Algorithm \ref{alg:hutch_moments} follows:
%
\begin{lemma}
	\label{lem:exact_mult_lem}
	If Algorithm \ref{alg:hutch_moments} is run with $\ell = \max\left(1, C\cdot \log^2({N}/{\delta})/(n\Delta^2)\right)$, where $C$ is a fixed positive constant, then with probability $1-\delta$ the approximate moments returned satisfy $|\tilde{\tau}_k - \frac{1}{n}\tr(\bar{T}_k(A))| \leq \Delta$ for all $k = 1,\ldots, N$.
\end{lemma}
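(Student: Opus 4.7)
The plan is to apply Hutchinson's concentration bound (Lemma \ref{lemma:hutchinsonsEst}) separately to each of the $N$ matrices $R_k = T_k(A)$ and then union bound. First I would unpack the normalization: since $\bar{T}_k = \sqrt{2/\pi}\, T_k$ for $k \geq 1$, the target quantity $\frac{1}{n}\tr(\bar{T}_k(A))$ equals $\frac{\sqrt{2/\pi}}{n}\tr(T_k(A))$, and the estimator $\tilde\tau_k$ in Algorithm \ref{alg:hutch_moments} is exactly $\frac{\sqrt{2/\pi}}{n}$ times the $\ell$-sample Hutchinson estimator for $\tr(T_k(A))$. So Lemma \ref{lemma:hutchinsonsEst} applied to $R_k = T_k(A)$ yields, with probability at least $1 - \delta/N$,
\begin{align*}
\left|\tilde\tau_k - \tfrac{1}{n}\tr(\bar{T}_k(A))\right| \;\leq\; \frac{\sqrt{2/\pi}}{n}\cdot\frac{C\log(N/\delta)}{\sqrt{\ell}}\,\|T_k(A)\|_F.
\end{align*}

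The key structural input is the bound $\|T_k(A)\|_F \leq \sqrt{n}$, which uses that $\|A\|_2 \leq 1$ together with the standard fact $\|T_k\|_{\infty,[-1,1]} \leq 1$ recorded in Section \ref{sec:Preliminaries}: the eigenvalues of $T_k(A)$ are $T_k(\lambda_1),\ldots,T_k(\lambda_n)$, each at most $1$ in absolute value, so the Frobenius norm is at most $\sqrt{n}$. Substituting this bound simplifies the right-hand side to $C'\log(N/\delta)/\sqrt{n\ell}$ for some absolute constant $C'$. Setting this $\leq \Delta$ and solving gives the sufficient condition $\ell \geq (C')^2\log^2(N/\delta)/(n\Delta^2)$.

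Next I would union bound over $k = 1, \ldots, N$: with probability at least $1 - \delta$ all $N$ moment approximations are simultaneously within $\Delta$ of their targets, provided $\ell$ satisfies the displayed inequality. Finally, I would verify the $\ell = 1$ branch of the $\max$: if $C \log^2(N/\delta)/(n\Delta^2) \leq 1$ then $n\Delta^2 \geq C\log^2(N/\delta)$, which is exactly the regime where even a single Hutchinson sample per moment already achieves error $\leq \Delta$, so taking $\ell = 1$ suffices. Combining the two branches matches the statement of the lemma.

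There is no real obstacle here: the proof is a direct plug-in of Hutchinson's bound with the trivial spectral bound on $\|T_k(A)\|_F$, plus a union bound that introduces the $\log N$ factor inside the $\log^2$. The only mild subtlety is that we are applying Lemma \ref{lemma:hutchinsonsEst} in the small-$\ell$ regime (possibly $\ell = 1$), which is why we rely on the footnote stating that the bound holds for all $\ell$ at the cost of the quadratic $\log(1/\delta)$ dependence already present in the statement; this is precisely what produces the $\log^2(N/\delta)$ (rather than $\log(N/\delta)$) in the final sample complexity.
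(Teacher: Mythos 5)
Your proposal is correct and follows essentially the same route as the paper's proof: apply Lemma \ref{lemma:hutchinsonsEst} to $R_k = T_k(A)$ with failure probability $\delta/N$, bound $\|T_k(A)\|_F \leq \sqrt{n}$ via $\|T_k(A)\|_2 \leq 1$, solve for $\ell$, and union bound over $k = 1,\ldots,N$. Your explicit check of the $\ell = 1$ branch and your remark about the footnote permitting small $\ell$ are consistent with (and slightly more careful than) the paper's treatment.
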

\begin{proof}
Fix $k \in \{1,\ldots, N\}$. Note that $\frac{1}{n}\tr(\bar{T}_k(A)) = \frac{\sqrt{2/\pi}}{n}\tr({T}_k(A))$. Let $C$ be the constant from Lemma \ref{lemma:hutchinsonsEst}. If $\ell = \max\left(1, C^2\cdot \log^2({N}/{\delta})/(n\Delta^2)\right)$, then by that lemma we have that with probability at least $1- \delta/N$:
\begin{align*}
\abs{\tilde{\tau_k} - \frac{\sqrt{2/\pi}}{n}\tr({T}_k(A))} 
\leq  \frac{1}{n}\frac{C\log(N/\delta)}{\sqrt{\ell}}\|T_k(A)\|_F \leq \frac{C\sqrt{2/\pi}}{\sqrt{n}}\sqrt{\frac{\log(N/\delta)}{\ell}} \leq \Delta.
\end{align*}
The second to  last inequality follows from the fact that $\|T_k(A)\|_2 \leq 1$ and thus $\|T_k(A)\|_F \leq \sqrt{n}$.
Applying a union bound over all $k \in 1,\ldots, N$ gives the claim. 
\end{proof}

Theorem \ref{thm:exactMatMultSDE} immediately follows as a corollary of Lemma \ref{lem:exact_mult_lem} and Lemma \ref{lemma:mm_guarantee}. 
\begin{proof}[Proof of Theorem \ref{thm:exactMatMultSDE}]
	We implement Algorithm \ref{alg:MomentMatching_full} with Algorithm \ref{alg:hutch_moments} used as a subroutine to approximate the Chebyshev polynomial moments, which requires setting $\Delta = \frac{1}{N\ln(eN)}$. By Lemma \ref{lem:exact_mult_lem}, we conclude that we need to set $\ell = \max\left(1, CN^2 \log^2({N}/{\delta})\log^2(eN)/n\right)$. Then, by Lemma \ref{lemma:mm_guarantee}, setting $N = O(1/\epsilon)$ ensures that Algorithm \ref{alg:MomentMatching_full} returns a distribution $q$ which is $\epsilon$ close to $A$'s spectral density $s$ in Wasserstein distance.
\end{proof}

\subsection{Approximate Matrix-Vector Multiplications}\label{sec:approxMatMultSDE}
Algorithm \ref{alg:hutch_moments} assumes access to an oracle for computing exact matrix-vector multiplies with $A$. In this section, we show that the method continues to work well even when each term in Hutchinson's estimator, $g^\top T_k(A) g$, is computed using an approximate matrix-vector multiplication oracle for $A$ (see Definition \ref{def:epsAMV}). As discussed in Section \ref{sec:contrib}, the robustness of the estimator allows the approximate moment matching method to be applied in many settings where $A$ can only be access implicitly. It also forms the basis of our sublinear time algorithm for computing the spectral density of a normalized graph adjacency or Laplacian matrix, which are presented in the Section \ref{sec:graphSDE}.

To show that approximate matrix-vector multiplications suffice, we leverage well understood stability properties of the three-term forward recurrence for Chebyshev polynomials of the first kind \cite{Clenshaw:1955,MuscoMuscoSidford:2018}. These properties allows us to analyze the cumulative error when  $T_k(A)g$ is computed via this recurrence. Specifically, we analyze the following algorithm:

\begin{algorithm}\caption{Hutchinson Moment Estimator w/ Approximate Multiplications}\label{alg:hutch_moments_approx}
	\begin{algorithmic}[1]
		\Require Symmetric $A\in \R^{n\times n}$ with $\|A\|_2 \leq 1$, degree $N\in 4\N^+$, number of repetitions $\ell \in \N^+$, $\epsMV$-approximate matrix vector multiplication oracle $\amv$ for $A$ (see Definition \ref{def:epsAMV}).
		
		\Ensure Approximation $\tilde{\tau}_k$ to moment $ \frac{1}{n}\tr(\bar{T}_k(A))$ for all $k\in 1,\ldots, N$.	
		\For{$i = 1,\ldots,\ell$ iterations}
		\State Draw $g \sim \text{Uniform}(\{-1, 1\}^n)$.
		\State $\tilde{v}_0 \gets g$, $\tilde{v}_1 \gets \amv(A, g, \epsMV)$. 
		\State $\tilde{\tau}_{1,i}\gets g^T\tilde{v}_1$
		\For{$k = 2$ to $N$}
		\State $\tilde{v}_{k} \gets 2\cdot \amv(A, \tilde{v}_{k-1}, \epsMV)  - \tilde{v}_{k-2}$.
		\State $\tilde{\tau}_{k,i}\gets g^T\tilde{v}_k$
		\EndFor
		\EndFor
		\State For $k = 1, \ldots, N$, $\tilde{\tau}_k \gets \frac{1}{\ell}\sum_{i=1}^\ell \tilde{\tau}_{k,i}$.
		\State Return $\tilde{\tau}_1, \ldots, \tilde{\tau}_N$. 
	\end{algorithmic}
\end{algorithm}

Algorithm \ref{alg:hutch_moments_approx} assumes access to an approximate matrix-vector multiplication oracle for $A$ with error $\epsMV$ (recall Definition \ref{def:epsAMV}). Since $\|A\|_2 \leq 1$, for any vector $y$, we have that:
\begin{align}
	\|\amv(A, y, \epsMV) - Ay\|_2 \leq \epsMV \|y\|_2.
\end{align}
The algorithm uses this oracle to apply the recurrence from \eqref{eq:mat_recur}, approximately computing each $T_k(A)g$ for $k = 1, \ldots, N$, which in turn allows us to approximately compute $g^\top T_k(A)g$. Note that when $\epsMV = 0$, Algorithm \ref{alg:hutch_moments_approx} is exactly equivalent to Algorithm \ref{alg:hutch_moments}.

\textbf{Notation.} Analyzing this approach requires accounting for error accumulates across iterations. To do so, we introduce some basic notation. Let $v_k$ denote the true value of $T_k(A)g$, and let $\tilde{v}_k$ denote our computed approximation. We initialize the recurrence with $\tilde{v}_{-1} = \vec{0}$ and $\tilde{v}_{0} = v_0 = g$. For $k = 0, \ldots, N-1$, let $w_k = \amv(A,\tilde{v}_k,\epsMV)$ and note that $\|w_k - A\tilde{v}_k\|_2 \leq \epsMV \|\tilde{v}_k\|_2$.
In iteration $k$ of the recurrence, we compute $\tilde{v}_{k+1}$ by applying the recurrence:
\begin{align*}
	\tilde{v}_{k+1} \eqdef 2 w_k - \tilde{v}_{k-1}.
\end{align*}
For each $i \in 0, \ldots, N$ we denote:
\begin{itemize}
\item $\delta_k \eqdef v_k - \tilde{v}_k$, with $\delta_0 = \vec{0}$. This is the \emph{accumulated error} up to iteration $k$. 

\item $\xi_{k+1} \eqdef A\tilde{v}_k - w_k$, with $\xi_0 = 0$. $2\xi_{k+1}$ is the \emph{new error} introduced in iteration $k$ due to approximate matrix-vector multiplication.
\end{itemize}

As in Clenshaw's classic work \cite{Clenshaw:1955}, it can be shown that $\delta_{k}$ \emph{itself evolves according to a simple recurrence}, which ultimately lets us show that it can be expressed as a summation involving Chebyshev polynomials of the \emph{second} kind, which are easily bounded. Specifically, we have:
\begin{fact}\label{fact:deltaRecurrence}
	$\delta_1 = \xi_1$ and for $2 \leq k \leq N$, $\delta_k = 2A\delta_{k-1} - \delta_{k-2} + 2\xi_{k}$.
\end{fact}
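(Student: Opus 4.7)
The plan is to prove Fact \ref{fact:deltaRecurrence} by a direct bookkeeping argument, subtracting the exact Chebyshev recurrence for $v_k = T_k(A)g$ from the perturbed recurrence actually executed in Algorithm \ref{alg:hutch_moments_approx}, and then substituting the definition of $\xi_k$ to isolate the source of new error introduced at each step.

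For the base case $k=1$, I would use that $\tilde{v}_0 = v_0 = g$ and that the exact value is $v_1 = T_1(A)g = Ag$, while the computed value is $\tilde{v}_1 = w_0 = \amv(A, g, \epsMV)$. By definition $\xi_1 = A\tilde{v}_0 - w_0 = Ag - w_0$, so
\begin{align*}
\delta_1 = v_1 - \tilde{v}_1 = Ag - w_0 = \xi_1,
\end{align*}
as claimed. Note there is no factor of $2$ here because $T_1(x)=x$ rather than $2x$, so the asymmetric base case of the Chebyshev recurrence is preserved in the error recurrence.

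For $k \geq 2$, I would write down the two recurrences in parallel: the exact one, $v_k = 2Av_{k-1} - v_{k-2}$, and the computed one, $\tilde{v}_k = 2w_{k-1} - \tilde{v}_{k-2}$. Subtracting,
\begin{align*}
\delta_k = 2Av_{k-1} - v_{k-2} - 2w_{k-1} + \tilde{v}_{k-2} = 2A(v_{k-1} - \tilde{v}_{k-1}) + 2(A\tilde{v}_{k-1} - w_{k-1}) - (v_{k-2} - \tilde{v}_{k-2}),
\end{align*}
which by the definitions of $\delta_{k-1}$, $\delta_{k-2}$, and $\xi_k$ is exactly $2A\delta_{k-1} - \delta_{k-2} + 2\xi_k$. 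This completes the fact.

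There is essentially no technical obstacle here; the entire content of the fact is that the perturbation enters additively at each Chebyshev step, and the subtraction just has to be performed carefully so that the cross term $A\tilde{v}_{k-1} - w_{k-1}$ is recognized as the freshly introduced error $\xi_k$ rather than being absorbed into $A\delta_{k-1}$. The payoff — which drives the subsequent analysis but is not part of this fact — is that $\delta_k$ obeys the same three-term Chebyshev recursion as $v_k$, with a forcing term $2\xi_k$, so $\delta_k$ can be written as a convolution of the $\xi_j$'s against Chebyshev polynomials of the second kind, following Clenshaw.
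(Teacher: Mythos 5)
Your proof is correct and follows essentially the same route as the paper: subtract the exact three-term recurrence $v_k = 2Av_{k-1} - v_{k-2}$ from the computed one $\tilde{v}_k = 2w_{k-1} - \tilde{v}_{k-2}$, identify $A\tilde{v}_{k-1} - w_{k-1}$ as $\xi_k$, and handle $k=1$ directly using $\tilde{v}_0 = v_0 = g$. No issues.
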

\begin{proof}
	The claim for $\delta_1$ is direct since $v_0 = \tilde{v}_0$: we have $\delta_1 = {v}_1 - \tilde{v}_1 = Av_0 - w_0$. For $2 \leq k \leq N$, we prove the claim by writing the difference $\delta_k = v_k - \tilde{v}_k = v_k - 2(A\tilde{v}_{k-1} + \xi_k) + \tilde{v}_{k-2}$. We can then replace $v_k = 2Av_{k-1} - {v}_{k-2}$ and substitute in $(v_{k-1} - \tilde{v}_{k-1}) = \delta_{k-1}$ and $(v_{k-2} - \tilde{v}_{k-2}) = \delta_{k-2}$.
\end{proof}
The Chebyshev polynomials of the {second} kind are defined via the following recurrence:
\begin{definition}[Chebyshev Polynomials of the Second Kind]\label{def:chebyshevSecond}
For $k \in \N^{\geq 0}$ the $k$-th Chebyshev polynomial of the second kind $U_k(x)$ is given by 
\begin{align*}
U_0(x) &= 1 \hspace{2em} U_1(x) = 2x \\ 
U_k(x) &= 2 x \cdot U_{k-1}(x) - U_{k-2}(x)  \hspace{2em} \text{for $k \geq 2$}.
\end{align*}
We also define $U_{-1}(x) = 0$, which is consistent with the recurrence.
\end{definition}

Using these polynomials, we can characterize the accumulated error $\delta_k$ in terms of the error introduced in each of the prior iterations.
\begin{lemma}\label{lemma:chebyshevAccumulatedError}
For $k= 1,\ldots, N$, we have 
\begin{align}\label{eqn:deltaExpression}
\delta_k = U_{k-1}(A)\xi_1  +  2\sum_{i = 2}^k U_{k-i}(A)\xi_i.
\end{align}
\end{lemma}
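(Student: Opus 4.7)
The plan is a straightforward induction on $k$, leveraging Fact~\ref{fact:deltaRecurrence} together with the defining recurrence of the Chebyshev polynomials of the second kind from Definition~\ref{def:chebyshevSecond}. The base cases are immediate: for $k=1$ the formula reads $\delta_1 = U_0(A)\xi_1 = \xi_1$, which matches Fact~\ref{fact:deltaRecurrence}; for $k=2$ the formula reads $\delta_2 = U_1(A)\xi_1 + 2 U_0(A)\xi_2 = 2A\xi_1 + 2\xi_2$, which also matches since $\delta_0=\vec 0$ and $\delta_1=\xi_1$.

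For the inductive step, assume the claim for $\delta_{k-1}$ and $\delta_{k-2}$, and substitute into the recurrence $\delta_k = 2A\delta_{k-1} - \delta_{k-2} + 2\xi_k$. Grouping terms by the $\xi_i$ that each factor multiplies, the coefficient of $\xi_1$ becomes $2A U_{k-2}(A) - U_{k-3}(A)$, which is exactly $U_{k-1}(A)$ by the second-kind Chebyshev recurrence. For each $i$ in the range $2 \le i \le k-2$, the coefficient is $2\bigl(2A U_{k-1-i}(A) - U_{k-2-i}(A)\bigr) = 2U_{k-i}(A)$, again by the recurrence. The edge indices require a small bit of bookkeeping: the $\xi_{k-1}$ term appears only through $2A\delta_{k-1}$ and contributes $2\cdot 2A\cdot U_0(A)\xi_{k-1} = 2U_1(A)\xi_{k-1}$; the $\xi_k$ term comes from the explicit $+2\xi_k$ summand and equals $2U_0(A)\xi_k$. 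Combining these four cases reconstructs exactly the right-hand side of \eqref{eqn:deltaExpression}.

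The only thing to be careful about is ensuring that the Chebyshev recurrence is applied correctly at the endpoints of the summation, in particular that the convention $U_{-1}(A)=0$ (from Definition~\ref{def:chebyshevSecond}) makes the identity $U_m(A) = 2A U_{m-1}(A) - U_{m-2}(A)$ valid down to $m=1$, so that the coefficient-matching argument for $\xi_1$ works even for small $k$. No other obstacle arises: the argument is purely algebraic manipulation of the two coupled linear recurrences (the one for $\delta_k$ from Fact~\ref{fact:deltaRecurrence} and the one for $U_k$ from Definition~\ref{def:chebyshevSecond}), and the result follows immediately by induction on $k$.
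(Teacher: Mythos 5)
Your proof is correct and follows essentially the same route as the paper's: induction via the recurrence $\delta_k = 2A\delta_{k-1} - \delta_{k-2} + 2\xi_k$ from Fact~\ref{fact:deltaRecurrence}, regrouping coefficients of each $\xi_i$ using the second-kind Chebyshev recurrence (with the edge terms $2U_1(A)\xi_{k-1}$ and $2U_0(A)\xi_k$ handled separately, and $U_{-1}=0$ covering the small-index cases), exactly as in the paper's inductive computation of the term $z_1$.
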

\begin{proof}
We prove the lemma by induction on $j \leq k$. For $j = 0$, the lemma is trivial since $\delta_0 = 0$ by definition and $U_{-1}(A) = 0$. For $j = 1$, $\delta_1 = \xi_1 = U_0(A)\xi_1$. By Fact \ref{fact:deltaRecurrence}, for $2 \leq j < k$, we have:
\begin{align}
	\label{eq:delta_split}
\delta_j &= 2\xi_j + \underbrace{2A\delta_{j-1} - \delta_{j-2}}_{z_1}. 
\end{align} 
We can apply the inductive hypothesis on $z_1$ and recombine terms using Definition \ref{def:chebyshevSecond} to get:
\begin{align*}
z_1 &= 2A \cdot \brackets{U_{j-2}(A)\xi_1 +  2\sum_{i = 2}^{j-1} U_{j-1-i}(A)\xi_{i}} -U_{j-3}(A)\xi_1 - 2\sum_{i = 2}^{j-2} U_{j-2-i}(A)\xi_{i} \\ 
&= U_{j-1}(A)\xi_1 + {U_1(A) \cdot 2\xi_{j-1} + \sum_{i = 2}^{j-2} \brackets{2A U_{j-1-i}(A) - U_{j-2-i}(A)} \cdot 2\xi_{i}} \\ 
&= U_{j-1}(A)\xi_1 + \sum_{i = 2}^{j-1}U_{j-i}(A) \cdot 2\xi_{i}
\end{align*}
Noting that plugging into \eqref{eq:delta_split} and noting that $2\xi_j = 2U_0(A)\xi_j$ completes the proof. 
\end{proof}
Our goal is to use Lemma \ref{lemma:chebyshevAccumulatedError} to establish that $\delta_k$ is small because each $\xi_i$ is small. It is well known that the Chebyshev polynomials of the second kind satisfy the following bounds for any $k \in \N$: 
\begin{align}
	\label{eq:chebSecond_ub}
	|U_k(x)| &\leq k+1 && \text{for} & x&\in[-1,1]. 
\end{align}
This is the upper bound we need to proceed. Specifically, we will show that each estimator using Algorithm \ref{alg:hutch_moments_approx}, $g^\top \tilde{v}_k$, well approximates Hutchinson's estimator $g^\top T_k(A)g = g^\top v_k$.

\begin{claim}\label{claim:estimatorErrorHutchinsons}
For quantities $v_k, \tilde{v}_k$ and $0\leq \epsMV \leq 1/2k^2$, we have 
\begin{align*}
	\left|g^\top T_k(A)g - g^\top \tilde{v}_k\right| \leq 2\epsMV \cdot (k+1)^2 \|g\|_2^2.
\end{align*}
\end{claim}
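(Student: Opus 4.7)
The plan is to rewrite the quantity to be bounded in terms of the accumulated error vector $\delta_k = v_k - \tilde{v}_k$ and then apply the closed-form expression for $\delta_k$ from Lemma \ref{lemma:chebyshevAccumulatedError}. Since $v_k = T_k(A)g$, we have $g^\top T_k(A)g - g^\top \tilde{v}_k = g^\top \delta_k$, and by Cauchy--Schwarz it suffices to prove the norm bound $\|\delta_k\|_2 \leq 2\epsMV (k+1)^2 \|g\|_2$ (or $2\epsMV k^2\|g\|_2$, which is enough after absorbing into $(k+1)^2$).

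First, I would bound each of the per-step error vectors $\xi_i$. By definition $\xi_1 = Ag - w_0$, so $\|\xi_1\|_2 \leq \epsMV\|A\|_2\|g\|_2 \leq \epsMV\|g\|_2$. For $i \geq 2$, $\|\xi_i\|_2 \leq \epsMV\|\tilde{v}_{i-1}\|_2$, so I need control on $\|\tilde{v}_{i-1}\|_2$. Since $\|T_{i-1}(A)\|_2 \leq 1$, we have $\|v_{i-1}\|_2 \leq \|g\|_2$, and by triangle inequality $\|\tilde{v}_{i-1}\|_2 \leq \|g\|_2 + \|\delta_{i-1}\|_2$. So obtaining a clean bound on $\|\xi_i\|_2$ requires knowing $\|\delta_{i-1}\|_2$ is not too large — this is the circular dependence that is the main obstacle.

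I would resolve it by induction on $k$: assuming $\|\delta_j\|_2 \leq 2\epsMV j^2 \|g\|_2$ for all $j < k$, the hypothesis $\epsMV \leq 1/(2k^2)$ yields $\|\delta_j\|_2 \leq \|g\|_2$ for $j \leq k-1$, so $\|\tilde{v}_j\|_2 \leq 2\|g\|_2$ and hence $\|\xi_i\|_2 \leq 2\epsMV\|g\|_2$ for all $2 \leq i \leq k$ (and the sharper $\|\xi_1\|_2 \leq \epsMV\|g\|_2$). Then, using Lemma \ref{lemma:chebyshevAccumulatedError} together with the uniform bound $\|U_j(A)\|_2 \leq j+1$ from \eqref{eq:chebSecond_ub} (valid because $\|A\|_2 \leq 1$ and the Chebyshev polynomials of the second kind are bounded by $j+1$ on $[-1,1]$), I would estimate
\begin{align*}
\|\delta_k\|_2 &\leq \|U_{k-1}(A)\|_2\|\xi_1\|_2 + 2\sum_{i=2}^k \|U_{k-i}(A)\|_2\|\xi_i\|_2 \\
&\leq k\cdot \epsMV\|g\|_2 + 4\epsMV\|g\|_2 \sum_{i=2}^k (k-i+1) \\
&= k\epsMV\|g\|_2 + 4\epsMV\|g\|_2\cdot \tfrac{k(k-1)}{2} \leq 2\epsMV k^2 \|g\|_2,
\end{align*}
closing the induction.

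Combining, $|g^\top T_k(A)g - g^\top \tilde{v}_k| \leq \|g\|_2\|\delta_k\|_2 \leq 2\epsMV k^2 \|g\|_2^2 \leq 2\epsMV(k+1)^2\|g\|_2^2$. The main obstacle, as noted, is the circular dependence between $\|\xi_i\|_2$ and $\|\delta_{i-1}\|_2$; the inductive argument combined with the smallness assumption $\epsMV \leq 1/(2k^2)$ (which ensures the accumulated errors never exceed the scale of $\|g\|_2$) is what breaks the cycle.
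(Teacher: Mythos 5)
Your proposal is correct and follows essentially the same route as the paper: reduce to bounding $\|\delta_k\|_2$ via Cauchy--Schwarz, expand $\delta_k$ with Lemma \ref{lemma:chebyshevAccumulatedError}, bound $\|U_j(A)\|_2 \leq j+1$ using \eqref{eq:chebSecond_ub}, control each $\|\xi_i\|_2$ through $\|\tilde v_{i-1}\|_2 \leq \|v_{i-1}\|_2 + \|\delta_{i-1}\|_2$, and close the circular dependence by induction using $\epsMV \leq 1/(2k^2)$. The only difference is bookkeeping: you bound each $\xi_i$ by $2\epsMV\|g\|_2$ up front with the inductive invariant $\|\delta_j\|_2 \leq 2\epsMV j^2\|g\|_2$, while the paper carries $\max_{i<k}\|\delta_i\|_2$ to the end and inducts there — the argument and constants are the same.
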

\begin{proof}
By the definition of $\delta_k$, we have $|g^\top T_k(A)g - g^\top \tilde{v}_k| = |g^\top \delta_k|$. By Cauchy-Schwarz we can bound $|g^\top \delta_k| \leq \|g\|_2\|\delta_k\|_2$. We are left to bound $\|\delta_k\|_2$.
Applying  Lemma \ref{lemma:chebyshevAccumulatedError} and triangle inequality, we have
\begin{align*}
\|\delta_k\|_2 &\leq \|U_{k-1}(A)\|_2\|\xi_1\|_2 + \sum_{i =2}^k 2 \|U_{k-i}(A)\|_2\|\xi_i\|_2
\end{align*}
Then applying \eqref{eq:chebSecond_ub} and the fact that $\|A\|_2\leq 1$, we have $\|U_{k-i}(A)\|_2 \leq (k-i+1).$ Hence,
\begin{align*}
\|\delta_k\|_2 &\leq k\|\xi_1\|_2 + \sum_{i=2}^k 2(k-i+1)\|\xi_i\|_2 \leq \sum_{i=1}^k 2(k-i+1)\|\xi_i\|_2.
\end{align*}
Using that $\xi_i \leq \epsMV\|\tilde{v}_{i-1}\|_2$, and that $\|T_i(A)\|_2 \leq 1$ for all $i$ and thus $\|v_i\|_2 \leq \|g\|_2$, we have:
\begin{align*}
\|\delta_k\|_2&\leq \sum_{i=1}^k 2(k-i+1)\epsMV\|\tilde{v}_{i-1}\|_2 \leq 2\epsMV\sum_{i=1}^k (k-i+1)(\|v_{i-1}\|_2 + \|\delta_{i-1}\|_2) \\
&\leq \epsMV k(k+1)\left(\|g\|_2 + \max_{i < k}\|\delta_{i}\|_2\right).
\end{align*}
Inducting on $\delta_j$ for $j \leq k$ gives us $\|\delta_k\|_2\leq 2\epsMV(k+1)^2\|g\|_2$, which completes the proof.
\end{proof}

\begin{lemma}
	\label{lem:approx_mult_lem}
	If Algorithm \ref{alg:hutch_moments_approx} is run with $\ell = \max\left(1, C\cdot \log^2({N}/{\delta})/(n\Delta^2)\right)$ and $\epsMV = \Delta/4N^2$, where $C$ is a fixed positive constant, then with probability $1-\delta$ the approximate moments returned satisfy $|\tilde{\tau}_k - \frac{1}{n}\tr(\bar{T}_k(A))| \leq \Delta$ for all $k = 1,\ldots, N$.
\end{lemma}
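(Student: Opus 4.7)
The plan is to split the total error via the triangle inequality into a \emph{statistical} piece (from the randomness in Hutchinson's estimator) and a \emph{computational} piece (from using an approximate multiplication oracle). I would introduce the ``ghost'' estimator $\hat{\tau}_k \eqdef \frac{\sqrt{2/\pi}}{\ell n}\sum_{i=1}^\ell (g^{(i)})^\top T_k(A) g^{(i)}$ that one would obtain on the same random vectors $g^{(1)},\ldots,g^{(\ell)}$ if exact matrix-vector multiplies were available (i.e., the output of Algorithm \ref{alg:hutch_moments} fed those vectors). The triangle inequality then gives
\[
\bigl|\tilde{\tau}_k - \tfrac{1}{n}\tr(\bar{T}_k(A))\bigr| \leq \bigl|\tilde{\tau}_k - \hat{\tau}_k\bigr| + \bigl|\hat{\tau}_k - \tfrac{1}{n}\tr(\bar{T}_k(A))\bigr|,
\]
and the goal will be to show each summand is at most $\Delta/2$.

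For the statistical piece, the argument would be the proof of Lemma \ref{lem:exact_mult_lem} essentially verbatim: applying Lemma \ref{lemma:hutchinsonsEst} with $\|T_k(A)\|_F \leq \sqrt{n}$ and $\ell \geq C'\log^2(N/\delta)/(n\Delta^2)$ for a suitable constant $C'$ yields $|\hat{\tau}_k - \frac{1}{n}\tr(\bar{T}_k(A))| \leq \Delta/2$ for each fixed $k$ with probability at least $1 - \delta/N$. A union bound over $k = 1,\ldots,N$ upgrades this to a simultaneous guarantee with probability at least $1-\delta$.

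For the computational piece, the key tool is already at hand: Claim \ref{claim:estimatorErrorHutchinsons}. With the stated $\epsMV = \Delta/(4N^2)$ we have $\epsMV \leq 1/(2k^2)$ for every $k \leq N$ (assuming $\Delta \leq 1$, as we may without loss of generality), so the claim provides the \emph{deterministic} per-sample bound $\bigl|(g^{(i)})^\top T_k(A) g^{(i)} - (g^{(i)})^\top \tilde{v}_k^{(i)}\bigr| \leq 2\epsMV(k+1)^2 n$, using $\|g^{(i)}\|_2^2 = n$ for Rademacher $g^{(i)}$. Averaging over $i$ and scaling by the normalization $\sqrt{2/\pi}/n$ shared by $\hat{\tau}_k$ and $\tilde{\tau}_k$ yields $|\tilde{\tau}_k - \hat{\tau}_k| \leq \sqrt{2/\pi}\cdot 2\epsMV(N+1)^2$, which is $\leq \Delta/2$ once the constants in $\epsMV$ absorb the $\sqrt{2/\pi}$ factor and the $(N+1)^2$ versus $N^2$ slack.

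Combining the two halves through the triangle inequality finishes the argument. I do not expect a genuine obstacle: all the hard analytic work --- propagating approximate-multiplication error through the Chebyshev three-term recurrence and controlling it via a sum of Chebyshev polynomials of the second kind with uniform norm $k+1$ --- has already been done inside Claim \ref{claim:estimatorErrorHutchinsons}. The only points requiring care are (i) verifying the hypothesis $\epsMV \leq 1/(2k^2)$ of the claim for every $k \leq N$, (ii) matching the $\sqrt{2/\pi}/n$ normalization consistently between $\hat{\tau}_k$ and $\tilde{\tau}_k$ (Algorithm \ref{alg:hutch_moments_approx} as written appears to suppress this prefactor, but it must be present for the comparison with Algorithm \ref{alg:hutch_moments} to make sense), and (iii) tuning constants in the definitions of $\ell$ and $\epsMV$ so that the two halves each fall below $\Delta/2$.
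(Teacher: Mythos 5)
Your proposal is correct and matches the paper's proof essentially step for step: the same triangle-inequality decomposition into a statistical term, handled by Lemma \ref{lemma:hutchinsonsEst} exactly as in Lemma \ref{lem:exact_mult_lem} and finished with a union bound over $k$, and a computational term, handled by the deterministic per-sample bound of Claim \ref{claim:estimatorErrorHutchinsons} together with $\|g^{(i)}\|_2^2 = n$ and the choice $\epsMV = \Delta/4N^2$. Your side remarks are also accurate --- the $\sqrt{2/\pi}/n$ normalization suppressed in the pseudocode of Algorithm \ref{alg:hutch_moments_approx} is used explicitly in the paper's calculation, and the constant slack you flag is treated with the same level of looseness there --- so there is nothing substantive to add.
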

\begin{proof}
Fix $k \in \{1,\ldots, N\}$. Let $g^{(1)}, \dots, g^{(\ell)}$ be the random vectors drawn in the outer for-loop of Algorithm \ref{alg:hutch_moments_approx}. Let $\{\tilde{v}_k^{(i)}\}_{i \in [\ell]}$ be the $\ell$ vectors computed by the inner for-loop and let $\{\delta_k^{(i)} \eqdef \tilde{v}_k^{(i)} - T_k(A)g^{(i)} \}_{i \in [\ell]}$ be the $\ell$ error vectors. Recalling that $\frac{1}{n}\tr(\bar{T}_k(A)) = \frac{\sqrt{2/\pi}}{n}\tr(T_k(A))$, we have:
\begin{align*}
\abs{\tilde{\tau}_k - \frac{\sqrt{2/\pi}}{n}\tr(T_k(A))} 
&\leq {\frac{\sqrt{2/\pi}}{n\ell}\sum_{i = 1}^\ell \left|(g^{(i)})^\top\delta_k^{(i)}\right|} + \abs{\frac{\sqrt{2/\pi}}{n\ell}\sum_{i = 1}^\ell (g^{(i)})^\top T_k(A)g^{(i)}- \frac{1}{n}\tr(T_k(A)) }
\end{align*}
Applying Claim \ref{claim:estimatorErrorHutchinsons} and Lemma \ref{lemma:hutchinsonsEst}, with probability at least $1 - \delta/N$, we thus have
\begin{align*}
|\tilde{\tau}_k - \frac{1}{n}\tr(\bar{T}_k(A))| \leq 2(k+1)^2 \epsMV\cdot \frac{\sqrt{2/\pi}}{n\ell}\sum_{i = 1}^\ell\|g^{(i)}\|^2_2 + \Delta/2 \leq \Delta/2 + \Delta/2.
\end{align*}
The last inequality follows from the fact that $\|g^{(i)}\|_2^2 = n$ for all $i \in [\ell]$, and the choice of $\epsMV = \Delta/4N^2$. Applying a union bound over all $k = 1,\ldots, N$ gives the claim. 
\end{proof}

Theorem \ref{thm:approxMatMultSDE} immediately follows. 
\begin{proof}[Proof of Theorem \ref{thm:approxMatMultSDE}]
	We implement Algorithm \ref{alg:MomentMatching_full} with Algorithm \ref{alg:hutch_moments_approx} used as a subroutine to approximate the Chebyshev polynomial moments, which requires setting $\Delta = \frac{1}{N\ln(eN)}$. By Lemma \ref{lem:approx_mult_lem}, we conclude that we need to set $\ell = \max\left(1, CN^2 \log^2({N}/{\delta})\log^2(eN)/n\right)$ and $\epsMV = 1/(4N^3\ln(eN))$. Then, by Lemma \ref{lemma:mm_guarantee}, setting $N = O(1/\epsilon)$ ensures that Algorithm \ref{alg:MomentMatching_full} returns a distribution $q$ which is $\epsilon$ close to $A$'s spectral density $s$ in Wasserstein distance.
\end{proof}

\paragraph{Improving the number of matrix-vector multiplications.} We currently require the error bound in Algorithm \ref{alg:MomentMatching_full} for estimating the Chebyshev moments to be the same for each of the $N$ moments, i.e., parameter $\Delta = (N\ln(eN))^{-1}$. We note that the number of matrix-vector multiplications can be improved slightly in Theorems \ref{thm:approxMatMultSDE} and  \ref{thm:exactMatMultSDE}, potentially by a factor of $\log^2(1/\epsilon)$ for small $n$. This can be achieved by requiring a different error bound for estimating each moment. 
Specifically, we modify the requirement in Algorithm \ref{alg:MomentMatching_full} for the estimate $\tilde{\tau}_k$ of the $k$-th normalized Chebyshev moment $\frac{1}{n}\tr(\Tbar_k(A))$ to have error $|\tilde{\tau}_k - \frac{1}{n}\tr(\Tbar_k(A))| \leq (k/N^5)^{1/4}$. Plugging this into Lemma \ref{lem:exact_mult_lem}, we require at most $\sum_{k=1}^N \max(1, CN^{2.5}\log^2(N/\delta)/(n\sqrt{k}))$ matrix-vector multiplications to estimate the $N$ moments, where $C$ is a fixed constant. For comparison to the bounds in Theorems \ref{thm:exactMatMultSDE} and \ref{thm:approxMatMultSDE}, the above bound decreases linearly in $n$ until $n \geq CN^{2}\log^2(N/\delta)$ and for very large $n$ is bounded by $O(1/N)$. In the regime where $n$ is small, e.g., when $n \leq CN^2\log^2(N/\delta)$, the bounds from the theorems give $O(N^3\log^2(N/\delta)\log^2(eN)/n)$ matrix-vector multiplications, whereas the above bound simplifies to at most $O(N^3\log^2(N/\delta)/n)$ multiplications, saving a $O(\log^2(N)) = O(\log^2(1/\epsilon))$ factor. Lemma \ref{lem:approx_mult_lem} can be adapted identically to give the same bound in the approximate matrix-vector multiplication case. 
To give intuition for the Wasserstein error of the resulting density, if the density estimate $q$ output by Algorithm \ref{alg:MomentMatching_full} satisfied the requirement that $|\iprod{q}{\Tbar_k} - \frac{1}{n}\tr(\Tbar_k(A))| \leq (k/N^5)^{1/4}$ for $k \in 1, \dots, N$, then we have by Lemma \ref{lemma:wass_err_moments} that $W_1(s, q) \leq 36/N + ({2}/{N^{5/4}}) \cdot \sum_{k =1}^N k^{-3/4} \leq 36/N + 8/N = O(1/N)$. This intuition can be used to adapt the proof of Lemma \ref{lemma:mm_guarantee} to show that Algorithm \ref{alg:MomentMatching_full} with moment guarantees as mentioned output a density $q$ such that $W_1(s, q) \leq O(1/N)$.

\section{Sublinear Time Methods for Graphs}\label{sec:graphSDE}

With the proof of Theorem \ref{thm:approxMatMultSDE} in place, we are now ready to state our sublinear time result for adjacency matrices of graphs. The significance of Theorem \ref{thm:approxMatMultSDE} is that it allows for the approximate Chebyshev moment matching method in  Algorithm \ref{alg:MomentMatching_full} to be combined with any randomized algorithm for approximating matrix-vector multiplications with $A$. In this section we prove Theorem \ref{thm:graphSDE} by showing that for the normalized adjacency matrix of any undirected, un-weighted graph, such an algorithm can actually be implemented in \emph{sublinear time}, leading to a sublinear time spectral density estimation (SDE) algorithm for computing graph spectra from these matrices.

\textbf{Computational Model.} Let $A \in \R^{n \times n}$ be the adjacency matrix for an unweighted, $n$-vertex graph $G = (V, E)$ and let $\bar{A} = D^{-1/2}AD^{-1/2}$ be the {symmetric} normalized adjacency matrix, where $D$ is an $n\times n$ diagonal matrix containing the degree of each vertex in $V$. For a node $i$, let $\Nc(i) = \{j:(j,i)\in E\}$ denote the set of $i$'s neighboring vertices. We assume a computational model where we can 1) uniformly sample a random vertex in constant time, 2) uniformly sample a random neighbor of any vertex $i$ in constant time, and 3) for a vertex $i$ with degree $d_i$, read off all neighbors of $i$ in $O(d_i)$ time. A standard adjacency list representation of the graph would allow us to perform these operations but weaker access models would also suffice.\footnote{E.g., random crawl access to a network \cite{KatzirLibertySomekh:2011}. We also note that, if desired, assumption 3) can be removed entirely with a small logarithmic runtime overhead, as long as we know the degree of $i$. Specifically, 3) can be implemented with $O(d_i\log n)$ calls to 2): we simply randomly sample neighbors until all $d_i$ are found.
A standard analysis of the coupon collector problem \cite[Section 3.6,][]{motwaniraghavan1995} shows that that the expected number of samples will be $O(d_i\log d_i) \leq O(d_i\log n)$.}


Using this model for accessing the adjacency matrix, we show that, for any $\epsMV \in (0, 1)$ and failure probability $\delta \in (0, 1)$, an $\epsMV$-approximate matrix-vector multiplication oracle for $\bar{A}$ can be implemented in $O(n\epsMV^{-2}\log({1}/{\delta}))$ time. Via Theorem \ref{thm:approxMatMultSDE}, this immediately yields an algorithm for computing an SDE that is $\epsilon$ close in Wasserstein-1 distance to $\bar{A}$'s spectral density in roughly $\tilde{O}(n /\epsilon^7)$ time for sufficiently large $n$, and at most  $\tilde{O}(n /\epsilon^{9})$ time, for fixed $\delta$ where the $\tilde{O}(\cdot)$ hides factors of $\poly(\log(1/\epsilon))$. Our main result is stated as Theorem \ref{thm:graphSDE} in Section \ref{sec:contrib}.

The same algorithm can be used to approximate the spectral density of the normalized Laplacian of $G$ by a simple shift and scaling. Specifically, $\bar{A}$ can be obtained from the normalized Laplacian $\bar{L}$  via $\bar{A} = I - \bar{L}$, and the spectral density of $\bar{L}$, $s_{\bar{L}}(x)$ satisfies $s_{\bar{L}}(1-x) = s_{\bar{A}}(x)$, where $s_{\bar{A}}$ is the spectral density of $\bar{A}$. So if we obtain an $\epsilon$-approximate SDE $q$ for $\bar{A}$ by Theorem \ref{thm:graphSDE}, then the function $p$ satisfying $p(1-x) = q(x)$ is an $\epsilon$-approximate SDE for $s_{\bar{L}}$. We thus have:
\begin{restatable}{corollary}{laplacianSDE}\label{cor:laplacianSDE}
Given the the normalized adjacency matrix of $G$, there exists an algorithm that takes $O\left(n\poly\left(\frac{\log(1/\delta)}{\epsilon}\right)\right)$ expected time and outputs a density function $q$ that is $\epsilon$ close to the spectral density of the normalized Laplacian of $G$ with probability at least $1 - \delta$.
\end{restatable}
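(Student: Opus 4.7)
The plan is to reduce directly to Theorem \ref{thm:graphSDE} using the identity $\bar{L} = I - \bar{A}$, where $\bar{A}$ is the normalized adjacency of $G$ and $\bar{L}$ is the normalized Laplacian. This identity gives the eigenvalue correspondence $\lambda_i(\bar{L}) = 1 - \lambda_i(\bar{A})$ for every $i$, which translates into the density-level identity $s_{\bar{L}}(y) = s_{\bar{A}}(1-y)$ already stated in the paragraph preceding the corollary. Being given the normalized adjacency matrix of $G$ is equivalent to the sampling access assumed by Theorem \ref{thm:graphSDE}, and $\bar{A}$ is symmetric with $\|\bar{A}\|_2 \leq 1$, so the hypotheses of that theorem are met.

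The first step is to run the algorithm of Theorem \ref{thm:graphSDE} on $\bar{A}$ with the same parameters $\epsilon, \delta$. In expected time $O(n\poly(\log(1/\delta)/\epsilon))$, and with probability at least $1-\delta$, this returns a density $q : [-1,1]\rightarrow \R^+$ satisfying $W_1(q, s_{\bar{A}}) \leq \epsilon$. The second step is to return the reflected density $p : [0,2]\rightarrow \R^+$ defined by $p(y) \eqdef q(1-y)$. This takes no additional computation beyond storing the symbolic representation of $q$ already produced by Algorithm \ref{alg:SDEpolynomial_full}, so the total expected runtime is exactly that of Theorem \ref{thm:graphSDE}.

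The one thing to verify is that $W_1(p, s_{\bar{L}}) \leq \epsilon$. This is immediate from the invariance of Wasserstein-1 distance under applying the same isometry of $\R$ to both distributions, together with the fact that $\phi(x) = 1-x$ is such an isometry. Concretely, using the dual formulation \eqref{eq:wass_def}, if $f$ is any $1$-Lipschitz function on $[0,2]$, then $\tilde{f}(x) \eqdef f(1-x)$ is $1$-Lipschitz on $[-1,1]$, and the change of variables $y = 1-x$ gives
\begin{align*}
\int_0^2 (p - s_{\bar{L}})(y)\,f(y)\,dy = \int_{-1}^1 (q - s_{\bar{A}})(x)\,\tilde{f}(x)\,dx,
\end{align*}
so taking the supremum over $f$ on each side yields $W_1(p, s_{\bar{L}}) = W_1(q, s_{\bar{A}}) \leq \epsilon$.

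There is no substantive obstacle here; the corollary is essentially a bookkeeping corollary of Theorem \ref{thm:graphSDE}. The only mildly non-trivial points are noticing that the three sampling primitives required by Theorem \ref{thm:graphSDE} are available from a representation of the normalized adjacency matrix (which is immediate), and that Wasserstein-1 distance respects the affine reflection $x \mapsto 1-x$ relating the spectra of $\bar{A}$ and $\bar{L}$. All of the real work is already done in Theorems \ref{thm:approxMatMultSDE} and \ref{thm:graphSDE}.
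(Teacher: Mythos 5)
Your proposal is correct and follows essentially the same route as the paper: apply Theorem \ref{thm:graphSDE} to $\bar{A}$ and transfer the guarantee to $\bar{L} = I - \bar{A}$ via the reflection $x \mapsto 1-x$, under which the Wasserstein-1 distance is invariant. The only difference is that you spell out the isometry-invariance argument via the dual formulation \eqref{eq:wass_def}, which the paper leaves implicit.
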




\subsection{Approximate Matrix-Vector Multiplication for Adjacency Matrices}
We implement an approximate matrix-vector multiplication oracle for $\bar{A}$ in Algorithm \ref{alg:amvGraphs}, which is inspired by a randomized matrix-multiplication method of \cite{drineas2006fast}. Throughout this section, let $\bar{A}^i$ denote the $i^\text{th}$ column of $\bar{A}$. 
Given a sampling budget $t \in \N$, the algorithm samples $t$ indices from ${1,\ldots, n}$ independently and with replacement -- i.e., the same index might be sample multiple times. For each index it samples, the algorithm decides to accept or reject the column corresponding to that index with some probability. To approximate $\bar{A}y$, the algorithm outputs the multiplication of the accepted columns, rescaled appropriately, with the corresponding elements of $y$. 
	 
	
\begin{algorithm}\caption{AMV Multiplication Oracle for Normalized Adjacency Matrices}\label{alg:amvGraphs}
	\begin{algorithmic}[1]
	\Require Normalized adjacency matrix $\bar{A} \in \R^{n \times n}$, degrees $[d_1, \dots, d_n]$, $y\in \R^n$, and parameter $t \in \N$.
	\Ensure A vector $z \in \R^n$ that approximates $\bar{A}y$.
	\State Initialize $z \gets \vec{0}$.
	 
		\For{$t$ iterations}
			\State Sample a node $j$ uniformly at random from $\{1, \ldots, n\}$.
			\State Sample a neighbor $i \in \Nc(j)$ uniformly at random.
			\State Sample $x$ uniformly at random from $[0,1]$.
			\If {$x\leq \frac{1}{d_i}$}
			\State $w\gets \frac{1}{p_i} \cdot y_i\bar{A}^i$ where $p_i = \frac{1}{nd_i}\sum_{j \in \Nc(i)} \frac{1}{d_j}$.
			\Else
			\State $w\gets \vec{0}$.
			\EndIf
			\State $z \gets z + w$.
		\EndFor
	
	\State \textbf{return} $\frac{1}{t}z$
	\end{algorithmic}
\end{algorithm}

The following lemma bounds the expected squared error of Algorithm \ref{alg:amvGraphs}'s:
\begin{lemma}\label{lemma:graphAMMVariance}
Let $z \in \R^n$ be the output of Algorithm \ref{alg:amvGraphs} with sampling budget $t$. We have:
\begin{align*}
	\Exp{\|\bar{A}y - z\|_2^2} = \frac{n}{t}\|y\|_2^2 - \frac{1}{t}\|\bar{A}y\|_2^2
\end{align*}
\end{lemma}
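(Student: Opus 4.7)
The plan is to analyze a single iteration of the inner loop as an i.i.d.\ random vector $X$, write $z = \frac{1}{t}\sum_{k=1}^t X_k$, and then apply the standard mean/variance computation for averages of i.i.d.\ unbiased estimators. The main arithmetic trick is the cancellation $\|\bar{A}^i\|_2^2 = n\cdot p_i$, which makes the dependence on $y$ collapse to $n\|y\|_2^2$.

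First I would compute the probability that a single iteration ``selects'' column $i$, i.e.\ produces the vector $w = \frac{1}{p_i}y_i \bar{A}^i$. Step 3 samples $j$ with probability $1/n$, step 4 samples $i$ as a neighbor of $j$ with probability $1/d_j$ when $j\in\Nc(i)$, and step 6 accepts with probability $1/d_i$. Summing over all $j\in\Nc(i)$ gives
\begin{align*}
\Pr[\text{select } i] \;=\; \sum_{j\in\Nc(i)} \frac{1}{n}\cdot\frac{1}{d_j}\cdot\frac{1}{d_i} \;=\; \frac{1}{nd_i}\sum_{j\in\Nc(i)}\frac{1}{d_j} \;=\; p_i,
\end{align*}
so the normalization in the algorithm is exactly what makes $X$ unbiased: $\Exp{X} = \sum_{i=1}^n p_i\cdot \frac{1}{p_i}y_i\bar{A}^i = \bar{A}y$.

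Next, because the $X_k$'s are i.i.d.\ and unbiased, the usual identity for averages gives
\begin{align*}
\Exp{\|z - \bar{A}y\|_2^2} \;=\; \frac{1}{t}\Bigl(\Exp{\|X\|_2^2} - \|\bar{A}y\|_2^2\Bigr).
\end{align*}
So the whole lemma reduces to showing $\Exp{\|X\|_2^2} = n\|y\|_2^2$. Directly,
\begin{align*}
\Exp{\|X\|_2^2} \;=\; \sum_{i=1}^n p_i\cdot\frac{1}{p_i^2}\,y_i^2\,\|\bar{A}^i\|_2^2 \;=\; \sum_{i=1}^n \frac{y_i^2\,\|\bar{A}^i\|_2^2}{p_i}.
\end{align*}

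The one nontrivial step is evaluating the column norm. Since $\bar{A}_{ji} = 1/\sqrt{d_j d_i}$ for $j\in\Nc(i)$ and $0$ otherwise, I would compute
\begin{align*}
\|\bar{A}^i\|_2^2 \;=\; \sum_{j\in\Nc(i)} \frac{1}{d_j d_i} \;=\; \frac{1}{d_i}\sum_{j\in\Nc(i)}\frac{1}{d_j} \;=\; n\cdot p_i,
\end{align*}
where the last equality is exactly the definition of $p_i$. Plugging this in makes the $p_i$ cancel, giving $\Exp{\|X\|_2^2} = n\sum_i y_i^2 = n\|y\|_2^2$, and then the variance identity above yields
\begin{align*}
\Exp{\|\bar{A}y - z\|_2^2} \;=\; \frac{n}{t}\|y\|_2^2 - \frac{1}{t}\|\bar{A}y\|_2^2,
\end{align*}
as claimed. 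No part of this looks like a real obstacle: the definition of $p_i$ is tailored precisely so that $p_i$ and $\|\bar{A}^i\|_2^2/n$ coincide, which is the whole reason for the funny-looking acceptance probability $1/d_i$ in the algorithm.
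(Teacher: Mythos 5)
Your proposal is correct and follows essentially the same route as the paper: show each iteration's vector is an unbiased estimator of $\bar{A}y$ with selection probability $p_i$, reduce to the single-sample second moment via the i.i.d.\ average identity, and use the cancellation $\|\bar{A}^i\|_2^2 = np_i$ to get $\Exp{\|X\|_2^2} = n\|y\|_2^2$. The only cosmetic difference is that the paper phrases the second-moment computation with indicator random variables $X_i$ (noting $X_iX_j=0$ for $i\neq j$), while you sum directly over the selection outcome — these are equivalent.
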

\begin{proof}	
Let $b$ denote $b = \bar{A}y$. Consider a single iteration of the main loop in Algorithm \ref{alg:amvGraphs}, which generates a vector $w$ that is added to $z$.
Let $X_i$ be an indicator random variable that is $1$ if $w$ is set to a scaling of $\bar{A}^i$ on that iteration, and $0$ otherwise. $X_i = 1$ if and only if 1) a neighbor of $i$ is sampled at Line 3 of the algorithm, 2) $i$ is sampled at Line 4 of the algorithm, and 3) the uniform random variable $x$ satisfies $x < 1/d_i$. So, we see that $\Prob{X_i = 1}$ is exactly equal to $p_i = \frac{1}{nd_i}\sum_{j \in \Nc(i)} \frac{1}{d_j}$. It follows that, by the time we reach Line 11, $w$ is an unbiased estimator for $b$. I.e.,  $\Exp{w} = b$. Of course, this also implies that $\Exp{z} = b$. 

Our goal is to show that $\Exp{\|b - z\|^2} = \frac{n}{t}\|y\|_2^2 - \frac{1}{t}\|b\|_2^2$. Since the random vector $b - z$ has mean zero and is the average of $t$ i.i.d. copies of the mean zero random vector $b-w$, it suffices that show:
\begin{align}
	\label{eq:main_var_goal}
	\Exp{\|b - w\|_2^2} = n \|y\|_2^2 - \|b\|_2^2.
\end{align}
By linearity of expectation and the fact that  $\Exp{w} = b$, we have
\begin{align*}
\Exp{\|b - w\|_2^2} = \|b\|_2^2 + \Exp{\|w\|_2^2} - 2\langle\Exp{w},b\rangle = \Exp{\|w\|_2^2} -  \|b\|_2^2.
\end{align*} 
So to prove \eqref{eq:main_var_goal}, we need to show that $\Exp{\|w\|_2^2} = n \|y\|_2^2$. We expand $w$ in terms of the indicator random variables $X_1, \dots, X_n$. Notice that since we only sample one column in each iteration, the random variable $X_iX_j = 0$ for all $i \neq j$. Thus, we have:
\begin{align*}
\Exp{\|w\|_2^2} &= \sum_{k=1}^n\ExpBig{{\sum_{i, j \in [n]} \frac{X_iX_j}{p_ip_j}(\bar{A}^iy_i)_k (\bar{A}^jy_j)_k}} = \sum_{k=1}^n\ExpBig{{\sum_{i = 1}^n \frac{X_i^2}{p_i^2} \cdot (\bar{A}^i y_i)^2_k}} \\ 
&= \sum_{i=1}^n\frac{1}{p_i} \cdot \|\bar{A}^i y_i\|_2^2 =  \sum_{i=1}^n ny_i^2 = n\|y\|_2^2
\end{align*}
In the last equalities we used the fact that $\Exp{X_i^2} = p_i$ and that, for a normalized graph adjacency matrix, $\|\bar{A}^i\|_2^2 = \sum_{j \in \Nc(i)} \frac{1}{d_id_j} = np_i$. This proves \eqref{eq:main_var_goal}, from which we conclude the lemma.  
\end{proof}


Using Lemma \ref{lemma:graphAMMVariance}, we show that there is an $\epsMV$-approximate matrix-vector oracle for $\bar{A}$ based on Algorithm \ref{alg:amvGraphs} with success probability at least $1 - \delta$ that runs in $O(n\epsMV^{-2}\log^2(\frac{1}{\delta}))$ time.
\begin{proposition}\label{prop:laplacianAMM}
Let $\bar{A} \in \R^{n \times n}$ be the symmetric normalized adjacency matrix of an $n$-vertex graph $G$ and let $\epsMV, \delta \in (0,1)$ be fixed constants. There is an algorithm that, given a vector $y \in \R^n$, and access to $G$ as described above, takes $O(n\epsMV^{-2}\log(\frac{1}{\delta}))$ expected time and outputs a vector $z \in \R^n$ such that $\|z - \bar{A}y\|_2 \leq \epsMV\|y\|_2$ with probability at least $1 - \delta$.
\end{proposition}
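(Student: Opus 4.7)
The plan is to run Algorithm \ref{alg:amvGraphs} with a moderate sampling budget $t$, which yields constant success probability via Lemma \ref{lemma:graphAMMVariance} and Markov's inequality, and then boost the success probability to $1-\delta$ by taking $O(\log(1/\delta))$ independent repetitions and selecting a ``central'' estimate.

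First I run Algorithm \ref{alg:amvGraphs} once with $t = 16n/\epsMV^2$. By Lemma \ref{lemma:graphAMMVariance}, the resulting output $z$ satisfies $\mathbb{E}[\|\bar{A}y - z\|_2^2] \leq (n/t)\|y\|_2^2 = (\epsMV/4)^2\|y\|_2^2$, so by Markov's inequality $\|z - \bar{A}y\|_2 \leq (\epsMV/3)\|y\|_2$ with probability at least $3/4$. To amplify this, I generate $k = \Theta(\log(1/\delta))$ independent copies $z^{(1)},\ldots,z^{(k)}$ and output any $z^{(i^*)}$ that lies within $(2\epsMV/3)\|y\|_2$ of at least $3k/4$ of the other estimates. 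A Chernoff bound on the $k$ independent Bernoulli trials guarantees that, except with probability $\delta$, at least $3k/4$ of the estimates are ``good'' (within $(\epsMV/3)\|y\|_2$ of $\bar{A}y$). By the triangle inequality, every good estimate is within $(2\epsMV/3)\|y\|_2$ of every other good estimate, so such an $i^*$ always exists. Moreover, since the set of estimates close to $z^{(i^*)}$ and the set of good estimates both have size at least $3k/4$, they must intersect, so $z^{(i^*)}$ is within $(2\epsMV/3)\|y\|_2$ of some good estimate, hence within $\epsMV\|y\|_2$ of $\bar{A}y$ by a final triangle inequality.

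For the runtime, the key point is that the expected cost of a single iteration of Algorithm \ref{alg:amvGraphs} is $O(1)$. Lines~3--6 are $O(1)$ by the computational-model assumptions (and by precomputing the degree vector in $O(n)$ preprocessing). The only potentially expensive operation is Line~7, which costs $O(d_i)$ (to read $\Nc(i)$, compute $p_i$, and add $y_i\bar{A}^i/p_i$ into $z$), but this branch is entered only when vertex $i$ is both sampled at Line~4 and accepted at Line~6. A direct computation shows that this joint event has probability exactly $p_i$, so the expected per-iteration cost of Line~7 is
\[
\sum_{i=1}^n p_i \cdot O(d_i) \;=\; O\!\left(\sum_i p_i d_i\right) \;=\; O\!\left(\tfrac{1}{n}\sum_i \sum_{j\in\Nc(i)}\tfrac{1}{d_j}\right) \;=\; O\!\left(\tfrac{1}{n}\sum_j 1\right) \;=\; O(1),
\]
using the double-counting identity $\sum_i\sum_{j\in\Nc(i)}1/d_j = \sum_j d_j/d_j = n$. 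Hence one run takes $O(t) = O(n/\epsMV^2)$ expected time, the $k$ runs together cost $O(n\epsMV^{-2}\log(1/\delta))$, and the $O(k^2 n)$ pairwise-distance computation in the selection step is a lower-order contribution in the regime $\log(1/\delta) \lesssim 1/\epsMV^2$.

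The main technical obstacle is the vector-valued boosting step: we cannot simply take the median of vector estimates. The ``central estimate'' trick sidesteps this by selecting an output that is close to a majority of the independent trials and then applying the triangle inequality; the argument is standard but the constants must be tuned carefully so that the final error bound is exactly $\epsMV\|y\|_2$ rather than a constant multiple of it.
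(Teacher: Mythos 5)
Your proposal follows the paper's proof essentially step for step: run Algorithm \ref{alg:amvGraphs} with $t = O(n\epsMV^{-2})$, apply Lemma \ref{lemma:graphAMMVariance} plus Markov for constant success probability, amplify with $O(\log(1/\delta))$ independent runs by selecting an estimate close to many of the others, and bound the expected cost of Line~7 by the same double-counting identity $\sum_i p_i d_i = \frac{1}{n}\sum_i\sum_{j\in\Nc(i)}\frac{1}{d_j} = 1$. However, two of your constants do not actually work as written, which matters since you emphasize careful tuning. First, with $t = 16n/\epsMV^2$ the expected squared error is $(\epsMV/4)^2\|y\|_2^2$, so Markov at threshold $(\epsMV/3)\|y\|_2$ gives failure probability at most $9/16$, not $1/4$; you need either a larger $t$ (the paper takes $t = 48n\epsMV^{-2}$ with threshold $\epsMV/4$) or a looser per-trial threshold. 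Second, if each trial is good only with probability $3/4$, a Chernoff bound cannot guarantee that at least $3k/4$ of the $k$ trials are good except with probability $\delta$ --- the expected number of good trials is exactly $3k/4$, so falling below it happens with constant probability. You must either boost the per-trial success probability strictly above $3/4$ (again by enlarging $t$) or, as the paper does, only require the selected estimate to be close to a strict majority ($> k/2$) of the trials; the intersection-of-two-majorities triangle-inequality argument then goes through unchanged, at the cost of slightly different final constants. With those constant adjustments your argument is correct and is the same as the paper's.
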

\begin{proof}
By Lemma \ref{lemma:graphAMMVariance}, we have that $\Exp{\|\bar{A}y - z\|_2^2} \leq \frac{n}{t}\|y\|_2^2$. Fix $t = 48n\epsMV^{-2}$. Then, by Lemma \ref{lemma:graphAMMVariance} and Markov's inequality, we have that when Algorithm \ref{alg:amvGraphs} is called on $\bar{A}$ with parameter $t$,
\begin{align}\label{eqn:AMMfailureprob}
\Prob{\|\bar{A}y - z\|_2 > \frac{\epsMV}{4}\|y\|_2} &\leq \frac{16n\|y\|_2^2}{t\epsMV^2\|y\|^2_2} \leq \frac{1}{4}.
\end{align}
In order improve our success probability from $3/4$ to $1 - \delta$, we use the standard trick of repeating the above process $r = c\log(\frac{1}{\delta})$ times for a constant $c$ to be fixed later. Let $z_1, \dots, z_r \in \R^n$ be the output of running Algorithm \ref{alg:amvGraphs} $r$ times with parameter $t$. We can return as our estimate for $\bar{A}y$  the first $z_i$ such that $\|z_i - z_j\|_2 \leq \frac{\epsMV}{2}\|y\|_2$ for at least $r/2+1$ vectors $z_j$ from $z_1, \dots, z_n$. 

To see why this works, note that a Chernoff bound can be used to claim that with probability $>1 - \delta$, at least $r/2+1$ vectors $z_j$ from $z_1, \dots, z_r$ have that $\|z_j - \bar{A}y\|_2 \leq \frac{\epsMV}{4}\|y\|_2$. 

By a triangle inequality we have that for all such $z_j$ and $z_k$, $$\|z_j - z_k\|_2 \leq \|z_j - \bar{A}y\|_2 + \|z_k - \bar{A}y\|_2 \leq \frac{\epsMV}{2}\|y\|_2.$$
Thus, the $z_i$ we picked must satisfy that $\|z_i - \bar{A}y\| \leq \frac{3\epsMV}{4}\|y\|_2$ by the triangle inequality.

All that remains is to bound the expected runtime of Algorithm \ref{alg:amvGraphs}, which we will run $r$ separate times. To do so, note that all index sampling can be done in just $O(t)$ time, since sampling a random vertex and a random neighbor of the vertex are assumed to be $O(1)$ time operations. The costly part of the algorithm is computing the sampled column $w$ at each iteration. In the case that $w = \vec{0}$, this cost is of course zero. However, when $w = \frac{1}{p_i}\bar{A}^i y_i$ for some $i$, computing the column and adding it to $z$ takes $O(d_i)$ time, which can be large in the worst case. Nevertheless, we show that it is small in expectation. This may seem a bit surprising: while nodes with high degree are more likely to be sampled by Line 4 in Algorithm \ref{alg:amvGraphs}, they are rejected with higher probability in Line 6. Formally, let $\nnz(w)$ denote the number of non-zero entries in $w$. We have:
\begin{align*}
\ExpBig{\nnz(w)} &= \sum_{i = 1}^n \nnz(\bar{A}^i) \cdot p_i = \sum_{i = 1}^n \sum_{j \in \Nc(i)} \frac{d_i}{n \cdot d_id_j} = \frac{1}{n}\sum_{i = 1}^n \sum_{j \in \Nc(i)} \frac{1}{d_j} = 1.
\end{align*}
The final equality follows from expanding the double sum: since node $j$ has exactly $d_j$ neighbors, $\frac{1}{d_j}$ appears exactly $d_j$ times in the sum. So $\sum_{i = 1}^n \sum_{j \in \Nc(i)} \frac{1}{d_j} = n$. 

We run Algorithm \ref{alg:amvGraphs} with $t = O(n/\epsMV^2)$ iterations, so it follows that the expected total sparsity of all $w$'s constructed equals $O(n/\epsMV^2)$, which dominates the expected running time of our method.

\end{proof}

\begin{proof}[Proof of Theorem \ref{thm:graphSDE}]
The accuracy and running time claim follows from combining the $\epsMV$-approximate vector multiplication oracle described in Proposition \ref{prop:laplacianAMM} with Algorithm \ref{alg:MomentMatching_full}, which is analyzed in Theorem \ref{thm:approxMatMultSDE}.
\end{proof}
As discussed in the introduction, Cohen et al. \cite{Cohen-SteinerKongSohler:2018} prove a result which matches the guarantee of Theorem \ref{thm:graphSDE}, but with runtime of $2^{O(1/\epsilon)}$ -- i.e., with \emph{no dependence} on $n$. In comparison, our result depends linearly on $n$, but only polynomially on $1/\epsilon$. In either case, the result is quite surprising, as the runtime is \emph{sublinear} in the input size: $A$ could have up to $O(n^2)$ non-zero entries.

\section{Experiments}
\label{sec:experiments}
\begin{figure}[!t]
\centering
\begin{subfigure}
  \centering
  \includegraphics[width=0.325\linewidth]{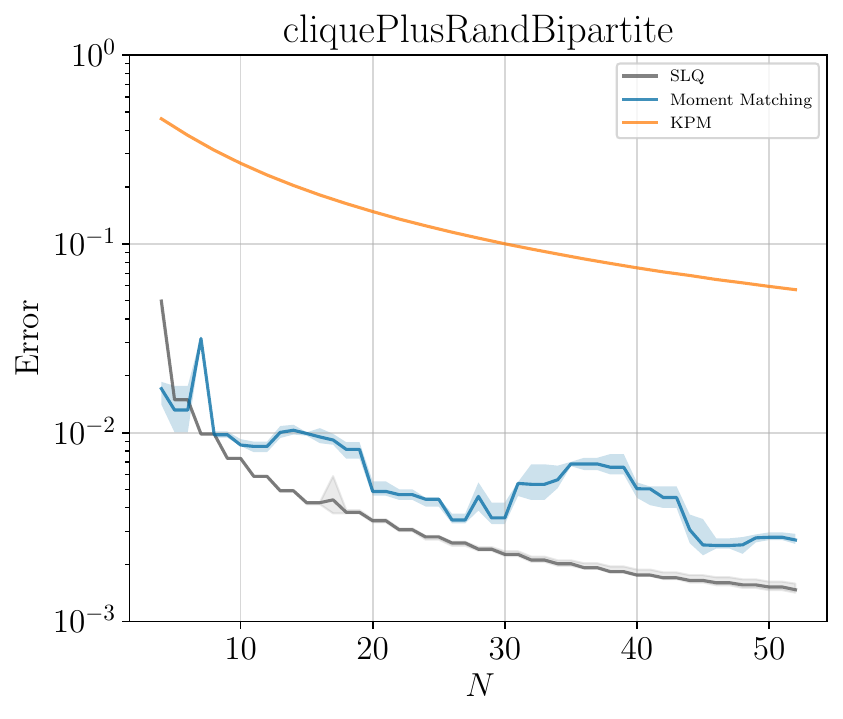}
  \label{fig:sdeRep-cliquePlusRandBipartite}
\end{subfigure}%
\begin{subfigure}
  \centering
  \includegraphics[width=0.312\linewidth]{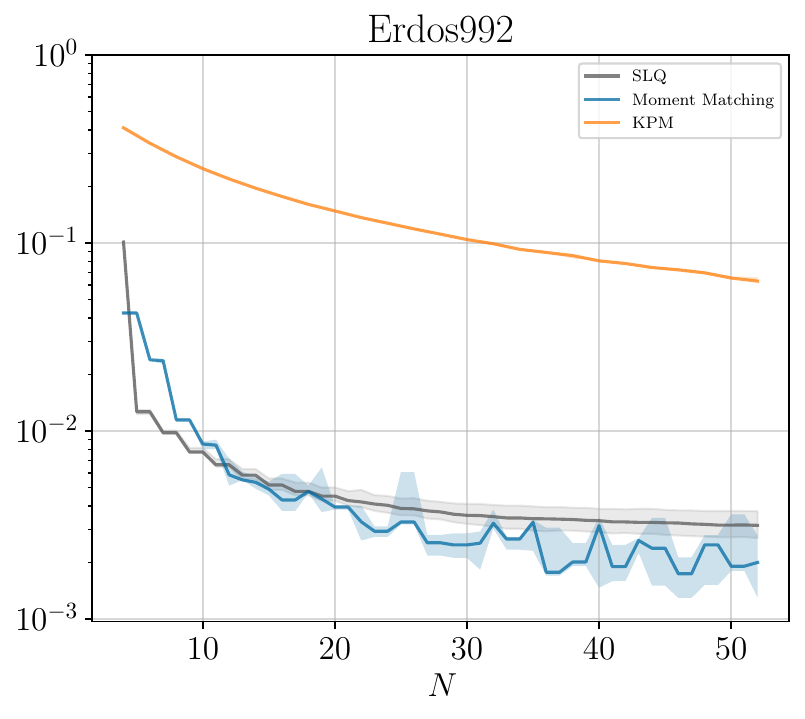}
  \label{fig:sdeRep-Erdos992}
\end{subfigure}%
\begin{subfigure}
  \centering
  \includegraphics[width=0.312\linewidth]{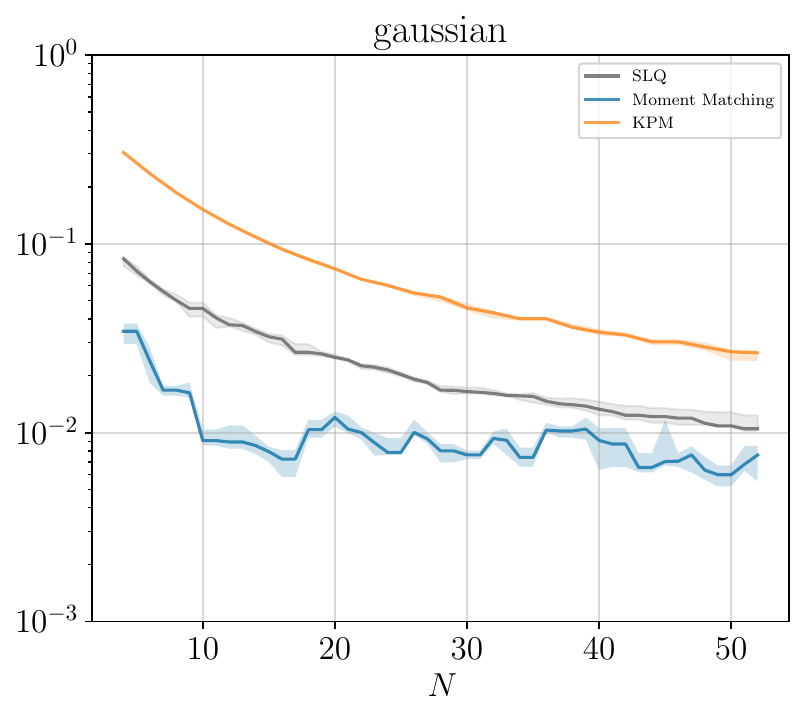}
  \label{fig:sdeRep-gaussian}
\end{subfigure}%
\newline
\begin{subfigure}
  \centering
  \includegraphics[width=0.325\linewidth]{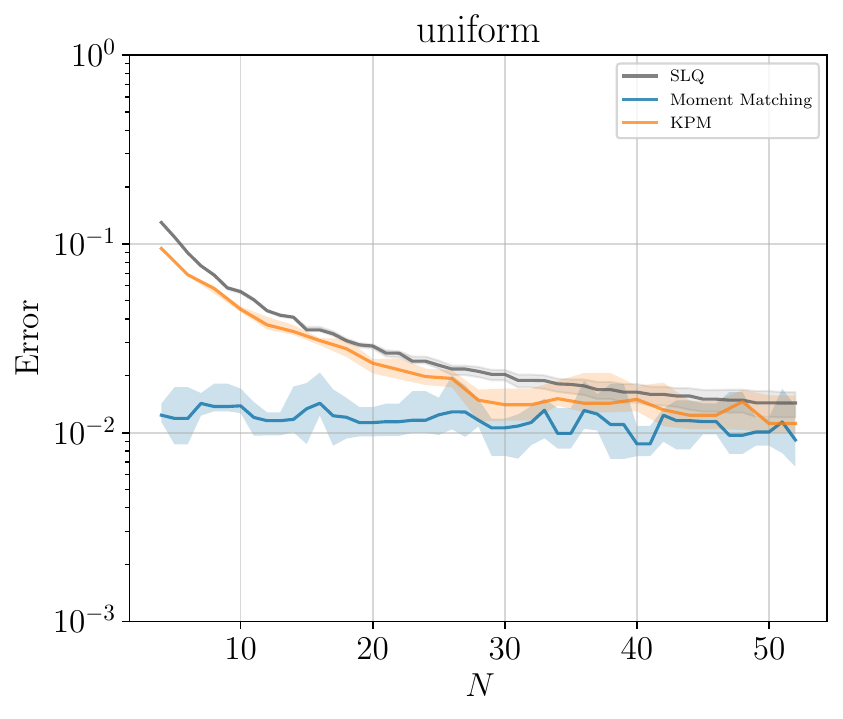}
  \label{fig:sdeRep-uniform}
\end{subfigure}%
\begin{subfigure}
  \centering
  \includegraphics[width=0.312\linewidth]{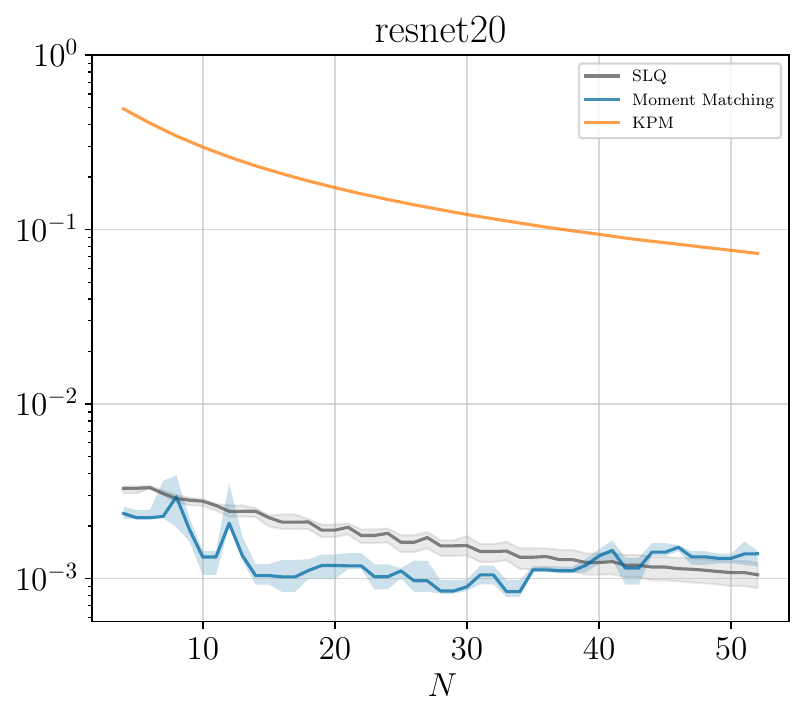}
  \label{fig:sdeRep-resnet20}
\end{subfigure}%
\begin{subfigure}
  \centering
  \includegraphics[width=0.312\linewidth]{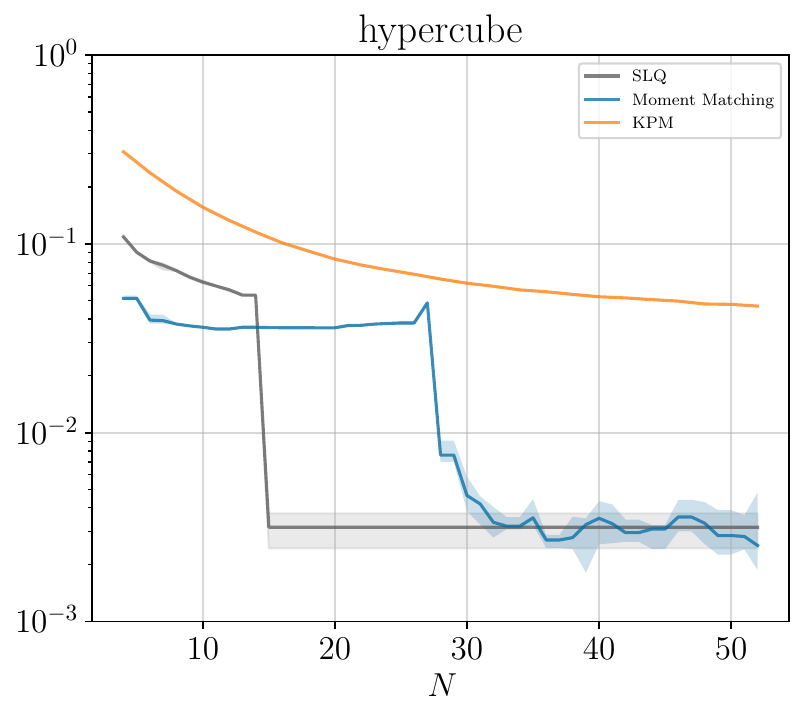}
  \label{fig:sdeRep-hypercube}
\end{subfigure}%
\caption[Exact Matrix-Vector Multplication]{Wasserstein error of density estimate resulting from approximate Chebyshev moment matching method (MM), the Jackson damped kernel polynomial method (KPM) and Stochastic Lanczos Quadrature (SLQ) method. For MM and KPM, Hutchinson's estimator is used to estimate the Chebyshev moments. The x-axis corresponds to the number of moments computed for MM and KPM, and the number of Lanczos iterations used for SLQ. All methods use 5 (random) starting vectors except for resnet20 and hypercube that use 1 starting vector, so the $x$-axis is directly proportional to the number of matrix-vector multiplications used by each method. Each experiment is repeated $10$ times; the solid line represents the median error of the $10$ trials and the shaded regions represent the first and third quartiles.}
\label{fig:error_plots}
\vspace{0.5em}
\hrule 
\vspace{-0em}
\end{figure}

We support our theoretical results by implementing our Chebyshev moment matching method (Algorithm \ref{alg:MomentMatching_full}). When using exact matrix-vector multiplications, the kernel polynomial method (KPM) of Algorithm \ref{alg:SDEpolynomial_full} and the stochastic Lanczos quadrature method (SLQ) studied in  \cite{ChenTrogdonUbaru:2021} have both been confirmed to work well empirically. So, one set of experiments is aimed at comparing these methods to the moment matching method (MM) implemented with exact matrix-vector multiplications. A second set of experiments evaluates the performance of the MM and KPM methods when implemented with approximate matrix-vector multiplies. Specifically, we use our sublinear time randomized method for multiplication by graph adjacency matrices from Section \ref{sec:graphSDE}.

\begin{figure}[!t]
\centering
\begin{subfigure}
  \centering
  \includegraphics[width=0.32\linewidth]{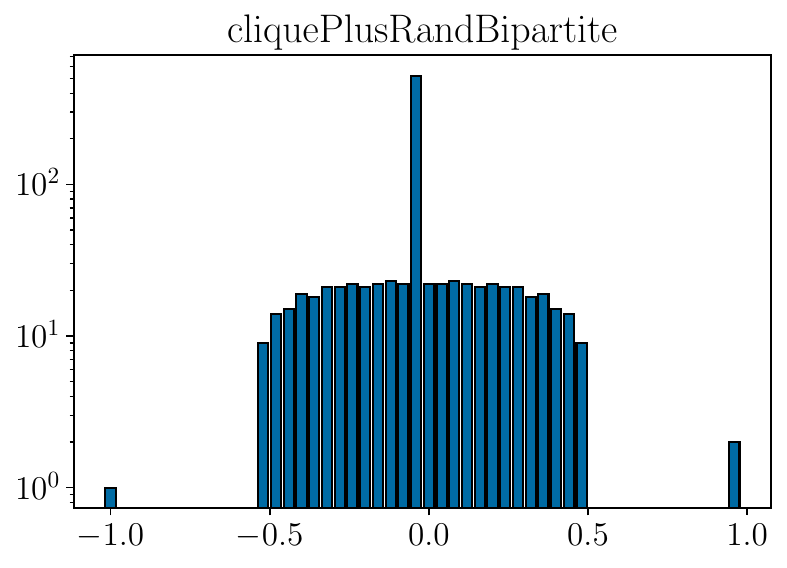}
  \label{fig:eigs-hist-cliquePlusRandBipartite}
\end{subfigure}%
\begin{subfigure}
  \centering
  \includegraphics[width=0.32\linewidth]{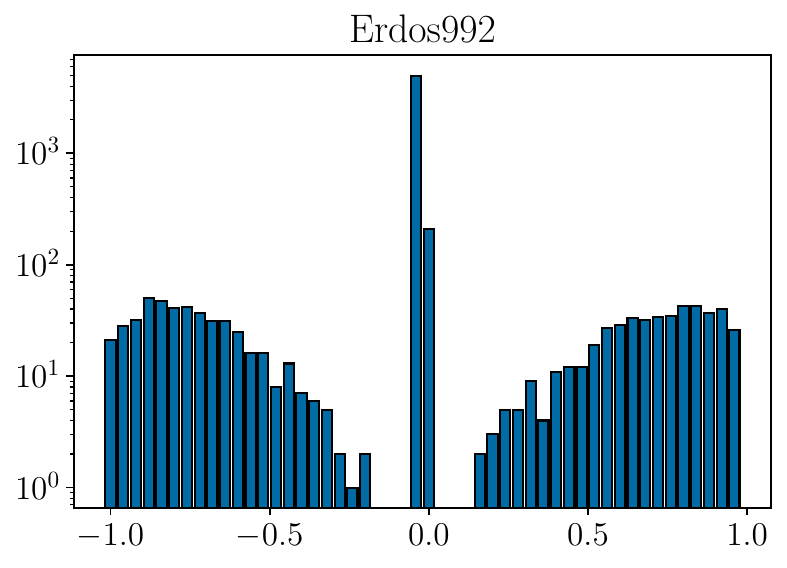}
  \label{fig:eigs-hist-Erdos992}
\end{subfigure}%
\begin{subfigure}
  \centering
  \includegraphics[width=0.32\linewidth]{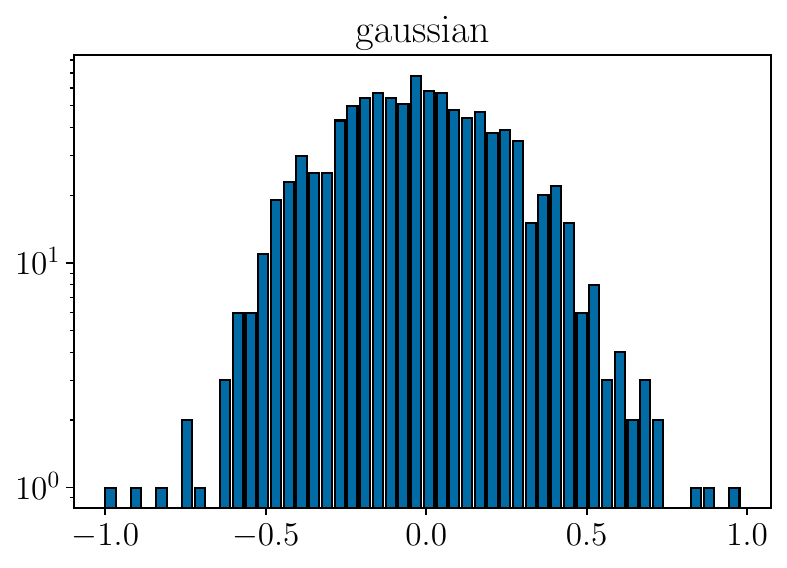}
  \label{fig:eigs-hist-gaussian}
\end{subfigure}%
\newline
\begin{subfigure}
  \centering
  \includegraphics[width=0.32\linewidth]{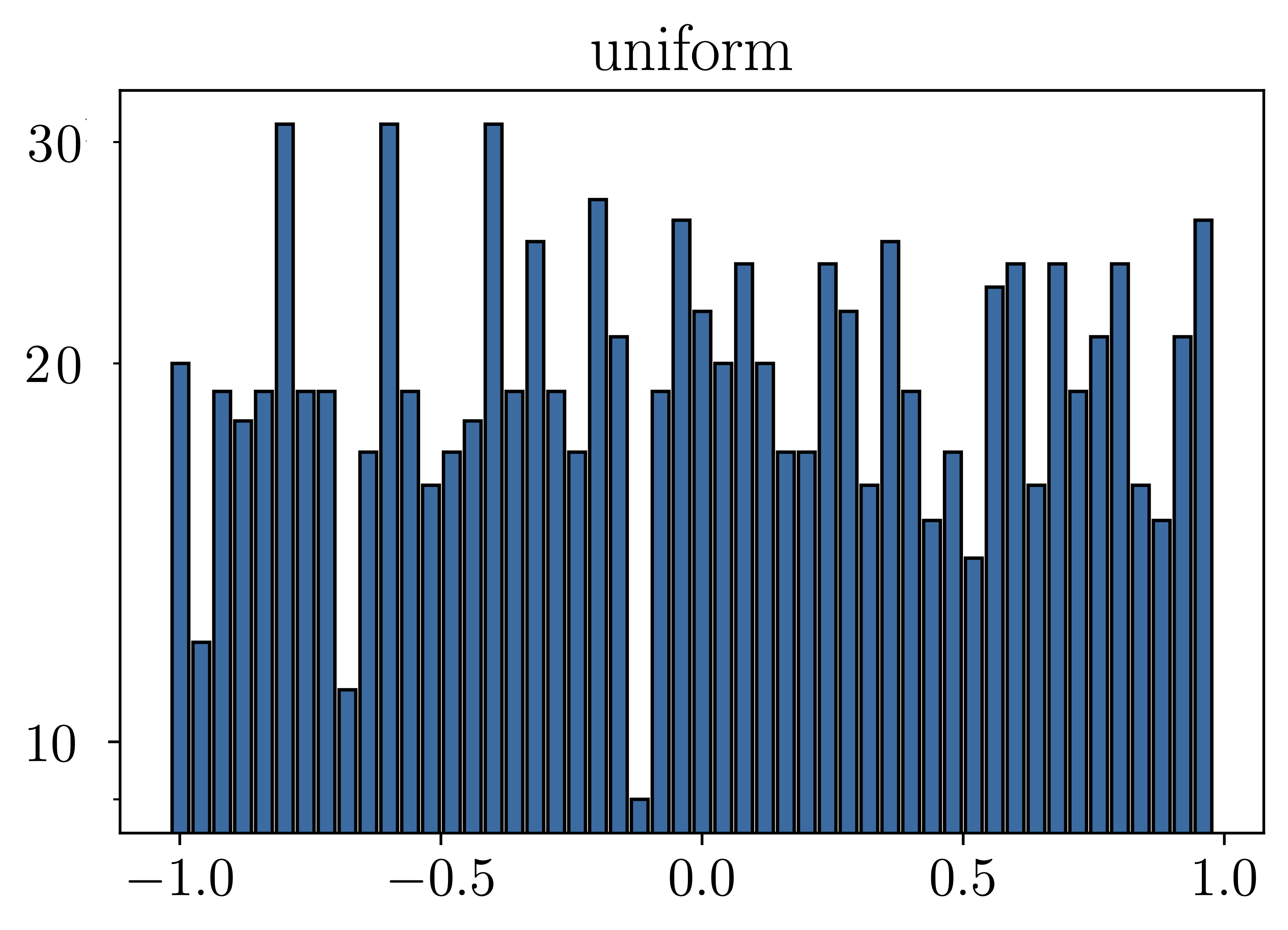}
  \label{fig:eigs-hist-uniform}
\end{subfigure}%
\begin{subfigure}
  \centering
  \includegraphics[width=0.32\linewidth]{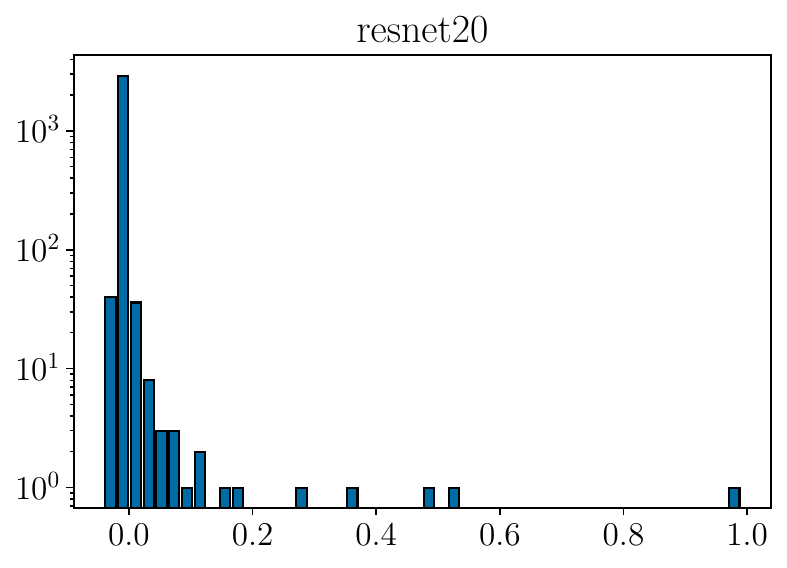}
  \label{fig:eigs-hist-resnet20}
\end{subfigure}%
\begin{subfigure}
  \centering
  \includegraphics[width=0.32\linewidth]{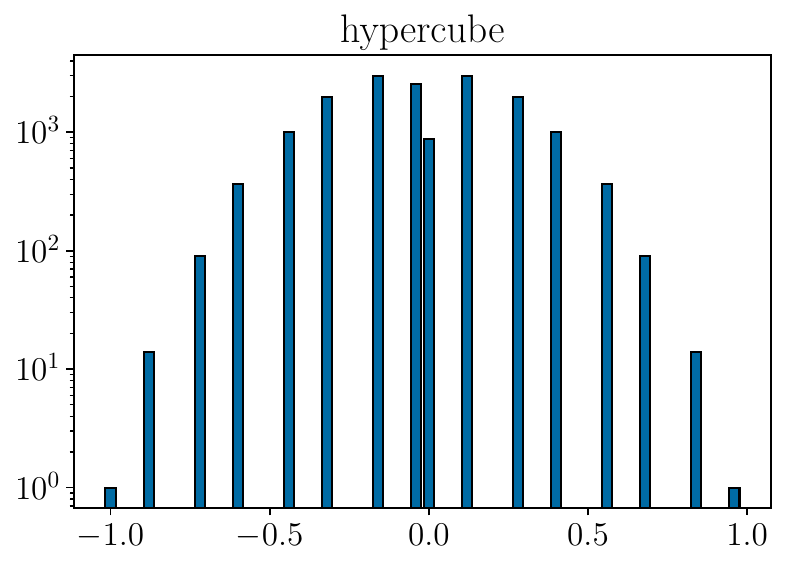}
  \label{fig:eigs-hist-hypercube}
\end{subfigure}%
\caption{Histograms of the eigenvalues of \texttt{cliquePlusRandBipartite}, \texttt{Erdos992}, \texttt{gaussian}, \texttt{uniform}, \texttt{resnet20} and \texttt{hypercube} using 50 equally spaced buckets.}
\label{fig:eigs-hist_plots}
\vspace{0.5em}
\hrule 
\vspace{-0em}
\end{figure}

We consider the normalized adjacency matrix of three graphs, two of which we construct and one which we obtain from a publicly available dataset for sparse matrices: 
\begin{itemize}
	\item \texttt{cliquePlusRandBipartite} is a graph with $10000$ vertices, partitioned into two disconnected components. The first component is a clique with 5000 nodes and the second is a bipartite graph with 2500 vertices in each partition, constructed by sampling each of the $2500^2$ possible edges independently with probability $0.05$. This graph has a normalized adjacency matrix with $\sim5000$ eigenvalues at $0$, two eigenvalues at $1$, one at $-1$ and the rest of its eigenvalues are roughly evenly spread out between $-0.5$ and $0.5$.
	\item \texttt{hypercube} is a 16384 vertex boolean hypercube graph on 14 bit strings.\footnote{A boolean hypercube contains a vertex for each distinct $b$ bit string, and an edge between two vertices if the corresponding strings differ on exactly 1 bit.} Its normalized adjacency matrix has eigenvalues at $-1, -\frac{6}{7}, \frac{-5}{7}, \ldots, 0, \ldots, \frac{6}{7}, 1$. The multiplicity of the $0$ eigenvalue is largest, with eigenvalues closer to $-1$ and $1$ having lower multiplicity.
	\item \texttt{Erdos992} is an undirected graph with 6100 vertices, containing 15030 edges from the sparse matrix suite of \cite{davis2011university}. Its normalized adjacency matrix has $\sim5000$ eigenvalues at $0$, one at $1$ and the rest evenly spread out between $-0.5$ and $0.5$.
\end{itemize}
We consider three additional matrices to evaluate the performance of MM against KPM and SLQ when exact matrix-vector multiplies are used to estimate the Chebyshev moments: 
\begin{itemize}
	\item \texttt{gaussian} is a $1000 \times 1000$ matrix constructed by drawing $n=1000$ Gaussian random variables $\lambda_1, \dotsm, \lambda_n \sim \mathcal{N}(0,1)$ and a random orthogonal matrix $U \in \R^{n \times n}$, and outputting $U\Lambda U^\top$ where $\Lambda$ is a $n \times n$ diagonal matrix with entries $\frac{\lambda_1}{\max_i \lambda_i}, \dots, \frac{\lambda_n}{\max_i \lambda_i}$.
	\item \texttt{uniform} is a $1000 \times 1000$ matrix constructed identically to \texttt{gaussian} except with $\lambda_1, \dots, \lambda_n$ drawn independently and uniformly from the interval $[-1, 1]$.
	\item \texttt{resnet20} is a Hessian for the ResNet20 network \cite{he2016deep} trained on the Cifar-10 dataset. The matrix is $3000 \times 3000$ and its eigenvalues have been normalized to lie between $[-1, 1]$ for our experiments. 
\end{itemize}
For reference, the histogram of the eigenvalues for each matrix are shown in Figure \ref{fig:eigs-hist_plots} by breaking the range of the eigenvalues into 50 equally spaced intervals for each matrix.

In the first set of experiments, we compute the normalized Chebyshev moments ${\tau}_1, \ldots, {\tau}_N$ of each of the six aforementioned matrices using Hutchinson's moment estimator as in Algorithm \ref{alg:hutch_moments}, and, compute a spectral density estimate by passing these moments into Algorithm \ref{alg:MomentMatching_full} for approximate Chebyshev moment matching method (MM)\footnote{We solve the optimization problem from Line \ref{line:regression_mm} by formulating it as a linear program and using an off-the-shelf solver from \texttt{scipy}.} and into Algorithm \ref{alg:SDEpolynomial_full} for the Jackson damped kernel polynomial method (KPM). For KPM we compute the density with $N = 4, 6, 8, 10, \dots, 52$ and for MM we compute it with $N = 4, 5, 6, 7, \dots, 52$. We also compute the density estimate resulting from the stochastic Lanczos quadrature (SLQ) method of \cite{ChenTrogdonUbaru:2021} with $N = 4, 5, 6, 7, \dots, 52$ Lanczos iterations. We use $\ell = 5$ starting vectors (i.e., random vectors in Hutchinson's method, or random restarts of the SLQ method) for each method, except for the large \texttt{resnet20} and \texttt{hypercube} matrices, for which $\ell=1$ random vector is used. Each experiment is repeated 10 times and the Wasserstein-error between the true density and the density estimate are shown in Figure \ref{fig:error_plots}. The results show that MM is more than $10$x more accurate than KPM in almost all cases. The error of MM and SLQ are more comparable, except for \texttt{hypercube}, on which the errors are comparable for larger values of $N$. Both methods show an unusual convergence curve for this matrix, which we believe is related to the sparsify of its spectrum (a small number of distinct eigenvalues).

\begin{figure}[!t]
\centering
\begin{subfigure}
  \centering
  \includegraphics[width=0.325\linewidth]{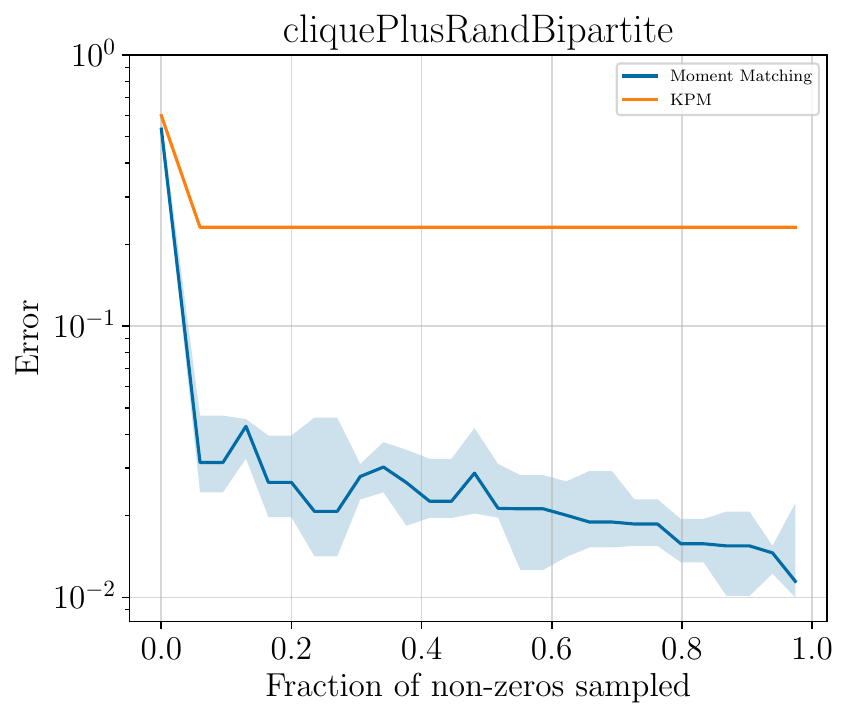}
  \label{fig:colSampling-cliquePlusRandBipartite}
\end{subfigure}%
\begin{subfigure}
  \centering
  \includegraphics[width=0.312\linewidth]{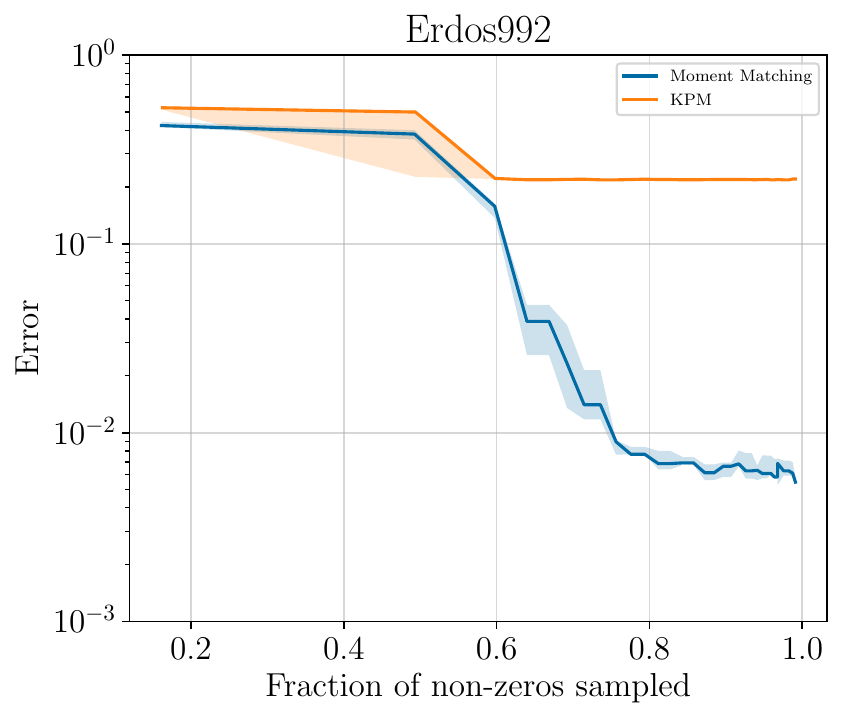}
  \label{fig:colSampling-Erdos992}
\end{subfigure}%
\begin{subfigure}
  \centering
  \includegraphics[width=0.312\linewidth]{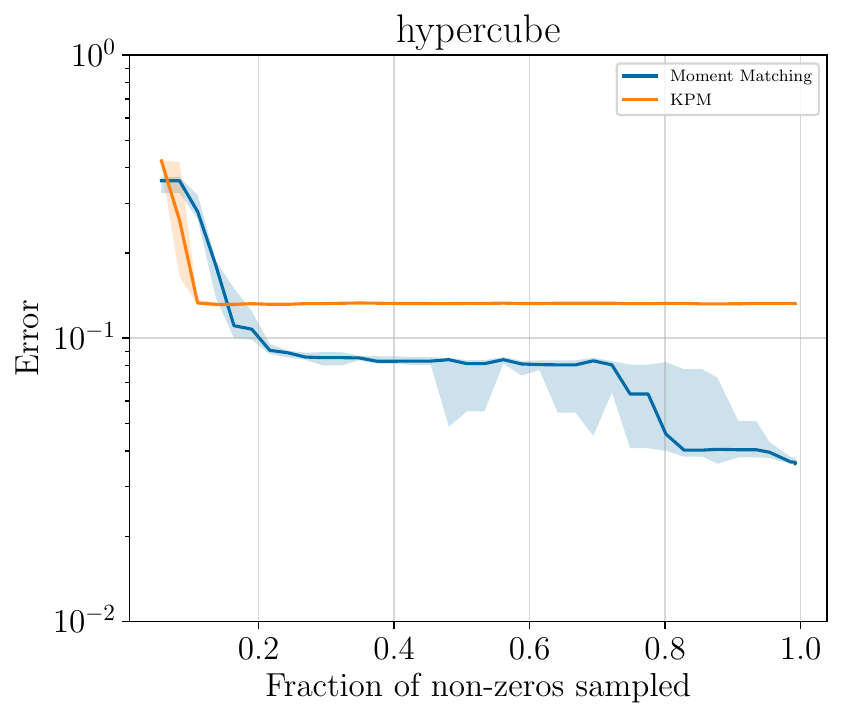}
  \label{fig:colSampling-hypercube}
\end{subfigure}%
\caption{Wasserstein error of density estimate returned by MM and KPM on the \texttt{hypercube}, \texttt{cliquePlusRandBipartite} and \texttt{Erdos992} graphs using approximate matrix-vector multiplications (Algorithm \ref{alg:amvGraphs}) to estimate the Chebyshev moments. For both methods, $N=32$ moments are computed using $5$ random starting vectors for \texttt{cliquePlusRandBipartite} and \texttt{Erdos992}  and $1$ for \texttt{hypercube}. The x-axis corresponds to the average fraction of non-zeros sampled from the matrix and the y-axis is the Wasserstein error from the resulting density estimate. Each experiment is repeated for $10$ trials: the solid line correponds to the median error of the 10 trials and the shaded region corresponds to the first and third quartiles.}
\label{fig:approxMM_plots}
\vspace{0.5em}
\hrule 
\vspace{-0em}
\end{figure}

In our second set of experiments, we test the performance of our randomized sublinear time algorithm (Algorithm \ref{alg:amvGraphs}) for approximate matrix-vector multiplies with normalized graph adjacency matrices. This method is used to estimate Chebyshev moments in Algorithm \ref{alg:MomentMatching_full} (MM) and in Algorithm \ref{alg:SDEpolynomial_full} (KPM).  We compute the normalized Chebyshev moments $\tau_1, \dots, \tau_N$ for $N = 12$ using various values of the oversampling parameter $t$ in the approximate matrix-vector multiplication method. We then compute, for each value of $t$, the average number of non-zero elements of $A$ accessed by the method for each matrix-vector product, which reflects the runtime improvement over a full matrix-vector product. 
Figure \ref{fig:approxMM_plots} plots the Wasserstein error of the density estimate (y-axis) and the average fraction of non-zeros used in each matrix-vector multiplication (x-axis) to estimate the Chebyshev moments used by MM and KPM respectively. 

The results show that the KPM method can achieve error nearly identical to that obtained when using exact matrix-vector multiplications, while only using a small fraction of non-zero entries for each approximate matrix-vector multiplication. Specifically, on the dense \texttt{cliquePlusRandBipartite} graph and even the relatively sparse  \texttt{hypercube} graph, KPM uses less than $15\%$ of the non-zero entries on average to achieve nearly the same error as when using exact multiplies. On \texttt{cliquePlusRandBipartite}, the MM method achieves error close to that of the exact method while using $\sim20\%$ of the non-zero entries on average. On the sparse \texttt{Erdos992} and  \texttt{hypercube} graphs, the MM method requires $\sim80\%$ of the non-zero entries on average to achieve error comparable to exact matrix-vector multiplications. However, it still obtains a good approximation (consistently better than the KPM method) when coarse matrix-vector multiplications are used (i.e., fewer non-zeros are sampled).

\section{Acknowledgements}
We thank Cameron Musco, Raphael Meyer, and Tyler Chen for helpful discussions and suggestions. This research was supported in part by NSF CAREER grants 2045590 and 1652257, ONR Award N00014-18-1-2364, and the Lifelong Learning Machines program from DARPA/MTO.

\bibliography{Refs}

\appendix

\section{The Kernel Polynomial Method}\label{sec:KPM}
In this section we show how to obtain a spectral density estimate based on a version of the kernel polynomial method that also approximates Chebyshev polynomial moments: $\tr(T_0(A)),\dots, \tr(T_N (A))$. We again rely on Jackson's classic work on universal polynomial approximation bounds for Lipschitz functions: we take advantage of the fact that Jackson’s construction of such polynomials is both linear and preserves positivity \cite{Jackson:1912}. 

\subsection{Idealized Kernel Polynomial Method}\label{sec:JacksonKernelWasserstein}
As an alternative to the moment matching method presented in Section \ref{sec:approxCheby_mm}, a natural approach to using computed Chebyshev moments is to construct a truncated Chebyshev series approximation to $s$ (see Definition \ref{def:cheby_series}). To do so, note that the scaled moments $\frac{1}{n}\tr(\bar{T}_0(A)), \ldots, \frac{1}{n}\tr(\bar{T}_{N}(A))$ are exactly equal to the first $N$ Chebyshev series coefficients of $s/w$, where $w(x) = \frac{1}{\sqrt{1 - x^2}}$ is as defined in Section \ref{sec:Preliminaries}. Specifically, the eigenvalues of $\bar{T}_k(A)$ are equal to $\bar{T}_k(\lambda_1),\ldots,\bar{T}_k(\lambda_n)$, where $\lambda_1, \ldots, \lambda_n$ are the eigenvalues of $A$.  Since the trace of a diagonalizable matrix is the sum of its eigenvalues, we have $\frac{1}{n}\tr(\bar{T}_k(A)) = \frac{1}{n}\sum_{i=1}^n \bar{T}_k(\lambda_i) = \langle s,\bar{T}_k\rangle = \langle s/w,w\cdot\bar{T}_k\rangle$.

After using the scaled Chebyshev moments to construct a truncated Chebyshev series for $s/w$, i.e. a degree $N$ polynomial approximation, we can then multiply the final result by $w$ to obtain an approximation to $s$. Unfortunately, there are two issues with this approach: 1) it is difficult to analyze the quality of the Chebyshev series approximation, since $s$ is not a smooth function, and 2) this approximation will not in general be a non-negative function, which is a challenge because our goal is to find \emph{probability density} that well approximates $s$.

\begin{wrapfigure}{r}{0.41\linewidth}
	\vspace{-1em}
	\centering
	\includegraphics[width=.41\textwidth]{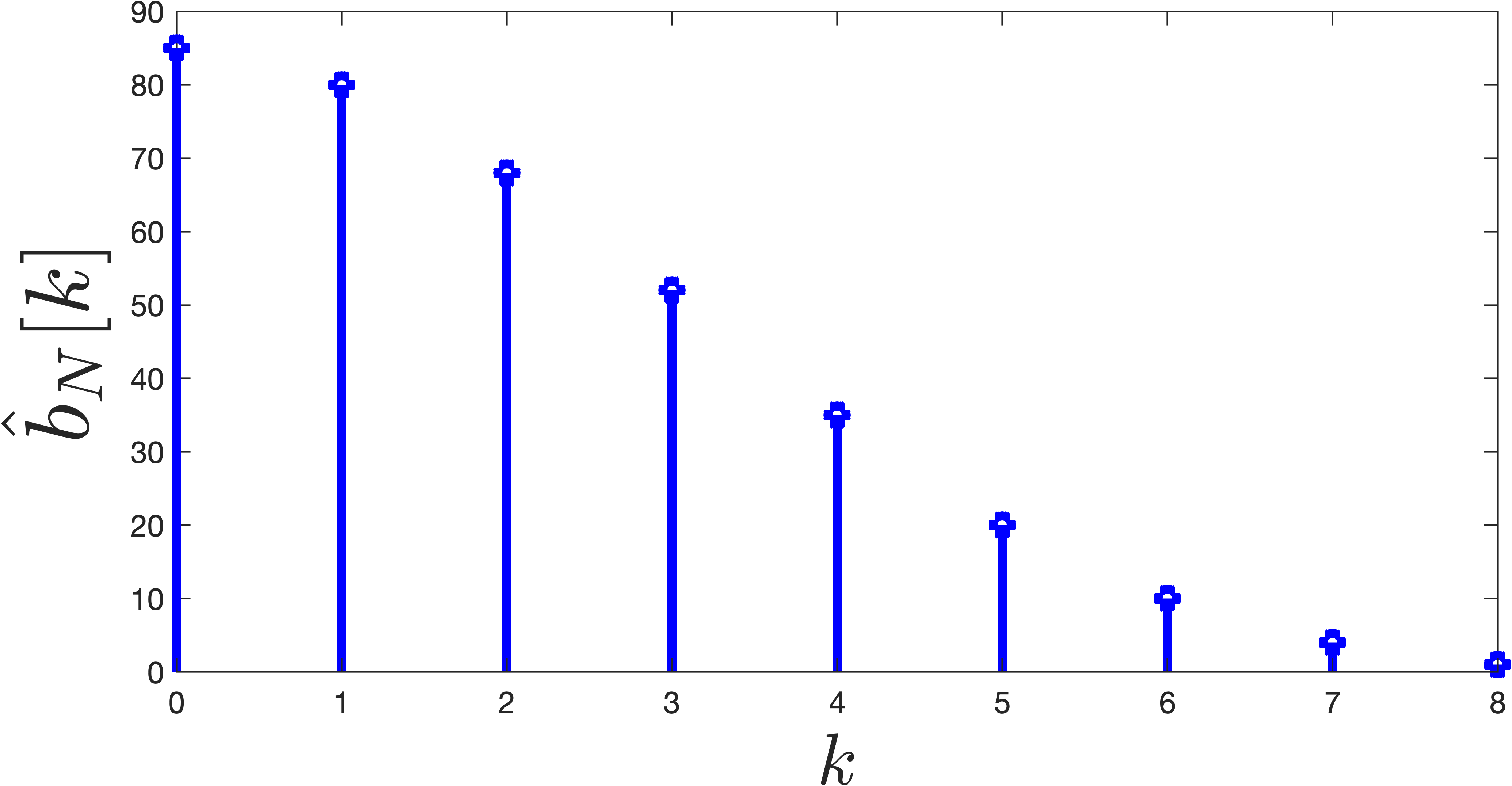}
	
	\vspace{-1em}
	\caption{Jackson coefficients for $N = 8$.}
	\label{fig:bhat_intro}
	\vspace{-.5em}
\end{wrapfigure}
A common approach for dealing with the second issue is to instead use a \emph{damped Chebyshev expansion} \cite{weisse2006kernel}, where the Chebyshev coefficients are slightly reweighted to ensure that the resulting polynomial is always non-negative. Such non-negativity preserving damping schemes follow from the connection between Chebyshev and Fourier series: we refer the reader to Appendix \ref{appendix:jacksonKernel} for details. In short, by the convolution theorem, Fourier series truncation corresponds to convolution with a function whose Fourier support is bounded. If this function is also non-negative, convolution preserves non-negativity of the function being approximated, leading to truncated series that is guaranteed to be positive.
One such damping schemes was introduced in classic work of Jackson \cite{Jackson:1912}. For any positive integer $z$, let $N = 4z$. Then, for $k = 0, \ldots, N$, define the coefficient
\begin{align}
	\label{bump_coeffs_intro}
	\hat{b}_{N}[k] &= \sum_{j=-\frac{N}{2}-1}^{\frac{N}{2}+1-k} \left(\frac{N}{2}+1-|j|\right)\cdot\left(\frac{N}{2}+1-|j+k|\right). 
\end{align}
While \eqref{bump_coeffs_intro} may look opaque, $\hat{b}_{N}[0], \ldots, \hat{b}_{N}[N]$ are actually equal to the result of a simple discrete convolution operation. Let $g\in \Fc(\Z, \R)$ have $g[j] = 1$ for $j = -z, \ldots, z$, and $g[j]=0$ otherwise. Then let $\hat{b}_{N} = (g*g)*(g*g)$ and $\hat{b}_{N}[0], \ldots, \hat{b}_{N}[N]$ be the values corresponding to non-negative indices.\footnote{This formulation allows the coefficients to be easily computed in most high-level programming languages. E.g., in \texttt{MATLAB} we can compute \texttt{g = ones(2*z+1,1); c = conv(conv(g,g),conv(g,g)); b = c(N+1:2*N+1);}.} See Fig. \ref{fig:bhat_intro} for an illustration of these coefficients. They are all positive and $\hat{b}_{N}[0] > \hat{b}_{N}[1] > \ldots > \hat{b}_{N}[N]$.
Jackson suggests approximating a function using the following truncation based on these coefficients:
\begin{definition}[Jackson damped Chebyshev series]
	Let $f\in \FcOne$ have Chebyshev series $\sum_{k = 0}^\infty \iprod{f}{w\cdot \bar{T}_k} \cdot \bar{T}_k$. The Jackson approximation to $f$ is a degree $N$ polynomial $\bar{f}_N$ obtained via the following truncation with modified coefficients:
	\begin{align}\label{eqn:jacksno_trun}
		\bar{f}_N(x) \eqdef \sum_{k = 0}^N \frac{\hat{b}_{N}[k]}{\hat{b}_{N}[0]} \iprod{f}{w\cdot\bar{T}_k}\bar{T}_k(x).
	\end{align}
\end{definition}

Note that $\hat{b}_{N}[0]/\hat{b}_{N}[0] = 1$, and all other terms are strictly less than one. It is not hard to show this damped series preserves positivity. We prove the following fact as Lemma \ref{lem:non-negativity} in the appendix:
\begin{fact}\label{fact:positivity}
	If $f: [-1,1]\rightarrow \R^{\geq 0}$ is a non-negative function, then the polynomial $\bar{f}_N(x)$ defined in \eqref{eqn:jacksno_trun} is non-negative for all $x \in [-1, 1]$.
\end{fact}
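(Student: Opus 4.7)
The plan is to pass from the variable $x \in [-1,1]$ to $\theta \in [-\pi, \pi]$ via $x = \cos \theta$, which turns the Chebyshev expansion into a Fourier cosine expansion, and then recognize $\bar f_N(\cos \theta)$ as the circle convolution of $f(\cos\theta)$ against a kernel whose non-negativity follows immediately from the fact that $\hat b_N$ is defined as a four-fold discrete self-convolution.

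Concretely, I would first use the identity $T_k(\cos\theta) = \cos(k\theta)$ and the change of variables $dx = -\sin\theta\, d\theta$ (which cancels the weight $w(\cos\theta) = 1/\sin\theta$) to show that, setting $F(\theta) := f(\cos\theta)$ and extending evenly and $2\pi$-periodically to the circle $\mathbb{T}$, the normalized Chebyshev coefficient $\langle f, w\cdot \bar T_k\rangle$ equals a constant multiple of $\int_{-\pi}^{\pi} F(\theta)\cos(k\theta)\,d\theta$. Hence $\bar f_N(\cos\theta)$ is the trigonometric polynomial whose $k$-th (two-sided) Fourier coefficient equals $\widehat F[k]$ multiplied by $\hat b_N[|k|]/\hat b_N[0]$ for $|k|\le N$, and $0$ otherwise. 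By the convolution theorem on $\mathbb{T}$, this means
\begin{equation*}
\bar f_N(\cos\theta) \;=\; (K_N * F)(\theta),
\end{equation*}
where $K_N$ is the kernel on $\mathbb{T}$ with Fourier coefficients $\widehat{K_N}[k] = \hat b_N[|k|]/\hat b_N[0]\cdot \mathbf{1}[|k|\le N]$, and $*$ denotes circle convolution.

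Next I would show $K_N \ge 0$ pointwise. By definition $\hat b_N = (g*g)*(g*g)$ where $g[j] = \mathbf{1}[|j|\le z]$ is real and even on $\mathbb Z$. Its discrete-time Fourier transform is $\hat g(\theta) = \sum_{j=-z}^{z} e^{ij\theta} = \sin((z+\tfrac12)\theta)/\sin(\tfrac{\theta}{2})$, which is real-valued because $g$ is symmetric. By the discrete convolution theorem applied twice,
\begin{equation*}
\sum_{k\in\mathbb Z} \hat b_N[k]\, e^{ik\theta} \;=\; \hat g(\theta)^{4} \;\ge\; 0.
\end{equation*}
Since $\hat b_N$ is supported on $\{-N,\ldots,N\}$, the left-hand side is exactly $\hat b_N[0] \cdot K_N(\theta)$ up to a normalization from the Fourier conventions used to pass between $\mathbb Z$ and $\mathbb T$. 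Thus $K_N(\theta) \ge 0$ for all $\theta$.

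Finally, since $f \ge 0$ on $[-1,1]$ implies $F(\theta) = f(\cos\theta) \ge 0$ on $\mathbb T$, the convolution $(K_N * F)(\theta)$ is a non-negative integral of non-negative quantities. Because this convolution equals $\bar f_N(\cos\theta)$, and $\cos\theta$ ranges over all of $[-1,1]$ as $\theta$ varies, we conclude $\bar f_N(x) \ge 0$ for every $x \in [-1,1]$. The only real obstacle is bookkeeping: keeping the Chebyshev normalization constants (the differing weights for $k=0$ vs.\ $k \ge 1$ in $\iprod{T_k}{w T_k}$), the factor $1/\hat b_N[0]$, and the Fourier normalization between $\mathbb Z$ and $\mathbb T$ mutually consistent so that the coefficient identification giving $\bar f_N(\cos\theta) = (K_N * F)(\theta)$ is exact; once that identification is correctly stated, the positivity argument is immediate.
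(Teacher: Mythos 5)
Your proposal is correct and follows essentially the same route as the paper's own proof (Lemma \ref{lem:non-negativity} via Theorems \ref{thm:jacksons_theorem} and \ref{thm:jacksons_theorem_algebraic}): pass to the circle via $x=\cos\theta$, identify the damped Chebyshev series of $f$ with the Jackson-damped Fourier series of $F(\theta)=f(\cos\theta)$, and observe that this is convolution with a non-negative kernel. The only cosmetic difference is direction: the paper starts from the kernel $\left(\sin(m\theta/2)/\sin(\theta/2)\right)^4$, whose non-negativity is manifest as a fourth power, and derives the coefficients $\hat b_N[k]$ from it, whereas you start from the coefficient definition $\hat{b}_N=(g*g)*(g*g)$ and recover the same fourth-power kernel via the convolution theorem.
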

Beyond preserving non-negativity, as claimed in Fact \ref{fact:Jacksons}, the Jackson damped Chebyshev approximation is more well-known for the fact that it provably provides a good \emph{uniform} polynomial approximation to any Lipschitz function.
For completeness, we give a proof of this fact as Theorem \ref{thm:jacksons_theorem_algebraic} in the appendix.
With Facts \ref{fact:positivity} and \ref{fact:Jacksons} in place, we are ready to introduced the basic kernel polynomial method for approximating the spectral density $s$ as Algorithm \ref{alg:SDEpolynomial_ideal}. This algorithm is identical to the ``Jackson Kernel'' KPM from \cite{weisse2006kernel}. Recall that, for now, we assume we have access to \emph{exact} Chebyshev moment of the spectral density $s$ for our matrix $A$. In Section \ref{sec:SDE} we prove that Algorithm \ref{alg:SDEpolynomial_ideal} is robust to using approximate moments.

\begin{algorithm}[h]\caption{Idealized Jackson Damped Kernel Polynomial Method}\label{alg:SDEpolynomial_ideal}
	\begin{algorithmic}[1]
		\Require Symmetric $A\in \R^{n\times n}$ with spectral density $s : [-1, 1] \rightarrow \R^{\geq 0}$,  degree $N \in 4\N^+$.
		\Ensure Density function $q : [-1, 1] \rightarrow \R^{\geq 0}$.
		\State For $k=0, \ldots, N$ compute $\tau_k = \frac{1}{n}\tr(\bar{T}_k(A)) = \iprod{s}{\bar{T}_k}$. 
		\State For $k=0, \ldots, N$ compute $\hat{b}_{N}[k]$ as is \eqref{bump_coeffs_intro}.
		\State Return $q = w\cdot\sum_{k=0}^N \frac{\hat{b}_{N}[k]}{\hat{b}_{N}[0]} \cdot \tau_k \cdot \bar{T}_k$.
	\end{algorithmic}
\end{algorithm}

\begin{lemma}\label{lem:idealized_wass_bound}
	If $N \geq \frac{18}{\epsilon}$, then the function $q : [-1, 1] \rightarrow \R^{\geq 0}$ returned by Algorithm \ref{alg:SDEpolynomial_ideal} is a probability density and satisfies:\vspace{-1em}
	\begin{align*}
		W_1(s,q) \leq \epsilon.
	\end{align*}
\end{lemma}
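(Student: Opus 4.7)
The plan is to check the two assertions of the lemma separately: first that $q$ is a probability density on $[-1,1]$, then that $W_1(s,q)\le\epsilon$. Both pieces will ultimately reduce to Facts~\ref{fact:positivity} and~\ref{fact:jacksons} via the right rearrangement of sums.

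For the density claim, I will write $q(x)=w(x)\,p(x)$ with $p(x):=\sum_{k=0}^N \frac{\hat{b}_N[k]}{\hat{b}_N[0]}\,\tau_k\,\bar{T}_k(x)$, and substitute $\tau_k=\frac{1}{n}\sum_{i}\bar{T}_k(\lambda_i)$ to express $p(x)=\frac{1}{n}\sum_i K_N(x,\lambda_i)$ in terms of the Jackson kernel
\[ K_N(x,y) \;=\; \sum_{k=0}^N \frac{\hat{b}_N[k]}{\hat{b}_N[0]}\,\bar{T}_k(x)\bar{T}_k(y). \]
Non-negativity of $K_N$ on $[-1,1]^2$ is exactly the content of Fact~\ref{fact:positivity}: the integral representation $\bar{f}_N(x)=\int K_N(x,y)w(y)f(y)\,dy$ shows that ``$\bar{f}_N\ge 0$ for every $f\ge 0$'' forces $K_N\ge 0$. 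Hence $p\ge 0$, and combined with $w\ge 0$ this gives $q\ge 0$. For the normalization I will use the orthonormality relation $\iprod{\bar{T}_j}{w\bar{T}_k}=\delta_{jk}$ together with $\bar{T}_0=1/\sqrt{\pi}$ to collapse $\int q\,dx=\iprod{p}{w}$ to its $k=0$ term, which equals $\sqrt{\pi}\,\tau_0$. Since $\tau_0=\iprod{s}{\bar{T}_0}=1/\sqrt{\pi}$ (as $s$ integrates to $1$), this equals $1$.

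For the Wasserstein bound I will invoke the Kantorovich-Rubinstein dual representation~\eqref{eq:wass_def}, so it suffices to control $\bigl|\int f(x)(s(x)-q(x))\,dx\bigr|$ uniformly over $1$-Lipschitz $f$. Fix such an $f$ and let $\bar{f}_N$ be its Jackson damped Chebyshev approximation from~\eqref{eqn:jacksno_trun}. The computational heart of the proof is the moment-matching identity
\[ \int_{-1}^1 f(x)\,q(x)\,dx \;=\; \int_{-1}^1 \bar{f}_N(y)\,s(y)\,dy, \]
which I will verify by expanding both sides in the normalized Chebyshev basis. The left side equals $\sum_{k=0}^N \frac{\hat{b}_N[k]}{\hat{b}_N[0]}\,\tau_k\,\iprod{f}{w\bar{T}_k}$ by pulling the definition of $p$ out of the integral and recognizing $\iprod{f}{w\bar{T}_k}$; the right side equals the same sum because integrating $\bar{f}_N=\sum_k\frac{\hat{b}_N[k]}{\hat{b}_N[0]}\iprod{f}{w\bar{T}_k}\bar{T}_k$ against $s$ returns $\tau_k$ termwise. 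Rewriting, $\int f(s-q)=\int (f-\bar{f}_N)\,s$, and since $s$ is a probability density this is at most $\|f-\bar{f}_N\|_\infty$, which Fact~\ref{fact:jacksons} (with Lipschitz constant $1$) bounds by $18/N$. The hypothesis $N\ge 18/\epsilon$ then gives $\le\epsilon$, and taking the supremum over $f$ yields $W_1(s,q)\le\epsilon$.

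The hard part will be the symmetry identity $\int f\,q=\int \bar{f}_N\,s$: conceptually this is where the damping factors $\hat{b}_N[k]/\hat{b}_N[0]$ on $q$ and on $\bar{f}_N$ must appear exactly once on each side, so that a single application of orthogonality produces the same symmetric sum. A minor related subtlety is that Fact~\ref{fact:positivity} is stated for genuine functions while $s$ is a sum of Diracs; I would side-step this by working with $K_N$ directly, so that the non-negativity of $q$ follows from pointwise non-negativity of the Jackson kernel rather than from interpreting $s/w$ as a function.
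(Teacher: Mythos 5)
Your proposal is correct and follows essentially the same route as the paper: both halves rest on Facts~\ref{fact:positivity} and~\ref{fact:jacksons}, and the heart of the Wasserstein bound is the same identity $\iprod{f}{s-q} = \iprod{f - \bar{f}_N}{s}$, which the paper derives via self-adjointness of the operators $\Wc$, $\Bc_N$, $\bar{\Tc}$, $\bar{\Tc}^*$ and you verify by termwise expansion in the normalized Chebyshev basis. Your positivity argument via the kernel $K_N(x,\lambda_i)\geq 0$ is a slightly more careful treatment of the fact that $s$ is a sum of Diracs (the paper applies Fact~\ref{fact:positivity} directly to ``$s/w$''), but the substance of the two proofs is the same.
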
 
\begin{proof}
	We first prove that $q$ is a probability density. To see that it is positive, note that $h = \sum_{k=0}^N \frac{\hat{b}_{N}[k]}{\hat{b}_{N}[0]} \cdot \tau_k \cdot \bar{T}_k$ is a Jackson approximation to the positive function $s/w$, so is must be non-negative by Fact \ref{fact:positivity}. Since $w$ is also non-negative, we conclude that $q = w\cdot h$ is as well. Then we consider $q$'s integral. We need to show that $\int_{-1}^1q(x)dx = 1 = \int_{-1}^1s(x)dx$. Since $\bar{T}_0$ is a scaling of the constant function, it suffices to show that $\langle \bar{T}_0,q\rangle = \langle \bar{T}_0,s\rangle$. We have:
		\begin{align*}
		\langle \bar{T}_0,q\rangle = \tau_0\cdot \langle \bar{T}_0,w\cdot \bar{T}_0\rangle = \langle \bar{T}_0,s\rangle\cdot 1.
	\end{align*}
The first step follows directly from the orthogonality of the Chebyshev polynomials under the weight function $w$, which implies that $\langle \bar{T}_0,w\cdot \bar{T}_k\rangle = 0$ for all $k > 0$. We also use that $\langle \bar{T}_0, w \cdot \bar{T}_0\rangle = 1$.

Next, we prove the approximation guarantee. Referring to the formulation of Wasserstein-1 distance from equation \eqref{eq:wass_def}, we have that $W_1(s,q) = \sup \langle s-q,f\rangle$ where $f$ is a $1$-Lipschitz function. So, we want to show that any $1$-Lipschitz $f$ has small inner product with the difference between $s$ and its degree-$N$ Jackson approximation, $q$. To do so, we show that this inner product is actually \emph{exactly equal to} the inner product between $s$ and a degree-$N$ Jackson approximation to $f$. Since $f$ is 1-Lipschitz, this approximation is guaranteed to be have small error. This key equivalency follows because, like a standard Chebyshev series approximation, the Jackson approximation can be viewed as the output of a symmetric linear operator applied to $s$. 

Formally, we introduce notation for several linear operators needed to analyze \eqref{eqn:jacksno_trun}. Let $\bar{\Tc} : \FcOne \rightarrow \FcN$ be the operator mapping a function $f \in \FcOne$ to its inner-product with the normalized Chebyshev polynomials. Define the transpose operator $\bar{\Tc}^{*} : \FcN \rightarrow \FcOne$ to satisfy $\iprod{\bar{\Tc}f}{g} = \iprod{f}{\bar{\Tc}^*g}$ for any $g \in \FcN$. Concretely, for $i \in \N$ and $x \in [-1, 1]$,
\begin{align}
	[\bar{\Tc} f][i] \eqdef \int_{-1}^1 \bar{T}_i(x)f(x)dx &&\text{and} &&&[\bar{\Tc}^{*} g](x) \eqdef \sum_{i = 0}^\infty \bar{T}_i(x)g[i].
\end{align} 
We also define operators $\Wc:\FcOne \rightarrow \FcOne$ and $\Ic : \FcN \rightarrow \FcN$ as follows:
\begin{align*}
	[\Wc f](x) \eqdef w(x)f(x) = \frac{1}{\sqrt{1 - x^2}}f(x)  &&\text{and} &&& [\Ic g][i] \eqdef g[i].
\end{align*} 
Note that $\Ic$ is an identity operator. For any $N\in 4\N$, we define $\Bc_N : \FcN \rightarrow \FcN$  as:
\begin{align*}
	[\Bc_Ng](i) \eqdef \begin{cases}
		\frac{\hat{\beta}_N[i]}{\hat{\beta}_N[0]}g(i) & \text{for } 0\leq i\leq N \\
		0 & i > N.
	\end{cases}
\end{align*} 
The operators $\Wc$, $\Ic$, and $\Bc_N$ are all commutative with respect to the inner-products in their respective spaces. Specifically, for $f_1, f_2 \in \FcOne$ and $g_1, g_2 \in \FcN$, $\iprod{f_1}{\Wc f_2} = \iprod{\Wc f_1}{f_2}$, $\iprod{g_1}{\Ic g_2} = \iprod{\Ic g_1}{g_2}$, and $\iprod{g_1}{\Bc_N g_2} = \iprod{\Bc_N g_1}{g_2}$. Also note that by orthogonality of the Chebyshev polynomials under $w$, $\bar{\Tc}^* \bar{\Tc}\Wc$ is the identity operator on $\FcOne$ and so is $\Wc\bar{\Tc}^* \bar{\Tc}$.

With these operators defined, the remainder of the proof is short. We have via direct calculation:
\begin{align*}
	\iprod{f}{s-q} &= \iprod{f}{s - \Wc \bar{\Tc}^* \Bc_N \bar{\Tc} s} \\
	&= \iprod{f}{\Wc \bar{\Tc}^* (\Ic - \Bc_N) \bar{\Tc} s} = \iprod{\bar{\Tc}^* (\Ic - \Bc_N) \bar{\Tc} \Wc f}{s} = \iprod{f - \bar{\Tc}^* \Bc_N \bar{\Tc} \Wc f}{s}.
\end{align*}
Note that $\bar{\Tc}^* \Bc_N \bar{\Tc} \Wc f$ is exactly the degree-$N$ Jackson approximation to $f$. So by Fact \ref{fact:jacksons}, if $f$ is a $1$-Lipschitz function, $\|f - \bar{\Tc}^* \Bc_N \bar{\Tc} \Wc f \|_\infty \leq 18/N$. Since $s$ is a non-negative function that integrates to $1$, it follows that $\iprod{f}{s-q} = \iprod{f - \bar{\Tc}^* \Bc_N \bar{\Tc} \Wc f}{s} \leq 18/N$. Since $W(s,q) = \sup_{\text{1-Lipschitz }f} \iprod{f}{s-q}$, we conclude that $W(s,q) \leq \epsilon$ as long as as long as $N \geq 18/\epsilon$.
\end{proof}
\textbf{Remark.} Given access to the Chebyshev polynomial moments, $\tr(\bar{T}_0(A)), \ldots, \tr(\bar{T}_N(A))$, Algorithm \ref{alg:SDEpolynomial_ideal} can be implemented in $O(1/\epsilon)$ additional time. The function it returns is an $O(1/\epsilon$) degree polynomial times the closed form function $w$. The polynomial can be represented as a sum of Chebyshev polynomials, or converted to standard monomial form in $O(1/\epsilon^2)$ time. The function is easily plotted or integrated over a range -- see discussion around Fact \ref{eqn:chebyshev_integral} for more details.
 
\subsection{Full Kernel Polynomial Method}
\label{sec:SDE}
Since it is not possible to efficiently compute the exact Chebyshev polynomial moments, we need to show that the kernel polynomial method can work with  approximations to these moments, computed e.g. using a stochastic trace estimator as described in Section \ref{sec:SDE_approx}. Here, we first prove a general result on the accuracy of approximation needed to ensure we obtain a good spectral density estimation. Specifically, we analyze the following ``robust'' version of Algorithm \ref{alg:SDEpolynomial_ideal}.

\begin{algorithm}[h]\caption{Jackson Damped Kernel Polynomial Method}\label{alg:SDEpolynomial_full}
	\begin{algorithmic}[1]
		\Require Symmetric $A\in \R^{n\times n}$ with spectral density $s : [-1, 1] \rightarrow \R^{\geq 0}$, degree parameter $N \in 4\N^+$, algorithm $\mathcal{M}(A)$ that computes moment approximations $\tilde{\tau}_1, \ldots, \tilde{\tau}_N$ with the guarantee that $|\tilde{\tau}_k - \frac{1}{n}\tr(\bar{T}_k(A))| \leq 1/N^2$ for all $k$. 
		\Ensure Density function $q : [-1, 1] \rightarrow \R^{\geq 0}$.
		\State For $k=1, \ldots, N$ use $\mathcal{M}$ to compute $\tilde{\tau}_1, \ldots, \tilde{\tau}_N$ as above. Set $\tilde{\tau}_0 = 1/\sqrt{\pi}$. 
		\State For $k=0, \ldots, N$ compute $\hat{b}_{N}[k]$ as is \eqref{bump_coeffs_intro}.
		\State Compute polynomial $\tilde{s}_N = w\cdot\sum_{k=0}^N \frac{\hat{b}_{N}[k]}{\hat{b}_{N}[0]} \cdot \tilde{\tau}_k \cdot \bar{T}_k$.
		\State {Return} the probability density
			$q = \left(\tilde{s}_N + \frac{w\sqrt{2}}{N\sqrt{\pi}}\right)/\left(1 + \frac{\sqrt{2\pi}}{N}\right)$.
	\end{algorithmic}
\end{algorithm}
The final transformation of $\tilde{s}_N$ in Line 4 of Algorithm \ref{alg:SDEpolynomial_full} ensures that we return a proper density, since error incurred by approximating $\frac{1}{n}\tr(\bar{T}_k(A)) = \iprod{s}{\bar{T}_k}$ could leave the function with negative values. So, we shift by a small positive function, and rescale to maintain unit integral. Our main result on the error of Algorithm \ref{alg:SDEpolynomial_full}, which parallels Lemma \ref{lem:idealized_wass_bound} for Algorithm \ref{alg:SDEpolynomial_ideal}, is as follows:
\begin{restatable}{lemma}{degreeAndErrorBoundsforWass}\label{lem:degreeAndErrorBoundsforWass}
If $N \geq \frac{18}{\epsilon}$, then the function $q : [-1, 1] \rightarrow \R^{\geq 0}$ returned by Algorithm \ref{alg:SDEpolynomial_full} is a probability density and satisfies:\vspace{-1em}
\begin{align*}
	W_1(s,q) \leq 2\epsilon.
\end{align*}
\end{restatable}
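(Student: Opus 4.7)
The plan is to reduce to the idealized analysis of Lemma~\ref{lem:idealized_wass_bound} and then carefully track the error introduced by approximating the Chebyshev moments. Let $q^{\mathrm{id}} = w\cdot\sum_{k=0}^N \tfrac{\hat b_N[k]}{\hat b_N[0]}\tau_k \bar T_k$ denote the output of the idealized algorithm with exact moments $\tau_k = \tfrac{1}{n}\tr(\bar T_k(A))$; by Lemma~\ref{lem:idealized_wass_bound}, $q^{\mathrm{id}}$ is a probability density with $W_1(s,q^{\mathrm{id}})\leq \epsilon$ once $N\geq 18/\epsilon$. Writing $\tilde s_N = q^{\mathrm{id}} + w\Delta$ with $\Delta = \sum_{k=1}^N \tfrac{\hat b_N[k]}{\hat b_N[0]}(\tilde\tau_k-\tau_k)\bar T_k$, I would observe that the $k=0$ term contributes nothing because $\tau_0 = \iprod{s}{\bar T_0} = 1/\sqrt\pi$ coincides with the hard-coded $\tilde\tau_0 = 1/\sqrt\pi$ in line~1.

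First I would verify that $q$ is a valid probability density. Using $|\bar T_k(x)|\leq \sqrt{2/\pi}$ for $k\geq 1$, $\hat b_N[k]/\hat b_N[0]\leq 1$, and the hypothesis $|\tilde\tau_k-\tau_k|\leq 1/N^2$, the triangle inequality yields the pointwise bound $|\Delta(x)|\leq N\cdot (1/N^2)\cdot \sqrt{2/\pi} = \sqrt 2/(N\sqrt\pi)$. Combined with $q^{\mathrm{id}}/w\geq 0$ from Fact~\ref{fact:positivity}, this gives $\tilde s_N(x)\geq -w(x)\sqrt 2/(N\sqrt\pi)$, so adding $w\sqrt 2/(N\sqrt\pi)$ restores non-negativity. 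For the integral, orthogonality of the $\bar T_k$ under weight $w$ gives $\int_{-1}^1 w\bar T_k=0$ for $k\geq 1$ and $\int w\bar T_0=\sqrt\pi$, hence $\int\tilde s_N = \tilde\tau_0\sqrt\pi = 1$; since $\int w=\pi$, the shifted function integrates to $1+\sqrt{2\pi}/N$, which matches the denominator in line~4, so $\int q = 1$.

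For the Wasserstein estimate I would invoke the dual formulation \eqref{eq:wass_def} and fix a 1-Lipschitz test function $f$, WLOG shifted so that $\|f\|_\infty\leq 1$ (which does not change $\iprod{f}{s-q}$ since both $s$ and $q$ are probability densities). Setting $\alpha = 1+\sqrt{2\pi}/N$, a direct expansion using $q = (q^{\mathrm{id}}+w\Delta+w\sqrt 2/(N\sqrt\pi))/\alpha$ and $\alpha-1=\sqrt{2\pi}/N$ yields
\[
\alpha\iprod{f}{s-q} \;=\; \iprod{f}{s-q^{\mathrm{id}}} \;-\; \iprod{f}{w\Delta} \;+\; \tfrac{\sqrt{2\pi}}{N}\iprod{f}{s} \;-\; \tfrac{\sqrt 2}{N\sqrt\pi}\iprod{f}{w}.
\]
The first term is at most $\epsilon$ by the idealized lemma; the second is at most $\|f\|_\infty\int w|\Delta|\leq \pi\cdot\sqrt 2/(N\sqrt\pi) = \sqrt{2\pi}/N$; and the last two are each at most $\sqrt{2\pi}/N$ via $|\iprod{f}{s}|\leq 1$ and $|\iprod{f}{w}|\leq \pi$. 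Hence $|\iprod{f}{s-q}|\leq \epsilon + 3\sqrt{2\pi}/N$, which is at most $2\epsilon$ once $N\geq 18/\epsilon$ (since $3\sqrt{2\pi}/18<1$).

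The main obstacle, I expect, is tracking constants tightly enough to land at exactly $2\epsilon$: the moment-accuracy $1/N^2$ in the hypothesis is tuned so that summing over $N$ terms yields an $\ell_\infty$ perturbation of $\Delta$ of precisely the magnitude $\sqrt 2/(N\sqrt\pi)$ that the additive shift in line~4 is designed to absorb, and the very same quantity reappears as the numerator of the normalization in the Wasserstein bound. Everything else is routine given Lemma~\ref{lem:idealized_wass_bound}, Fact~\ref{fact:positivity}, and orthogonality of the Chebyshev polynomials under $w$.
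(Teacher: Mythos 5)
Your proposal is correct and follows essentially the same route as the paper: compare to the idealized output of Lemma~\ref{lem:idealized_wass_bound}, bound the moment-error perturbation pointwise by $\sqrt{2}/(N\sqrt{\pi})$ using $|\bar{T}_k|\leq\sqrt{2/\pi}$, verify the shift-and-rescale yields a density via orthogonality, and then control $\iprod{f}{s-q}$ for $1$-Lipschitz $f$ with $\|f\|_\infty\leq 1$. The only (cosmetic) difference is that you bound the Wasserstein error by a single four-term expansion of $\alpha\iprod{f}{s-q}$, whereas the paper uses the triangle inequality through $s_N$ together with an $\|s_N-q\|_1$ estimate; both give the same $O(\sqrt{2\pi}/N)$ overhead and the $2\epsilon$ conclusion.
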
 

\begin{proof}
We first prove that $q$ is a probability distribution. Let $s_N$ denote the ideal distribution returned by Algorithm \ref{alg:SDEpolynomial_ideal} if exact Chebyshev moments were used. I.e., 
\begin{align*}
	s_N = w\cdot\sum_{k=0}^N \frac{\hat{b}_{N}[k]}{\hat{b}_{N}[0]} \cdot {\tau}_k \cdot \bar{T}_k
\end{align*}
where $\tau_k = \frac{1}{n}\tr(\bar{T}_k(A)) = \iprod{s}{\bar{T}_k}$. Note that for any density $s$, $\tau_0 = \iprod{s}{\bar{T}_0} = 1/\sqrt{\pi}$.
Let $\Delta_k = \tilde{\tau}_k - \tau_k$. We have $\tilde{s}_N(x) = {s}_N + \sum_{k = 1}^N \Delta_k \frac{\hat{b}_{N}[k]}{\hat{b}_{N}[0]} w(x)\bar{T}_k(x)$. Define functions $\eta = s_N/w$ and $\tilde{\eta} = \tilde{s}/w$.
It follows that for any $x \in [-1, 1]$, 
\begin{align}\label{eqn:delta_ub}
\abs{\tilde{\eta}(x) - \eta(x)}  = \abs{\sum_{k = 1}^N \frac{\hat{b}_{N}[k]}{\hat{b}_{N}[0]} \Delta_k \bar{T}_k(x)}
\leq \frac{\sqrt{2}}{N\sqrt{\pi}}.
\end{align}
The last inequality uses that $0 \leq {\hat{b}_{N}[k]}/{\hat{b}_{N}[0]} \leq 1$ and for $x \in [-1, 1]$, $\bar{T}_k(x) \leq \sqrt{2/\pi}$ for $k\geq 1$. Since $\eta$ is a non-negative function, from \eqref{eqn:delta_ub} we can conclude that the function $\tilde{\eta} + \frac{\sqrt{2}}{N\sqrt{\pi}}$ is non-negitive, and thus $w\cdot (\tilde{\eta} + \frac{\sqrt{2}}{N\sqrt{\pi}})= \tilde{s}_N + w\frac{\sqrt{2}}{N\sqrt{\pi}}$ is also non-negative. The density of this function is $\int_{-1}^1\tilde{s}_N(x) dx + \frac{\sqrt{2}}{N\sqrt{\pi}}\int_{-1}^1 w(x) dx = 1 + \frac{\sqrt{2\pi}}{N}$, so dividing by $1 + \frac{\sqrt{2\pi}}{N}$ gives a probability density. 

Next we prove the approximation guarantee. By Lemma \ref{lem:idealized_wass_bound} we know that $W_1(s,s_N)\leq \epsilon$, so if we can show that $W_1(s_N,q) \leq \epsilon$, then by triangle inequality we will have shown that $W_1(s,q) \leq W_1(s,s_N) + W_1(s_N,q) \leq 2\epsilon$. 

To bound $W_1(s_N,q)$, we need to show that $\iprod{f}{s_N - q} \leq \epsilon$ for any 1-Lipschitz function $f\in \FcOne$. Without loss of generality, we can assume that 
$\int_{-1}^1 f(x)dx = 0$, as the 1-Lipschitz function $f' = f - \int_{-1}^1 f(x)dx$ satisfies $\iprod{f}{s_N - q} = \iprod{f'}{s_N - q}$ (since $s_N$ and $q$ are both probability densities). If $\int_{-1}^1 f(x)dx = 0$, $f(x)$ must be zero for some $x\in [-1,1]$, and since it is also $1$-Lipschitz we can in turn bound $\|f\|_\infty \leq 1$.\footnote{Let z maximize $f(x)$. Since $f$ is 1-Lipschitz we have $f(z) \leq |x-z| - f(x)$ for all $x$. Integrating both sides from $-1$ to $1$, we have $2f(z) \leq (z^2+1) - 0 \leq 2$. So, $f(z) \leq 1$.} We can then bound the inner product:  
\begin{align*}
\iprod{f}{s_N - q} \leq \|{f(\bar{s}_N - q)}\|_1 &\leq \|f\|_\infty \|\bar{s}_N(x) - q(x)\|_1 
\leq \left\|w \cdot \brackets{\eta - \frac{\tilde{\eta} +\frac{\sqrt{2}}{N\sqrt{\pi}}}{1 + \frac{\sqrt{2\pi}}{N} }} \right\|_1 \\ 
&\leq  \underbrace{\left\|w \cdot \brackets{\eta - \tilde{\eta} -\frac{\sqrt{2}}{N\sqrt{\pi}}} \right\|_1}_{z_1} + \  \underbrace{\left\|\frac{\sqrt{2\pi}}{N} \cdot w\cdot \brackets{\tilde{\eta} + \frac{\sqrt{2}}{N\sqrt{\pi}}} \right\|_1}_{z_2}
\end{align*}
The last inequality uses the fact that $1 - \frac{1}{1+\gamma} \leq \gamma$ for $0 \leq \gamma \leq 1$, which we apply with $\gamma = \frac{\sqrt{2\pi}}{N}$. 
Using the fact that $\|w\|_1 = \int_{-1}^1 \frac{1}{\sqrt{1 - x^2}}dx = \pi$ and the bound on $\|\eta - \tilde{\eta}\|_\infty$ from \eqref{eqn:delta_ub}, we have 
\begin{align*}
z_1 &\leq \|w\|_1 \cdot \left\|\eta - \tilde{\eta} -\frac{\sqrt{2}}{N\sqrt{\pi}} \right\|_\infty \leq \frac{2\pi\sqrt{2}}{N\sqrt{\pi}}.
\end{align*}
Examining $z_2$, recall that we showed earlier that $w(\tilde{\eta} + \frac{\sqrt{2}}{N\sqrt{\pi}}) = \tilde{s}_N + \frac{\sqrt{2}}{N\sqrt{\pi}}w$ has $\ell_1$ norm $1 + \frac{\sqrt{2\pi}}{N}$. So we have $z_2 \leq \frac{\sqrt{2\pi}}{N} ({1 + \frac{\sqrt{2}}{N\sqrt{\pi}}}) \leq \frac{2\sqrt{2\pi}}{N}$ for all $N \geq 1$. 

Compiling the bounds on $z_1$ and $z_2$, we have that for all $1$-Lipschitz $f$, $\iprod{f}{s_N - q} \leq \frac{4\sqrt{2\pi}}{N} \leq \frac{11}{N}$, and thus $W_1(\bar{s}_N, q)\leq \frac{11}{N}$. For $N\geq \frac{18}{\epsilon}$ we conclude that $W_1(\bar{s}_N, q) \leq \epsilon$. Applying triangle quality as discussed above completes the proof. 
\end{proof}

Lemma \ref{lem:degreeAndErrorBoundsforWass} is exactly analogous to Lemma \ref{lemma:mm_guarantee}. We can take advantage of the result by using the Hutchinson's based method from Section \ref{sec:SDE_approx} or the sublinear time method from Section \ref{sec:graphSDE} to obtain the approximations for the Chebyshev moments required by Algorithm \ref{alg:SDEpolynomial_full}. The end result is that we can obtain the same bounds as Theorem \ref{thm:exactMatMultSDE} and Theorem \ref{thm:approxMatMultSDE} with $\ell = \max(1, \ \frac{C'}{n}\epsilon^{-4}\log^2(\frac{1}{\epsilon\delta}))$ and $\epsMV = C''\epsilon^{-4}$, respectively. The slightly worse $\epsilon$ dependence follows from the fact that Algorithm \ref{alg:SDEpolynomial_full} has a more stringent requirement on the approximate Chebyshev moments used than Algorithm \ref{alg:MomentMatching_full}.

\section{Approximate Eigenvalues from Spectral Density Estimate}\label{sec:appxEigs}
Algorithm \ref{alg:SDEpolynomial_ideal} and Algorithm \ref{alg:SDEpolynomial_full} in the previous sections output a closed form representation of a distribution $q$ which is close in Wasserstein-1 distance to $s$. In particular, the distribution output is continuous. Alternatively, we describe a simple greedy algorithm (Algorithm \ref{alg:appxEigs}) that recovers a list of $n$ eigenvalues $\tilde{\Lambda} = [\tilde{\lambda}_1, \dots, \tilde{\lambda}_n]$ such that $\|\Lambda - \tilde{\Lambda}\|_1 \leq 3n\epsilon$, which implies that the discrete distribution associated with $\tilde{\Lambda}$ is $3\epsilon$ close to $s$ in Wassersetin-1 distance. Formally:

\begin{restatable}{theorem}{approxEigs}\label{thm:approxEigs}
	Let $s$ be a spectral density and let $q$ be a density on $[-1,1]$ such $W_1(s, q) \leq \epsilon$ for $\epsilon \in (0, 1)$. As long as $q$ can be integrated over any subinterval of $[-1,1]$ (e.g., has a closed form antiderivative), there is an algorithm (Algorithm \ref{alg:appxEigs}) that computes $1/\epsilon$ such integrals and in $O\left(n + {1}/{\epsilon}\right)$ additional time outputs a list of $n$ values $\tilde{\Lambda} = [\tilde{\lambda}_1, \dots, \tilde{\lambda}_n]$ such that $\|\Lambda - \tilde{\Lambda}\|_1 \leq 3n\epsilon$. 
\end{restatable}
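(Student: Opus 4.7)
The plan is to present a simple quantile-style discretization of $q$. Partition $[-1,1]$ into $k = \lfloor 1/\epsilon \rfloor$ intervals $I_1, \ldots, I_k$ of width $h \leq 2\epsilon$, with right endpoints $x_1 < \cdots < x_k = 1$. The algorithm first computes the approximate CDF values $C_j = \int_{-1}^{x_j} q(x)\,dx$ for $j = 1, \ldots, k$ using $k = 1/\epsilon$ integrations (permitted by the hypothesis on $q$). It then assigns $n_j = \lfloor n C_j \rfloor - \lfloor n C_{j-1} \rfloor$ approximate eigenvalues to interval $I_j$, all placed at the endpoint $x_j$. By telescoping, $\sum_j n_j = \lfloor n C_k \rfloor - \lfloor n C_0 \rfloor = n$, so exactly $n$ values are output. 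Sorting is implicit in the left-to-right sweep, so the post-integration work is $O(n + k) = O(n + 1/\epsilon)$.

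For the analysis, I use the fact already noted in the paper's introduction (just above Theorem~\ref{thm:exactMatMultSDE}): if $\Lambda$ and $\tilde{\Lambda}$ are sorted and $\tilde{s}$ is the uniform discrete measure on $\tilde{\Lambda}$, then $\|\Lambda - \tilde{\Lambda}\|_1 = n \cdot W_1(s, \tilde{s})$. So it suffices to bound $W_1(s, \tilde{s}) \leq 3\epsilon$. By the triangle inequality and the hypothesis $W_1(s,q) \leq \epsilon$, the task reduces to showing $W_1(q, \tilde{s}) \leq 2\epsilon$.

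To bound $W_1(q, \tilde{s})$, I introduce an intermediate distribution $q'$ obtained by pushing all mass of $q$ within each interval $I_j$ onto the endpoint $x_j$. Because each point of $I_j$ moves by at most $h \leq 2\epsilon$, the earthmover interpretation immediately gives $W_1(q, q') \leq 2\epsilon$. Next, $q'$ and $\tilde{s}$ are both supported on $\{x_1, \ldots, x_k\}$, so I apply the standard CDF-difference formula
\begin{equation*}
W_1(q', \tilde{s}) = \sum_{j=1}^{k-1} (x_{j+1} - x_j)\,\bigl|F_{q'}(x_j) - F_{\tilde{s}}(x_j)\bigr|.
\end{equation*}
Here $F_{q'}(x_j) = C_j$ and $F_{\tilde{s}}(x_j) = \lfloor nC_j \rfloor/n$, so each summand's absolute value is at most $1/n$. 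Summing yields $W_1(q', \tilde{s}) \leq h \cdot k \cdot (1/n) \leq 2/n$. Combined with the triangle inequality this gives $W_1(s, \tilde{s}) \leq 3\epsilon + 2/n$, establishing the claim in the regime $n \gtrsim 1/\epsilon$ that the theorem implicitly targets (for smaller $n$ one can use only $k = n$ intervals and obtain the same bound up to constants).

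The main subtlety is choosing the rounding so that per-interval discrepancies do not accumulate. The cumulative scheme $n_j = \lfloor nC_j \rfloor - \lfloor nC_{j-1} \rfloor$ is the critical ingredient: a naive per-interval round like $n_j = \lfloor n m_j \rfloor$ with a correction step could shift the CDF by up to $k/n$, overwhelming the gains from small interval width. Tying rounding to the CDF telescopes the error, keeping each $|F_{q'}(x_j) - F_{\tilde{s}}(x_j)|$ below $1/n$ and hence yielding the clean $O(1/n)$ bound above. Once this is in place, the remaining pieces (validity of the earthmover estimate on $W_1(q,q')$ and correctness of the CDF formula) are routine.
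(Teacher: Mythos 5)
Your proposal is correct and is essentially the paper's own argument: your cumulative rounding $n_j=\lfloor nC_j\rfloor-\lfloor nC_{j-1}\rfloor$ is exactly the greedy remainder-shifting of Algorithm \ref{alg:appxEigs}, and your analysis (snap the mass of $q$ to an $O(\epsilon)$-spaced grid, bound the rounding cost, then apply the triangle inequality) mirrors the paper's proof of Theorem \ref{thm:approxEigs}. The $+2/n$ slack you flag is implicitly present in the paper's own accounting of the mass-shifting step as well, so your version is, if anything, slightly more careful on that point.
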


At a high-level, Algorithm \ref{alg:appxEigs} computes a grid with spacing $\epsilon$ for the interval $[-1, 1]$, ``snaps'' the mass of the continuous density onto the nearest point in the grid, and then readjusts the resulting point masses to a distribution where every point mass is divisible by $1/n$ (and can therefore be represented by a certain number of eigenvalues). It does so by iteratively shifting fractional masses to the next point in the grid so that the mass at the current point is divisible by $1/n$. 

The method requires computing the mass $\int_{a}^b q(x)dx$ where $-1 \leq a < b \leq 1$. For Algorithms \ref{alg:SDEpolynomial_ideal} and  \ref{alg:SDEpolynomial_full}, $q$ is written as $q = w \cdot p$ where $p$ is a degree $N$ polynomial written as a sum of the first $N+1$ Chebyshev polynomials. So to compute the integral $\int_{a}^b q(x)dx$, we just need to compute the integral $\int_{a}^b T_k(x)w(x)dx$ for any $k \in 0,\ldots, N$. We can do so using the following closed form expression (see Appendix \ref{appendix:chebyshev_integral} for a short derivation):
\begin{fact}\label{eqn:chebyshev_integral}
	For $k \in \N^{> 0}$ and $-1 \leq a < b \leq 1$ we have that 
	\begin{align*}
		\int_{a}^b \frac{T_k(x)}{\sqrt{1 - x^2}}dx = \frac{-\cos(k\sin^{-1} b)}{k} - \frac{-\cos(k\sin^{-1} a)}{k}
	\end{align*}
	For $k=0$, $T_k(x)=1$ for all $x$ and we have that $\int_{a}^b T_k(x)w(x)dx = \sin^{-1}(b) - \sin^{-1}(a)$.
\end{fact}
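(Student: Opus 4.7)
The plan is to use the classical trigonometric substitution $x = \cos\theta$ together with the defining identity $T_k(\cos\theta) = \cos(k\theta)$. This identity follows from a short induction on $k$: the base cases $T_0(\cos\theta) = 1$ and $T_1(\cos\theta) = \cos\theta$ are immediate, and the Chebyshev recurrence $T_k(x) = 2xT_{k-1}(x) - T_{k-2}(x)$ specializes under $x = \cos\theta$ to the standard product-to-sum identity $2\cos\theta\cos((k-1)\theta) - \cos((k-2)\theta) = \cos(k\theta)$. I would state this identity as a one-line lemma (or simply quote it from standard references on Chebyshev polynomials) before beginning the integral computation.

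For the main claim when $k > 0$, the substitution $x = \cos\theta$ gives $dx = -\sin\theta\,d\theta$ and $\sqrt{1-x^2} = \sin\theta$ for $\theta \in [0,\pi]$, so the Chebyshev weight cancels exactly and the integral collapses to
\begin{align*}
\int \frac{T_k(x)}{\sqrt{1-x^2}}\,dx \;=\; \int \frac{\cos(k\theta)}{\sin\theta}\cdot(-\sin\theta)\,d\theta \;=\; -\int \cos(k\theta)\,d\theta \;=\; -\frac{\sin(k\theta)}{k}.
\end{align*}
Substituting $\theta = \cos^{-1}(x)$ back in and applying the fundamental theorem of calculus produces the closed form
\begin{align*}
\int_a^b \frac{T_k(x)}{\sqrt{1-x^2}}\,dx \;=\; \frac{\sin(k\cos^{-1} a) - \sin(k\cos^{-1} b)}{k}.
\end{align*}
For $k = 0$, the integrand reduces to $1/\sqrt{1-x^2}$, whose antiderivative $\sin^{-1}(x)$ gives the stated formula directly.

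To match the specific $-\cos(k\sin^{-1}(\cdot))/k$ form written in the statement, I would apply the complementary-angle identity $\cos^{-1}(x) = \pi/2 - \sin^{-1}(x)$ and expand $\sin(k\pi/2 - k\sin^{-1}(x))$ using the angle-subtraction formula. The $\sin(k\pi/2)$ term contributes a constant that appears at both endpoints and therefore cancels in the definite integral, while the $\cos(k\pi/2)$ term rewrites $\sin(k\cos^{-1}(\cdot))$ as $\pm\cos(k\sin^{-1}(\cdot))$ (with the sign determined by $k \bmod 4$). This final rewriting is purely notational: the substance of the proof is the one-line substitution above, so I do not anticipate any genuine technical obstacle. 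The only item to be careful with is keeping track of the branch of $\cos^{-1}$ (i.e., ensuring $\theta \in [0,\pi]$ so $\sin\theta \geq 0$ and the cancellation of $\sqrt{1-x^2}$ with $\sin\theta$ is valid with the correct sign), which is automatic once one fixes the principal branches used by the statement.
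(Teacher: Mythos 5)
Your substitution and the resulting antiderivative are correct, and up to the trivial reparametrization $u = \pi/2 - \theta$ this is exactly the computation in the paper's appendix: both routes arrive at $\int_a^b T_k(x)w(x)\,dx = \frac{\sin(k\cos^{-1}a) - \sin(k\cos^{-1}b)}{k}$, equivalently $\bigl[-\sin(k(\pi/2-u))/k\bigr]_{u=\sin^{-1}a}^{u=\sin^{-1}b}$. The gap is in the step you dismiss as purely notational. Expanding $\sin(k\pi/2 - k\sin^{-1}x) = \sin(k\pi/2)\cos(k\sin^{-1}x) - \cos(k\pi/2)\sin(k\sin^{-1}x)$, the quantity $\sin(k\pi/2)$ is a multiplicative coefficient of the $x$-dependent term $\cos(k\sin^{-1}x)$, not an additive constant, so it does not ``appear at both endpoints and cancel''; and for even $k$ the surviving term is $\pm\sin(k\sin^{-1}x)$, which is not of the form $\pm\cos(k\sin^{-1}x)$ at all. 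Consequently the identity as printed in the Fact cannot be recovered from your (correct) intermediate expression, because it is false for general $k$: taking $k=2$, $a=0$, $b=1$, the integral is $\int_0^{\pi/2}(2\sin^2 u - 1)\,du = 0$, while the stated right-hand side evaluates to $\frac{-\cos\pi}{2} - \frac{-\cos 0}{2} = 1$. The printed form holds only when $k \equiv 1 \pmod 4$.

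Notably, the paper's own appendix derivation commits the identical slip in its final equality, writing $-\sin(k(\pi/2-u))/k = -\cos(ku)/k$. The correct closed form is the one you derived before the final rewriting, and it serves the paper's purpose just as well (an $O(1)$-time evaluation of $\int_a^b T_k(x)w(x)\,dx$ for each $k$, hence $O(N)$ time to integrate $q$ over an interval), so the right fix is to stop at $\frac{\sin(k\cos^{-1}a) - \sin(k\cos^{-1}b)}{k}$ rather than forcing the answer into the printed form.
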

Using the above fact, when $q = w\cdot p$ for a degree $N$ polynomial $p$, we can compute $\int_{a}^b q(x)dx$ in $O(N)$ time. In our main results $N = O(1/\epsilon)$, so this cost is small.

		
	

\begin{algorithm}\caption{Approximate Eigenvalues from Spectral Density}\label{alg:appxEigs}
	\begin{algorithmic}[1]
		\Require Spectral density $q : [-1, 1] \rightarrow \R^+$, integer $n$.
		\Ensure Vector $\tilde{\Lambda} = [\tilde{\lambda}_1, \dots, \tilde{\lambda}_n]$.
		\State compute $\vec{v} = (v_{-1 + \epsilon}, v_{-1 + 2\epsilon} \dots, v_0, v_{\epsilon}, \dots, v_1)$ such that $v_t = \int_{t - \epsilon}^{t} q(x)dx$
		\For{$t$ in $({-1 + \epsilon}, {-1 + 2\epsilon} \dots, 0, {\epsilon}, \dots, 1)$}
			\State $r \gets v_{t}  - \lfloor v_{t}\rfloor_{{1}/{n}}$ \Comment{$\lfloor v_{t}\rfloor_{{1}/{n}}$ is the largest value $\leq v_t$ that is divisible by $\frac{1}{n}$}
			\State $v_{t + \epsilon} \gets r + v_{t + \epsilon}$ \label{line:massmove}
			\State Set $n \cdot {\lfloor v_{t}\rfloor_{{1}/{n}}}$ values in $\hat{\Lambda}$ to be $t$  \label{line:lambdas}
		\EndFor
		\State return $\hat{\Lambda}$
	\end{algorithmic}	
\end{algorithm}


\begin{proof}[Proof of Theorem \ref{thm:approxEigs}]
	Consider the output $\tilde{\Lambda}$ of Algorithm \ref{alg:appxEigs} with input $q$ and $n$. Notice that $W_1(v, q) \leq \epsilon$ by the definition of $v$ and the earthmover's definition of the Wasserstein distance. Hence, by triangle inequality, we have that $W_1(v, s) \leq 2\epsilon$.
	Let $\tilde{v}$ be the vector of masses after the shifting procedure (Line \ref{line:massmove}) in the for-loop of the algorithm. Notice that $\tilde{v}$ is the distribution corresponding to having $n$ equally weighted point-masses on the points in $\tilde{\Lambda}$. Since the procedure in Line \ref{line:massmove} moves at most $1/n$ mass at most $\epsilon$ distance in each iteration, we have $W_1(v, \tilde{v}) \leq \epsilon$ by the earthmover's distance definition of the Wasserstein-1 distance. It follows then that $W_1(\tilde{v}, s) \leq 3 \epsilon$. 

	
\end{proof}

We note that there are other options beyond Algorithm \ref{alg:appxEigs} for discretizing a continuous density return by the Jackson damped kernel polynomial method -- i) the optimal discretization of a continuous density on the interval $[-1, 1]$ into $n$ equally-weighted point-masses, and ii) an algorithm by \cite{Cohen-SteinerKongSohler:2018} that can be seen as a combination of Algorithm \ref{alg:appxEigs} and the optimal method. 

\paragraph{Optimal Discretization.} Given the continuous density $q$, consider the discrete density that results from the following procedure:
\begin{enumerate}
	\item Initialize $t = -1$, then repeat the following steps until $t = 1$.
	\item Let $t' \geq t$ be the smallest value such that $\int_{t}^{t'} q(x)dx = \frac{1}{n}$.
	\item Place a point-mass at $\ExpSub{x \sim q}{x \ \vert \ x \in [t, t']}$. I.e. a point-mass is placed in the interval $[t, t']$ at the point given by the conditional distribution of $q$ on the interval. 
	\item Update $t \gets t'$. 
\end{enumerate}  
The values $\tilde{\Lambda} = \tilde{\lambda}_1, \dots, \tilde{\lambda}_n$ given by the point-masses computed by the aforementioned procedure is a optimal discretization of $q$ into $n$ equally-weighted point-masses on $[-1, 1]$ in terms of Wasserstein-1 distance. 
To see why this is the case, consider the first $1/n$ fraction of the mass of the density $q$, i.e. the smallest $t > -1$ such that $\int_{-1}^t q(x)dx = 1/n$. The policy minimizing the earthmover's distance to any $n$ equally-weighted point-wise masses must ``move'' the mass of $q$ on the interval $[-1, t]$ to the point-mass closest to $-1$. Hence, it is sufficient to restrict our attention to the interval $[-1, t]$ when computing the smallest point-mass, i.e. the mass closest to $-1$. Now that we are constrained to looking at the interval $[-1, t]$ one can check that the point-mass minimizing the earthmover's distance to $q$, restricted to $[-1, t]$, is the point-mass at $\ExpSub{x \sim q}{x \ | \ x \in[-1, t]}$. The optimality of the procedure follows from making this argument inductively for all $n$ point-masses.

We note that all steps of the procedure takes roughly $O(n)$ time, although a numerical integration technique or binary search would need to be used to find each $t'$ to sufficiently high accuracy.

A result combining the greedy discretization in Algorithm \ref{alg:appxEigs} and the optimal discretization is given in \cite{Cohen-SteinerKongSohler:2018}. They compute a fractional discretization on an $\epsilon$-spaced grid of $[-1, 1]$, as in Algorithm \ref{alg:appxEigs}, but then compute the eigenvalues using the conditional expectation of every $1/n$ fraction of mass based on the discrete density on the grid.

\section{Positive Polynomial Approximation}\label{appendix:jacksonKernel}
In this section, we introduce Jackson's powerful result from 1912 on the uniform approximation of Lipschitz continuous periodic functions by low-degree trigonometric polynomials \cite{Jackson:1912,Jackson:1930}. This result will directly translate to the result for algebraic polynomials needed to analyze the kernel polynomial method. We start with basic definitions and preliminaries below. 

\subsection{Fourier Series Preliminaries}
\begin{definition}[Fourier Series]
	\label{def:fourier_series}
	A function $f$ with period $2\pi$ that is integrable on the length of that period can be written via the Fourier series:
	\begin{align*}
		f(x) = \frac{\alpha_0}{2} + \sum_{k=1}^\infty \alpha_k\cos(kx) + \beta_k\sin(kx)
	\end{align*}
	where
	\begin{align*}
	\alpha_k &= \frac{1}{\pi}\int_{-\pi}^\pi f(x)\cos(kx)dx & 
	\beta_k &= \frac{1}{\pi}\int_{-\pi}^\pi f(x)\sin(kx)dx.
	\end{align*}
Equivalently we can write $f$ in \emph{exponential form} as:
\begin{align*}
	f(x) = \sum_{k=-\infty}^\infty \hat{f}_k e^{ikx}
\end{align*}
where $i = \sqrt{-1}$, $\hat{f}_0 = \alpha_0/2$, $\hat{f}_k = \hat{f}_{|k|}^*$ for $k < 0$, and for $k > 0$, 
\begin{align*}
	\hat{f}_k &= \frac{1}{2}(\alpha_k - i\beta_k). 
\end{align*}
\end{definition}
If the Fourier series of a periodic function $f$ has $\hat{f}_k = 0$ for $k > N$ (equivalently, $\alpha_k = \beta_k = 0$ for $k > N$), we say that $f$ is a degree $N$ trigonometric polynomial.

In working with Fourier series, we require the two standard convolution theorems:
\begin{claim}[First Convolution Theorem]\label{clm:first_conv_thm}
	Let $f,g$ be integrable $2\pi$-periodic functions with exponential form Fourier series coefficients $[\hat{f}_k]_{k=-\infty}^{\infty}$ and $[\hat{g}_k]_{k=-\infty}^{\infty}$, respectively. Let $h$ be their convolution:
	\begin{align*}
	h(x) = [f*g](x) = \int_{-\pi}^\pi f(u)g(x-u)du.
	\end{align*}
	The exponential form Fourier series coefficients of $h$, $[\hat{h}_k]_{k=-\infty}^{\infty}$, satisfy: 
	\begin{align*}
		\hat{h}_k = 2\pi\cdot \hat{f}_k \hat{g}_k 
	\end{align*}
\end{claim}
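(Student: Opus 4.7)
The plan is the standard direct-computation approach: express $\hat{h}_k$ via the exponential Fourier coefficient definition, swap the order of integration, and use $2\pi$-periodicity to decouple the integrals into a product.

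First I would reconcile the two normalizations written in Definition~\ref{def:fourier_series}. Combining $\hat{f}_k = \tfrac{1}{2}(\alpha_k - i\beta_k)$ with the formulas for $\alpha_k,\beta_k$ and Euler's identity $\cos(kx) - i\sin(kx) = e^{-ikx}$ yields the single unified formula
\[
\hat{f}_k \;=\; \frac{1}{2\pi}\int_{-\pi}^{\pi} f(x)\, e^{-ikx}\,dx
\]
valid for every $k \in \mathbb{Z}$ (the cases $k=0$ and $k<0$ follow from $\hat{f}_0 = \alpha_0/2$ and $\hat{f}_{-k} = \hat{f}_k^{*}$ when $f$ is real). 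The same formula applies to $g$ and $h$. This puts us in the cleanest setting for manipulating coefficients.

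Next I would substitute the definition of $h = f * g$ into the Fourier coefficient of $h$, producing a double integral, and apply Fubini's theorem, which is justified because $f$ and $g$ are integrable on a period so the integrand $f(u)g(x-u)e^{-ikx}$ is integrable on $[-\pi,\pi]^2$:
\[
\hat{h}_k \;=\; \frac{1}{2\pi}\int_{-\pi}^{\pi}\!\int_{-\pi}^{\pi} f(u)\, g(x-u)\, e^{-ikx}\,dx\,du.
\]
Inside the inner integral I would make the substitution $y = x - u$, using $e^{-ikx} = e^{-iku}e^{-iky}$, which transforms the integration range to $[-\pi - u, \pi - u]$ and pulls the $u$-dependent phase $e^{-iku}$ outside the inner integral.

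The key step is then to invoke the fact that $g(y)e^{-iky}$ is $2\pi$-periodic in $y$, so its integral over any interval of length $2\pi$ is unchanged; in particular the inner integral equals $\int_{-\pi}^{\pi} g(y)e^{-iky}\,dy = 2\pi\,\hat{g}_k$. Pulling this constant out of the outer integral gives
\[
\hat{h}_k \;=\; \hat{g}_k \int_{-\pi}^{\pi} f(u) e^{-iku}\,du \;=\; 2\pi\, \hat{f}_k\, \hat{g}_k,
\]
which is exactly the claimed identity. There is no real obstacle in the proof; the only points that deserve mention are the justification of Fubini (immediate from integrability) and the invariance of the integral of a periodic function under shifts of the integration interval, both of which are standard.
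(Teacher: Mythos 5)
Your proof is correct: the unified exponential coefficient formula you derive matches the paper's normalization (which builds in the factor $\tfrac{1}{2\pi}$, hence the $2\pi$ in the statement), and the Fubini/change-of-variables/periodicity computation goes through exactly as you describe. The paper itself states Claim \ref{clm:first_conv_thm} as a standard fact without proof, and your argument is the standard direct computation one would supply, so there is nothing to reconcile or correct.
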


\begin{claim}[Second Convolution Theorem]\label{clm:second_conv_thm}
	Let $f,g$ be integrable $2\pi$-periodic functions with exponential form Fourier series coefficients $[\hat{f}_k]_{k=-\infty}^{\infty}$ and $[\hat{g}_k]_{k=-\infty}^{\infty}$, respectively. Let $h$ be their product:
		\begin{align*}
		h(x) = f(x)\cdot g(x).
	\end{align*}
	The exponential form Fourier series coefficients of $h$, $[\hat{h}_k]_{k=-\infty}^{\infty}$, satisfy: 
	\begin{align*}
	\hat{h}_k = \sum_{j=-\infty}^{\infty} \hat{f}_j\cdot \hat{g}_{k-j}
	\end{align*}
	In other words, the Fourier coefficients of $h$ are the discrete convolution of those of $f$ and $g$. 
\end{claim}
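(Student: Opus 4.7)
The plan is to compute $\hat{h}_k$ directly from the integral definition of the Fourier coefficients and substitute the Fourier expansion of one of the two factors. From Definition \ref{def:fourier_series}, combining the formulas for $\alpha_k$ and $\beta_k$ into the exponential form gives the standard identity $\hat{\phi}_k = \frac{1}{2\pi}\int_{-\pi}^{\pi}\phi(x)e^{-ikx}\,dx$ for any integrable $2\pi$-periodic $\phi$. So I would start from $\hat{h}_k = \frac{1}{2\pi}\int_{-\pi}^{\pi} f(x)g(x)e^{-ikx}\,dx$ and expand $g(x) = \sum_{m} \hat{g}_m e^{imx}$. Interchanging sum and integral and pulling $\hat{g}_m$ outside yields $\hat{h}_k = \sum_{m} \hat{g}_m \cdot \frac{1}{2\pi}\int_{-\pi}^{\pi} f(x) e^{-i(k-m)x}\,dx = \sum_{m} \hat{g}_m \hat{f}_{k-m}$. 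The change of index $j = k-m$ then converts this into $\sum_j \hat{f}_j \hat{g}_{k-j}$, which is the claimed identity.

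The only step that really requires thought is the swap of the sum over $m$ with the integral over $x$, since everything else is algebraic bookkeeping. For trigonometric polynomials $f$ and $g$ both series are finite and the interchange is trivial; this is in fact the case actually used elsewhere in the paper, since $g$ will be a Jackson-kernel trigonometric polynomial of bounded degree. To obtain the general statement, I would proceed by approximation: assuming $f,g \in L^2([-\pi,\pi])$, the partial sums of the Fourier series of $g$ converge to $g$ in $L^2$, and since $f(x)e^{-ikx} \in L^2$, Cauchy--Schwarz together with Parseval's identity make the exchange rigorous and show that both sides depend continuously on $g$ in the $L^2$ norm. Alternatively, if $(\hat{g}_m)$ is absolutely summable (for instance when $g$ is Lipschitz) and $f \in L^1$, a direct appeal to Fubini--Tonelli works after bounding the integrand by $|\hat{g}_m|\cdot |f(x)|$, whose iterated integral $\|f\|_{L^1}\sum_m|\hat{g}_m|$ is finite. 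In either case no ingredient beyond standard Fourier-analytic facts is needed, and the result reduces to the one-line algebraic manipulation above.
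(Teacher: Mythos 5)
Your proof is correct; the paper itself states this claim without proof, as a standard fact, so there is no in-paper argument to compare against. Your derivation --- writing $\hat{h}_k = \frac{1}{2\pi}\int_{-\pi}^{\pi} f(x)g(x)e^{-ikx}\,dx$, expanding $g$ in its Fourier series, interchanging sum and integral, and reindexing --- is the standard one, and you correctly isolate the interchange as the only non-algebraic step. You are also right that in every application within the paper (building the Jackson kernel from $s(x)=\frac{\sin(mx/2)}{\sin(x/2)}$ and its powers) both factors are trigonometric polynomials, so the sums are finite and the interchange is trivial; your $L^2$/Parseval and absolute-summability arguments are each adequate to cover the general statement, and indeed some such hypothesis is genuinely needed, since for arbitrary merely integrable $f,g$ the product need not be integrable and the convolution sum need not converge.
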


\subsection{Jackson's Theorem for Trigonometric Polynomials}
We seek a low-degree trigonometric polynomial $\tilde{f}$ that is a good \emph{uniform} approximation to any sufficiently smooth periodic function $f$. I.e., we want $\|f - \tilde{f}\|_{\infty} < \epsilon$ where $\|z\|_{\infty}$ denotes $\|z\|_{\infty} = \max_{x} z(x)$. 
A natural choice for $\tilde{f}$ is the truncated Fourier series $\sum_{k=-N}^N c_k e^{ikx}$, but this does not lead to good uniform approximation in general. Instead, Jackson showed that better accuracy can be obtained with a Fourier series with \emph{damped coefficients}, which is equivalent to the \emph{convolution} of $f$ with an appropriately chosen ``bump'' function (aka kernel), defined below: 
\begin{definition}[Jackson Kernel]
	For any positive integer $m$, let $b$ be the $2m-2$ degree trigonometric polynomial:
	\begin{align*}
		b = \left(\frac{\sin(mx/2)}{\sin(x/2)}\right)^4 = \sum_{k=-2m+2}^{2m-2} \hat{b}_k e^{ikx},
	\end{align*}
	which has exponential form coefficients $\hat{b}_{-2m+2}, \ldots, \hat{b}_0, \ldots, \hat{b}_{2m-2}$ equal to:
	\begin{align}
		\label{bump_coeffs}
		\hat{b}_{-k} &= \hat{b}_{k} = \sum_{j=-m}^{m-k} (m-|j|)\cdot(m-|j+k|) & \text{for } k&=0, \ldots, 2m-2. 
	\end{align}
\end{definition}
When $m$ is odd it is easy to see that $b$ is a degree $2m-2$ trigonometric polynomial. Specifically, for odd $m$ we have the well known Fourier series of the periodic sinc function $s(x) = \frac{\sin(mx/2)}{\sin(x/2)} = \sum_{k=-(m-1)/2}^{(m-1)/2}e^{ikx}$. We then apply the convolution theorem (Claim \ref{clm:second_conv_thm}) to $s(x)\cdot s(x)$. to see that $s^2(x) = \left(\frac{\sin(mx/2)}{\sin(x/2)}\right)^2$ is an $m-1$ degree trigonometric polynomial with coefficients ${c}_{-k} = {c}_{k} = m - k$. Applying it again to $s^2(x)\cdot s^2(x)$ yields \eqref{bump_coeffs}. For a derivation of \eqref{bump_coeffs} when $m$ is even, we refer the reader to \cite{Jackson:1930} or \cite{Lorentz:1966}. 

 \begin{figure}[h]
 	\centering
 	\includegraphics[width=.45\linewidth]{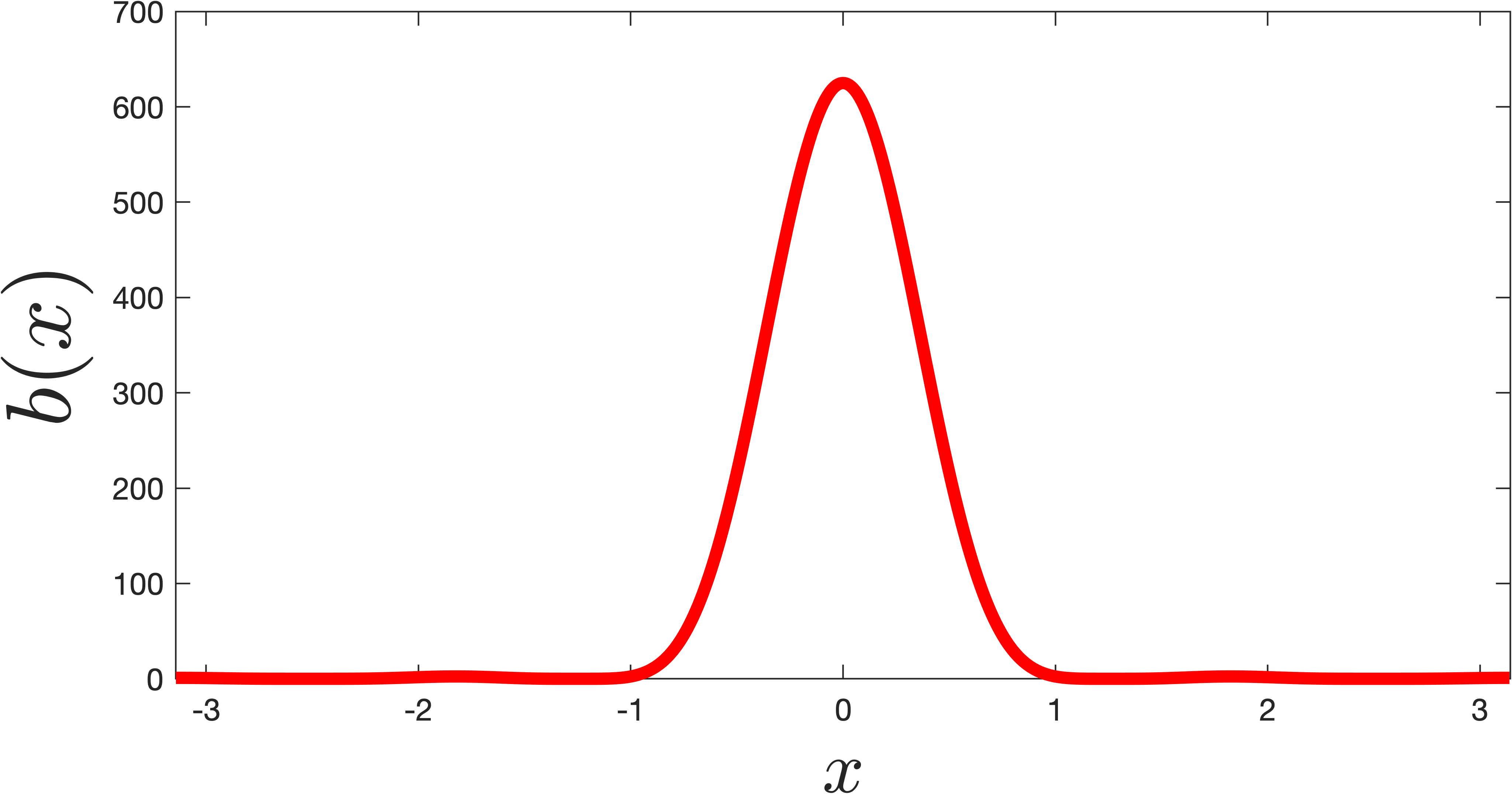}  \hfill
 	\includegraphics[width=.45\linewidth]{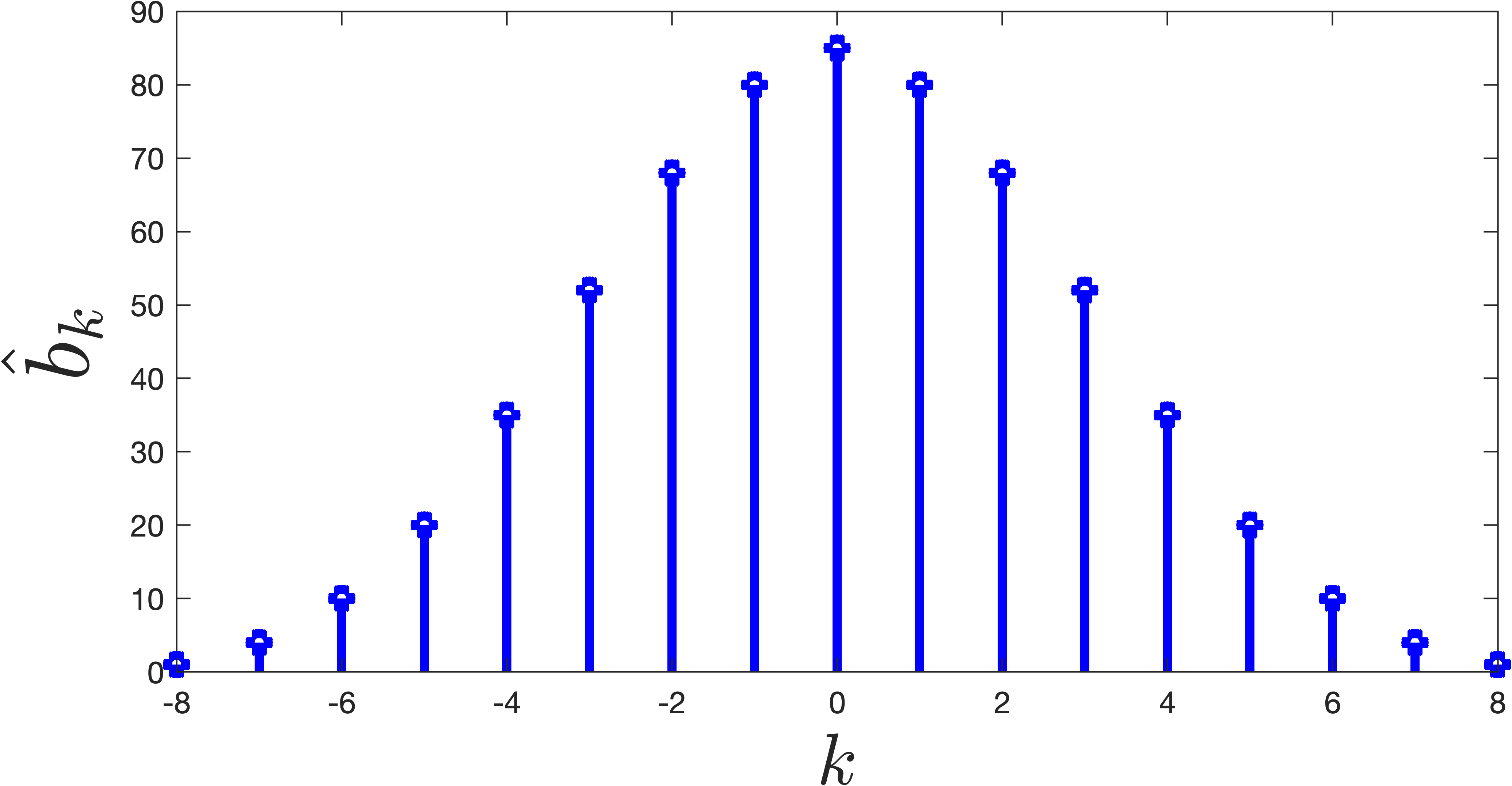} 
 	\caption{Jackson's bump function $b(x)$ for $m =5$, alongside its Fourier series coefficients.}
 	\label{fig:jacksons}
 \end{figure} 

Jackson's main result is as follows. We include a short proof for completeness. 
\begin{theorem}[Jackson \cite{Jackson:1912}, see also \cite{Jackson:1930}]
	\label{thm:jacksons_theorem}
	Let $f$ be a $2\pi$-periodic, Lipschitz continuous function with Lipschitz constant $\lambda$. I.e., $|f(x) - f(y)|\leq \lambda |x-y|$ for all $x,y$.  
	For integer $m$, let $b$ be the bump function from Definition \ref{bump_coeffs}, with $k^\text{th}$ Fourier ceofficients $\hat{b}_{k}$. The function  
	$\tilde{f}(x) = \frac{1}{2\pi \hat{b}_0}\int_{-\pi}^\pi b(u)f(x-u)du$ satisfies:
	\begin{align*}
		\|\tilde{f} - f\|_{\infty} \leq 9\frac{\lambda}{m}.
	\end{align*}
	$\tilde{f}$ is a $2m-2$ degree trigonometric polynomial, and by the convolution theorem, its exponential form Fourier series coefficients are given by $\hat{\tilde{f}}_k = \frac{\hat{b}_k}{\hat{b}_0}\cdot \hat{f}_k$ for $k = -2m+2, \ldots, 2m-2$. 
\end{theorem}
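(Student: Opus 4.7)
\textbf{Proof proposal for Theorem \ref{thm:jacksons_theorem}.}

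The plan is to turn $\tilde{f}$ into a convolution against a nonnegative probability kernel, use the Lipschitz condition to reduce everything to a first absolute moment estimate, and then control that moment via Cauchy--Schwarz together with explicit formulas coming from the Fourier coefficients of $b$ given in \eqref{bump_coeffs}.

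First, observe that $b(u) = (\sin(mu/2)/\sin(u/2))^4 \geq 0$, and that the definition of the exponential Fourier coefficient $\hat{b}_0$ in Definition \ref{def:fourier_series} gives $\hat{b}_0 = \frac{1}{2\pi}\int_{-\pi}^\pi b(u)\,du$. Hence $\frac{1}{2\pi \hat{b}_0}b$ is a $2\pi$-periodic probability density on $[-\pi,\pi]$, and we can write
\begin{align*}
\tilde{f}(x) - f(x) \;=\; \frac{1}{2\pi \hat{b}_0}\int_{-\pi}^{\pi} b(u)\,\bigl(f(x-u)-f(x)\bigr)\,du .
\end{align*}
Applying the Lipschitz bound $|f(x-u)-f(x)| \leq \lambda |u|$ pointwise and then Cauchy--Schwarz yields
\begin{align*}
\|\tilde f - f\|_\infty \;\leq\; \frac{\lambda}{2\pi \hat{b}_0}\int_{-\pi}^{\pi} b(u)\,|u|\,du \;\leq\; \lambda \sqrt{\frac{1}{2\pi \hat b_0}\int_{-\pi}^{\pi} u^2 b(u)\,du}\, .
\end{align*}
So the task reduces to lower bounding $\hat{b}_0$ and upper bounding the second moment $M_2 := \int_{-\pi}^\pi u^2 b(u)\,du$.

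For the lower bound, I will use the explicit formula \eqref{bump_coeffs} at $k=0$, namely $\hat{b}_0 = \sum_{j=-m}^{m}(m-|j|)^2 = m^2 + 2\sum_{j=1}^{m-1}(m-j)^2 = \tfrac{m(m-1)(2m-1)}{3} + m^2$, which is asymptotically $\tfrac{2m^3}{3}$ and in particular at least $\tfrac{2m^3}{3}$ for all $m\geq 1$ (after checking the small cases). For the upper bound on $M_2$, I will use two standard inequalities: the identity $\sin(m\theta) = \sin(\theta)\,U_{m-1}(\cos\theta)$ with $|U_{m-1}|\leq m$ to get $|\sin(mu/2)/\sin(u/2)| \leq m$, and the bound $|\sin(u/2)| \geq |u|/\pi$ on $[-\pi,\pi]$. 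Together these give $b(u) \leq \pi^4 \min\!\bigl(m^4/16,\ 1/u^4\bigr)$ for $|u|\leq\pi$, after which splitting the integral at $|u| = 2/m$ produces $M_2 \leq C\cdot m$ for an explicit constant $C$ of order $\pi^4$.

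Combining the two estimates gives $\|\tilde f - f\|_\infty \leq \lambda\sqrt{C'/m^2} = O(\lambda/m)$, and the constant in front works out to below $9$ once the elementary integrals above are done carefully (tracking $\pi$'s). The last sentence of the theorem, that $\tilde f$ is a trigonometric polynomial of degree $2m-2$ with Fourier coefficients $\hat{\tilde f}_k = (\hat b_k/\hat b_0)\hat f_k$, is immediate from the first convolution theorem (Claim \ref{clm:first_conv_thm}) applied to $\tilde f = \frac{1}{2\pi \hat b_0}(b * f)$, since $b$ has degree $2m-2$ by construction. The only genuinely delicate step is pinning down the numerical constant $9$; this is a matter of choosing the split point in the second-moment integral optimally rather than any conceptual difficulty, so I expect a clean computation to suffice.
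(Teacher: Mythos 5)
Your proposal is correct, and its quantitative core follows a genuinely different route from the paper. Both proofs begin identically: write $\tilde f - f$ as convolution of $f$ against the nonnegative normalized kernel $b/(2\pi\hat b_0)$ and use the Lipschitz bound to reduce to a first-absolute-moment estimate $\frac{\lambda}{2\pi\hat b_0}\int_{-\pi}^{\pi}|u|\,b(u)\,du$. The paper then bounds this first moment directly, using $2\sin(u/2)\le u\le \pi\sin(u/2)$ on $[0,\pi]$, a substitution, and the evaluation $\int_0^\infty \sin^4(v/2)v^{-3}\,dv=\tfrac{\ln 2}{4}$, together with the exact mass $\int_0^\pi b(u)\,du=\pi(\tfrac23 m^3+\tfrac13 m)$, yielding a constant of about $8.06$. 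You instead pass to the second moment via Cauchy--Schwarz (Jensen) against the probability density $b/(2\pi\hat b_0)$, bound the kernel pointwise by $\min$ of the near-zero bound (via $|\sin(mu/2)/\sin(u/2)|\le m$, or your $m^4|u|^4/16$ variant) and the tail bound $\pi^4/u^4$ (via $|\sin(u/2)|\ge |u|/\pi$), and split the integral; with your figures ($2\pi\hat b_0\ge \tfrac{4\pi m^3}{3}$ from \eqref{bump_coeffs} at $k=0$, and $\int_{-\pi}^{\pi}u^2b(u)\,du\le \tfrac{4\pi^4 m}{3}$ splitting at $2/m$) this gives $\|\tilde f-f\|_\infty\le \pi^{3/2}\lambda/m\approx 5.57\,\lambda/m$, comfortably below $9\lambda/m$, and with the sharper $m^4$ bound even $\sqrt{2}\pi\,\lambda/m$. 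What your route buys is elementary integrals only (no special integral evaluation) and, incidentally, a better constant than the paper's; it also supplies, via the explicit formula for $\hat b_0$, the justification for the mass identity that the paper merely asserts with ``we can check.'' Cauchy--Schwarz loses nothing in the order of magnitude here since the normalized kernel's first moment and the square root of its second moment are both $\Theta(1/m)$. Your treatment of the final sentence (degree $2m-2$ and $\hat{\tilde f}_k=(\hat b_k/\hat b_0)\hat f_k$ via Claim \ref{clm:first_conv_thm}) coincides with the paper's. The only caveat is presentational: you defer the constant-tracking, but as checked above the computation you outline does close, so no genuine gap remains.
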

\textbf{Remark.} The function $\tilde{f}$ takes the form of a \emph{damped} truncation of $f$'s Fourier series: $\frac{\hat{b}_0}{\hat{b}_0} = 1$ and $\frac{\hat{b}_k}{\hat{b}_0}$ falls off towards zero as $k \rightarrow 2m-2$. After $2m-2$, the Fourier series coefficients from $f$ are fully truncated to 0. 
\begin{proof}
	Recalling that $\hat{b}_0 = \frac{1}{2\pi}\int_{-\pi}^\pi b(x)dx$, we have that $\int_{-\pi}^\pi \frac{1}{2\pi\hat{b}_0}b(u)du = 1$, and thus
	\begin{align*}
		|\tilde{f}(x) - f(x)| \leq \int_{-\pi}^\pi \frac{1}{2\pi\hat{b}_0}b(u)\cdot|f(x) - f(x-u)|du.
	\end{align*}
	By our Lipschitz assumption of $f$, we can bound $|f(x) - f(x-u)|\leq \lambda|u|$ and thus have:
	\begin{align}
		\label{eq:ratio_to_bound}
		\max_{x}|\tilde{f}(x) - f(x)| = \|\tilde{f} - f\|_{\infty} \leq \lambda\cdot\frac{\int_{-\pi}^\pi |u|b(u)du}{2\pi\hat{b}_0} = \lambda\cdot\frac{\int_{0}^\pi u b(u)du}{\int_{0}^\pi b(u) du}.
	\end{align}
In the last equality, we use that $b$ is symmetric about zero.
We have that $2\cdot\sin\left(\frac{u}{2}\right) \leq u \leq \pi\cdot\sin\left(\frac{u}{2}\right)$ for $x\in [0,\pi]$ and thus:
	\begin{align*}
	\int_{0}^\pi u b(u) du \leq \pi^4\int_{0}^\pi u\frac{\sin(mu/2)^4}{u^4} du =   \pi^4m^2\int_{0}^{\pi m} \frac{\sin(v/2)^4}{v^3} dv \leq \pi^4m^2\int_{0}^{\infty} \frac{\sin(v/2)^4}{v^3} dv.
\end{align*}
The last integral evaluates of $\frac{\ln 2}{4}$, so overall we have $\int_{0}^\pi b(u)\cdot u du \leq \frac{\pi^4\ln 2}{4} \cdot m^2$. Moreover we can check that:
\begin{align*}
	\int_{0}^\pi b(u) du = \pi\cdot(\frac{2}{3}m^3 + \frac{1}{3}m) \geq \frac{2\pi}{3}m^3. 
\end{align*}
Plugging into \eqref{eq:ratio_to_bound} we have that:
\begin{align*}
	\|\tilde{f} - f\|_{\infty} \leq 8.06\frac{\lambda}{m}.
\end{align*}
The result follows. We note that the constant above is loose: numerical results suggest the bound can be improved to $\leq \frac{\pi}{2}\frac{\lambda}{m}$. 
\end{proof}

Theorem \ref{thm:jacksons_theorem} translates to a result for \emph{algebraic polynomials} via a standard transformation between Fourier series and Chebyshev series, which we detail below.

\subsection{Jackson's Theorem for Algebraic Polynomials}

\begin{theorem}
	\label{thm:jacksons_theorem_algebraic}
	Let $f\in \FcOne$ be a Lipschitz continuous function on $[-1,1]$ with Lipschitz constant $\lambda$. I.e., $|f(x) - f(y)|\leq \lambda |x-y|$ for all $x,y$. For integer $m$, let $\hat{b}_0,\ldots, \hat{b}_{2m-2}$ be the coefficients from \eqref{bump_coeffs}. Let $c_k =  \iprod{f}{w\cdot \bar{T}_k}$ be the $k^\text{th}$ coefficient in $f$'s Chebyshev polynomial expansion, where $w$ and $\bar{T}_k$ are as defined in Section \ref{sec:Preliminaries}.
	The degree $(2m-2)$ algebraic polynomial 
	\begin{align*}
		\tilde{f}(x) = \sum_{n=0}^{2m-2} \frac{\hat{b}_k}{\hat{b}_0} c_k \cdot \bar{T}_k(x)
	\end{align*}
	satisfies $\|\tilde{f} - f\|_{\infty} \leq 9\frac{\lambda}{m}$. 
\end{theorem}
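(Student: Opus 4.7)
The plan is to reduce the algebraic statement to the already-proven trigonometric version (Theorem \ref{thm:jacksons_theorem}) via the standard substitution $x = \cos\theta$, which maps Chebyshev polynomials to cosines. Specifically, given $f \in \FcOne$ that is $\lambda$-Lipschitz on $[-1,1]$, I define $g(\theta) \eqdef f(\cos\theta)$, which is a $2\pi$-periodic \emph{even} function on $\R$. Since $|\cos\theta_1 - \cos\theta_2| \leq |\theta_1 - \theta_2|$, the function $g$ is also $\lambda$-Lipschitz. Applying Theorem \ref{thm:jacksons_theorem} to $g$ with parameter $m$ then yields a trigonometric polynomial $\tilde{g}$ of degree $2m-2$ with $\|\tilde{g}-g\|_\infty \leq 9\lambda/m$, whose exponential-form Fourier coefficients are the damped truncations $\hat{\tilde{g}}_k = (\hat{b}_k/\hat{b}_0)\hat{g}_k$ for $|k|\leq 2m-2$.

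The next step is to translate the Fourier expansion of $\tilde g$ back into a Chebyshev expansion in $x = \cos\theta$. Because $g$ is even, all sine Fourier coefficients vanish, so $\tilde g(\theta) = \tfrac{\alpha_0}{2} + \sum_{k=1}^{2m-2}(\hat{b}_k/\hat{b}_0)\alpha_k\cos(k\theta)$ where $\alpha_k = \tfrac{2}{\pi}\int_0^\pi g(\theta)\cos(k\theta)\,d\theta$. The substitution $x=\cos\theta$ identifies $\cos(k\theta) = T_k(x)$ and shows
\begin{align*}
c_k \;=\; \iprod{f}{w\cdot\bar{T}_k} \;=\; \int_0^\pi g(\theta)\,\bar{T}_k(\cos\theta)\,d\theta,
\end{align*}
from which a short calculation gives $c_0 = (\sqrt{\pi}/2)\alpha_0$ and $c_k = \sqrt{\pi/2}\,\alpha_k$ for $k\geq 1$; equivalently, $\alpha_0\cos(0\cdot\theta)/2 = c_0\bar{T}_0(x)$ and $\alpha_k\cos(k\theta) = c_k\bar{T}_k(x)$ for $k\geq 1$. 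Substituting these identities termwise into the expansion of $\tilde g$, and noting $\hat{b}_0/\hat{b}_0 = 1$, recovers exactly
\begin{align*}
\tilde g(\arccos x) \;=\; \sum_{k=0}^{2m-2}\frac{\hat{b}_k}{\hat{b}_0}\,c_k\,\bar{T}_k(x) \;=\; \tilde f(x).
\end{align*}

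Finally, the uniform error transfers directly: the map $\theta \mapsto \cos\theta$ sends $[0,\pi]$ onto $[-1,1]$, so
\begin{align*}
\|\tilde f - f\|_\infty \;=\; \max_{x\in[-1,1]}|\tilde f(x)-f(x)| \;=\; \max_{\theta\in[0,\pi]}|\tilde g(\theta)-g(\theta)| \;\leq\; \|\tilde g - g\|_\infty \;\leq\; \frac{9\lambda}{m},
\end{align*}
which is the claimed bound. The one place that demands care is the bookkeeping in paragraph two — getting the normalization constants right between $T_k$ and $\bar{T}_k$, and between the $\cos/\sin$ Fourier coefficients and the exponential-form coefficients used in Theorem \ref{thm:jacksons_theorem} — but this is purely mechanical once one fixes the substitution $x=\cos\theta$ and the identity $w(x)\,dx = -d\theta$. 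No additional analytic work beyond what Theorem \ref{thm:jacksons_theorem} already provides is required.
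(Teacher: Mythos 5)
Your proposal is correct and follows essentially the same route as the paper: apply the trigonometric Jackson theorem to the even, $2\pi$-periodic, $\lambda$-Lipschitz function $f(\cos\theta)$, match its damped cosine coefficients with the damped Chebyshev coefficients via $T_k(\cos\theta)=\cos(k\theta)$ (your normalization constants $c_0 = (\sqrt{\pi}/2)\alpha_0$, $c_k=\sqrt{\pi/2}\,\alpha_k$ check out), and transfer the uniform bound through the surjection $\theta\mapsto\cos\theta$ from $[0,\pi]$ onto $[-1,1]$. The only cosmetic difference is that you work directly with the even function $f(\cos\theta)$ on $\R$ and explicitly verify the Lipschitz transfer, whereas the paper builds the even extension by hand and leaves that verification implicit.
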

\begin{proof}
	To translate from the trigonometric case to the algebraic setting, we will use the identity that for all $k$,
	\begin{align}
		\label{eq:cosine_relation}
		T_k(\cos\theta) = \cos(k\theta).
	\end{align}
Consider any function $r\in \FcOne$ with Chebyshev expansion coefficients $c_0, c_1, \ldots$, where $c_k =  \iprod{r}{w\cdot \bar{T}_k}$. Transform $r$ into a periodic function as follows: let $g(\theta) = r(\cos\theta)$ for $\theta \in [-\pi,0]$ and let $h(\theta) = g(-|\theta|)$ for $\theta \in [-\pi,\pi]$. The function $h(\theta)$ is periodic, and also even, so its Fourier series has all coefficients $\beta_1, \beta_2, \ldots$ equal to 0. We thus have that
\begin{align*}
h(\theta) = \sum_{n=0}^\infty \alpha_k\cos(n\theta),
\end{align*}
where
\begin{align*}
\alpha_0 =  \frac{1}{2\pi}\int_{-\pi}^\pi h(\theta)\cos(k\theta)d\theta = \frac{1}{\pi}\int_{-\pi}^0 g(\theta)\cos(k\theta)d\theta
\end{align*}
and, for $n > 0$,
\begin{align*}
\alpha_k =  \frac{1}{\pi}\int_{-\pi}^\pi h(\theta)\cos(k\theta)d\theta = \frac{2}{\pi}\int_{-\pi}^0 g(\theta)\cos(k\theta)d\theta.
\end{align*} 
Using \eqref{eq:cosine_relation} and the fact that $\frac{d}{dx}\cos^{-1}(x) = \frac{1}{\sqrt{1-x^2}}$, we have: 
\begin{align*}
\int_{-\pi}^0 g(\theta)\cos(k\theta)d\theta = \int_{-1}^1 r(x)T_k(x)\frac{1}{\sqrt{1-x^2}}dx.
\end{align*}
We conclude that the Chebyshev coefficients of $r$ are precisely a scaling of the Fourier coefficients of $h$. Specifically, since $\bar{T}_0 = \sqrt{\frac{2}{\pi}}T_0$ and $\bar{T}_k = \sqrt{\frac{1}{\pi}}T_k$, we have:
\begin{align}
	\label{eq:translate}
	\sqrt{\frac{2}{\pi}} c_0 &=  \alpha_0,  & \sqrt{\frac{1}{\pi}} c_k &=  \alpha_k \text{ for } k > 0.
\end{align}
With this fact in hand, Theorem \ref{thm:jacksons_theorem_algebraic} follows almost immediately from  Theorem \ref{thm:jacksons_theorem}. Specifically, given $f \in \FcOne$ with Chebyshev series coefficients $c_0, c_1, \ldots$, we let $g(\theta) = f(\cos\theta)$ and $h(\theta) = g(-|\theta|)$. Let $\alpha_0, \alpha_1, \ldots$ denote $h$'s non-zero Fourier coefficients. Then, let $\tilde{h}$ be the approximation to $h$ given by Theorem \ref{thm:jacksons_theorem}. $\tilde{h}$ is a $2m-2$ degree trigonometric polynomial and is even since $h$ is even and the bump function $b$ is symmetric. Denote $\tilde{h}$'s non-zero Fourier coefficients by $\tilde{\alpha}_0, \ldots, \tilde{\alpha}_{2m-2}$. We have that $\tilde{\alpha}_k = \frac{\hat{b}_k}{\hat{b}_0}\alpha_k$ for $0\leq k \leq 2m-2$. Finally, let $\tilde{f}\in \FcOne$ be defined by $\tilde{f}(\cos(\theta)) = h(-\theta)$. By \eqref{eq:translate}, we have that $\tilde{f}$ is a degree $2m-2$ polynomial and its Chebyshev series coefficients $\tilde{c}_0, \ldots, \tilde{c}_{2m-2}$ are exactly equal to $\frac{\hat{b}_k}{\hat{b}_0}c_k$. 

Moreover, we have $\|f - \tilde{f}\|_\infty = \max_{x\in[-1,1]}|f(x) - \tilde{f}(x)| = \max_{x} |h(x) - \tilde{h}(x)|$. By Theorem \ref{thm:jacksons_theorem} we have $\max_{x} |h(x) - \tilde{h}(x)| < 9\frac{\lambda}{m}$, so we conclude that $\|f - \tilde{f}\|_\infty < 9\frac{\lambda}{m}$.  
\end{proof}

In addition to the main result of Theorem \ref{thm:jacksons_theorem_algebraic}, our SDE algorithm alsos require an additional property of the damped Chebyshev approximation $\tilde{f}$:

\begin{lemma}
	\label{lem:non-negativity}
	For any non-negative function $f\in \FcOne$ (not necessarily Lipschitz), let $\tilde{f}$ be as in Theorem \ref{thm:jacksons_theorem_algebraic}. We have that $\tilde{f}$ is also non-negative on $[-1,1]$.
\end{lemma}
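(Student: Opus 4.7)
The plan is to transfer the statement from the algebraic setting to the trigonometric setting, where non-negativity follows transparently from the fact that Jackson's kernel $b(u)=\bigl(\sin(mu/2)/\sin(u/2)\bigr)^4$ is a fourth power and therefore pointwise non-negative. This is exactly the same change of variables used in the proof of Theorem~\ref{thm:jacksons_theorem_algebraic}.

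First I would push $f$ over to the circle. Set $h(\theta) = f(\cos|\theta|)$ for $\theta\in[-\pi,\pi]$, extended periodically. Since $f\geq 0$ on $[-1,1]$, we have $h\geq 0$ everywhere, and as shown in the proof of Theorem~\ref{thm:jacksons_theorem_algebraic}, the even Fourier coefficients $\alpha_0,\alpha_1,\ldots$ of $h$ are the scalings \eqref{eq:translate} of the Chebyshev coefficients $c_0,c_1,\ldots$ of $f$. Applying the damping $\hat{b}_k/\hat{b}_0$ on the Chebyshev side corresponds exactly to applying the same damping on the Fourier side, so the approximation
\[
\tilde{h}(\theta) \;=\; \sum_{k=0}^{2m-2} \frac{\hat{b}_k}{\hat{b}_0}\,\alpha_k\cos(k\theta)
\]
and the polynomial $\tilde{f}$ from Theorem~\ref{thm:jacksons_theorem_algebraic} satisfy $\tilde{f}(\cos\theta)=\tilde{h}(\theta)$ for all $\theta$.

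Next I would invoke the convolution interpretation of the Jackson construction (Theorem~\ref{thm:jacksons_theorem}): the trigonometric polynomial with Fourier coefficients $\frac{\hat{b}_k}{\hat{b}_0}\alpha_k$ is precisely the convolution
\[
\tilde{h}(\theta) \;=\; \frac{1}{2\pi\hat{b}_0}\int_{-\pi}^{\pi} b(u)\,h(\theta-u)\,du.
\]
Now the integrand is a product of non-negative quantities: $b(u)=\bigl(\sin(mu/2)/\sin(u/2)\bigr)^4\geq 0$ as the fourth power of a real number, $\hat{b}_0=\tfrac{1}{2\pi}\int b > 0$, and $h\geq 0$ by construction. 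Therefore $\tilde{h}(\theta)\geq 0$ for every $\theta$.

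Finally, every $x\in[-1,1]$ can be written as $x=\cos\theta$ for some $\theta$, so $\tilde{f}(x)=\tilde{h}(\theta)\geq 0$, proving the claim. There is no real obstacle here: the whole content of the argument is that the damping coefficients $\hat{b}_k/\hat{b}_0$ were cooked up \emph{precisely} so that the corresponding periodic kernel is a perfect square (in fact a fourth power), which makes convolution against it a positivity-preserving operation; the change of variables $x=\cos\theta$ then transports this positivity back to $[-1,1]$.
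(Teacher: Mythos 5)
Your proof is correct and follows essentially the same route as the paper: transfer $f$ to the periodic function $h(\theta)=f(\cos|\theta|)$, observe via the coefficient correspondence that $\tilde{f}(\cos\theta)=\tilde{h}(\theta)$, and conclude non-negativity because $\tilde{h}$ is the (normalized) convolution of the non-negative $h$ with the non-negative Jackson kernel $b(u)=\bigl(\sin(mu/2)/\sin(u/2)\bigr)^4$. You merely spell out the convolution formula and the positivity of $b$ more explicitly than the paper does, which is fine.
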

\begin{proof}
	Let $h(\theta)$ and $\tilde{h}(\theta)$ be the $2\pi$ perioduc functions as in the proof of Theorem \ref{thm:jacksons_theorem_algebraic}. I.e., $h(\theta) = g(-|\theta|)$ where $g(\theta) = f(\cos\theta)$ and $\tilde{h}$ is the truncated, Jackson-damped approximation to $h$ from  Theorem \ref{thm:jacksons_theorem}. If $f$ is non-negative, then so is $h$, and since $\tilde{h}$ is the convolution of $h$ with a non-negative function, it is non-negative as well. Finally, since $\tilde{f}(\cos(\theta)) = h(-\theta)$, we conclude that $\tilde{f}(x) \geq 0$ for $x\in [-1,1]$.  
\end{proof}

\section{Derivation of Fact \ref{eqn:chebyshev_integral}}\label{appendix:chebyshev_integral}
Let $x = \sin(u)$ then we have that $dx = \cos(u)du$. Substituting the change of variable in the integral and noting the fact that $T_k(\cos \theta ) = \cos (k \theta)$ for $\theta \in [-\pi, \pi]$ gives us that
\begin{align*}
\int_{a}^b \frac{T_k(x)}{\sqrt{1 - x^2}}dx &= \int_{\sin^{-1} a}^{\sin^{-1} b} \frac{\cos(k\cos^{-1}\sin(u))}{\sqrt{1 - \sin^2(u)}} \cos(u)du 
= \int_{\sin^{-1} a}^{\sin^{-1} b} \cos (k(\pi/2 - u))du \\ 
&= \frac{-\sin(k(\pi/2 - u))}{k} \bigg\vert_{\sin^{-1} a}^{\sin^{-1} b} = \frac{-\cos(ku)}{k} \bigg\rvert_{\sin^{-1} a}^{\sin^{-1} b}
\end{align*}
where we used the fact that $\cos^2(u) + \sin^2(u) = 1$ and $\int \cos(u)du = \sin(u) + c$.

\section{Proof of Fact \ref{fact:moment_magnitude_bound}}\label{appx:moment_magnitude_proof}
\begin{proof}
We start by doing a change of variables; set $x = \cos\theta$ and note that $dx = -\sin \theta d\theta$. Substituting this into the expression for $\iprod{f}{w \cdot \Tbar_k}$ and noting that $T_k(\cos \theta) = \cos k\theta$ gives us that 
\begin{align*}
\sqrt{\frac{2}{\pi}}\int_{-1}^1 f(x)\frac{T_k(x)}{\sqrt{1 - x^2}} dx &= \sqrt{\frac{2}{\pi}}\int_{-\pi}^0 -f(\cos \theta)(\cos k\theta) d \theta
\end{align*}
since $\sqrt{1 - \cos^2 \theta} = \sin \theta$ and $dx = -\sin \theta d\theta$. Integrating by parts and noting that $(f(\cos \theta)\int -\cos k\theta d\theta)\vert_{-\pi}^0 = -f(\cos \theta)\frac{\sin k \theta}{k} |_{-\pi}^0 = 0$ gives us that 
\begin{align*}
    \iprod{f}{w \cdot \Tbar_k} &= \sqrt{\frac{2}{\pi}} \int_{-\pi}^0 \frac{\sin k \theta}{k} \ df(\cos \theta).
\end{align*}
We use the definition of the Riemann-Stieltjes integral and let $M  \in \N^+$ be a parameter and $\mathcal{P}_M = \{-\pi = x_0 \leq \dots \leq x_M = 0\}$ be the set of all $M$ intervals partitioning the interval $[-\pi, 0]$. Then for a partition $P \in \mathcal{P}_M$ we denote $\text{norm}(P)$ to be the length of its longest sub-interval. The Riemann-Stieltjes integral $\int_{-\pi}^0 \sin(k\theta)\ df(\cos \theta)$ can be written as
\begin{align*}
    \int_{-\pi}^0 {\sin k \theta} \ df(\cos \theta) &= \lim_{\epsilon \to 0} \sup_{\substack{M, \ P \in \mathcal{P}_M \\ \text{s.t.} \text{norm}(P) \leq \epsilon}} \sum_{i =0}^{m-1} (f(\cos x_{i+1}) - f(\cos x_i) )  \sin kx_{i}.
\end{align*}
Since $f(x) \in \text{lip}_1$ and $|\sin k \theta | \leq 1$ we can bound the magnitude of the above summation as
\begin{align*}
    \left \vert\sum_{i=0}^{m-1} (f(\cos x_{i+1}) - f(\cos x_i) )  \sin kx_{i} \right \vert \leq \sum_{i=0}^{m-1} \lambda|\cos x_{i+1} - \cos x_i| \leq 2. 
\end{align*}
The last inequality follows from the fact that $\cos(\theta)$ is $1$-Lipschitz. Putting these bounds together gives us that $|\iprod{f}{w \cdot \Tbar_k}| \leq 2\lambda/k$.
\end{proof}

\end{document}